\newcommand\vldbdoi{10.14778/3725688.3725707}
\newcommand\vldbpages{1798 - 1810}
\newcommand\vldbvolume{18}
\newcommand\vldbissue{6}
\newcommand\vldbyear{2025}
\newcommand\vldbtitle{\shorttitle} 
\newcommand\vldbavailabilityurl{https://gitlab.com/alexandra-rogova/neo4j_increasing_value_test}
\newcommand\vldbpagestyle{empty}
\definecolor{gray}{RGB}{26,30,35}
\definecolor{darkblue}{RGB}{68,89,197}
\definecolor{darkred}{rgb}{0.45,0,0}
\definecolor{darkgreen}{RGB}{0, 156, 0}
\definecolor{darkpurple}{RGB}{120, 0, 180}
\newcommand{\OMIT}[1]{}
\newcommand{\mcomment}[2]{{\color{blue}\textbf{(#1)}}\footnote{\textbf{#1:} #2}}
\newcommand{\liat}[1]{\mcomment{Liat}{#1}} 
\newcommand{\alex}[1]{\mcomment{Alexandra}{#1}}
\newcommand{\todo}[1]{{\color{blue}{\textbf{TODO:} #1}}}
\newcommand{\new}[1]{{\color{black}#1}}
\renewcommand{\phi}{\varphi}   
\newcommand{\pat}{\psi}
\newcommand{\df}{:=}
\newcommand{\query}{\mathcal Q}
\newcommand{\linquery}{\mathbf L}
\newcommand{\db}{\mathcal{D}}
\newcommand{\gdb}{G}
\newcommand{\Univ}{\mathbf U}
\newcommand{\tup}{\mu}
\newcommand{\domfunc}[1]{\mathsf{dom}(#1)}
\newcommand{\join}{\bowtie}
\newcommand{\tupunion}{\join}
\newcommand{\arity}[1]{\mathsf{arity}(#1)}
\newcommand{\attr}[1]{\mathsf{attr}(#1)}
\newcommand{\rel}{\mathbf R}
\newcommand{\dom}[1]{\mathsf{dom}(#1)}
\newcommand{\true}{\textbf{true}}
\newcommand{\lbl}{\mathsf{lab}}
\newcommand{\src}{\mathsf{src}}
\newcommand{\tgt}{\mathsf{tgt}}
\def\endpoints{\textsf{endpoints}}
\newcommand{\prop}{\mathsf{prop}}
\newcommand{\concat}{\cdot}
\newcommand{\sch}[1]{\mathsf{FV}\left(#1\right)}
\newcommand{\schb}[1]{\mathsf{FV}\Big(#1\Big)}
\newcommand{\Valueset}{\mathsf{Values}}
\newcommand{\Constset}{\mathsf{Const}}
\newcommand{\Nodeset}{\mathsf{N}}
\newcommand{\Edgeset}{\mathsf{E}}
\newcommand{\Set}[1]{\left\{#1\right\}} 
\newcommand{\labelset}{\ensuremath{\mathcal{L}}}
\newcommand{\keyset}{\ensuremath{\mathcal{K}}}
\newcommand{\constset}{\ensuremath{\textsf{Const}}}
\newcommand{\pathval}{\textsf{path}}
\newcommand{\PP}{\ensuremath{\mathsf{Paths}}}
\newcommand{\Paths}{\mathsf{Paths}}
\newcommand{\pathlen}{\textsf{len}}
\newcommand{\Vars}{\mathsf{Vars}}
\newcommand{\len}{\mathsf{len}}
\newcommand{\lang}[1]{\mathcal L(#1)}
\newcommand{\sem}[1]{\left\llbracket#1\right\rrbracket}
\newcommand{\op}{\circ}
\newtheorem{theorem}{Theorem}[section]
\newtheorem{lemma}[theorem]{Lemma}
\theoremstyle{definition}
\newtheorem{definition}{Definition}[section]
\newcommand{\multipat}{\rho}
\newcommand{\globalsch}{\mathcal S}
\newcommand{\localsch}{\mathbf S}
\newcommand{\RA}{\mathsf{RA}}
\newcommand{\LRA}{\mathsf{LCRA}}
\newcommand{\sLRA}{\mathsf{sLCRA}}
\newcommand{\globalschpat}{\mathsf{Pat}}
\newcommand{\first}[1]{\mathsf{first}(#1)}
\newcommand{\last}[1]{\mathsf{last}(#1)}
\newcommand{\qdisjnodes}{\ensuremath{Q^{\Nodeset}_{\neq}}}
\newcommand{\qdisjedges}{\ensuremath{Q^{\Edgeset}_{\neq}}}
\newcommand{\expressible}{{\color{ForestGreen}\checkmark}}
\newcommand{\notexpressible}{{\color{red}\bigtimes}}
\newcommand{\qNodeLessThan}{\ensuremath{Q^{\Nodeset}_{\uparrow}}}
\newcommand{\qEdgeLessThan}{\ensuremath{Q^{\Edgeset}_{\uparrow}}}
\newcommand{\ua}{\uparrow}
\newcommand{\da}{\downarrow}
\newcommand{\qNode}{\ensuremath{Q^{\Nodeset}}}
\newcommand{\qEdge}{\ensuremath{Q^{\Edgeset}}}
\newcommand{\edb}{\mathcal E}
\newcommand{\idb}{\mathcal I}
\newcommand{\dlrules}{\Phi}
\newcommand{\dlrule}{\phi}
\newcommand{\dloutrule}{\mathtt{Out}}
 \newenvironment{repeatresult}[2]
 {\vskip0.5em\par\textsc{#1 #2.}\em}
 {\vskip1em}
\newcommand{\return}{\Omega} 
\newcommand{\returnEl}{\omega}
\newcommand{\concatto}{\odot}
\newcommand{\ttpcr}{\renewcommand{\ttdefault}{pcr}\ttfamily}
\newcommand{\sqlkw}[1]{%
	\text{\normalfont\small\ttpcr\bfseries\color{darkblue} #1}}
\bfseries\color{blue},%
\small\color{gray}\ttpcr,%
\bfseries\color{darkblue},%
\small\color{gray}\ttpcr,%
\bfseries\color{darkblue},%
\newcommand{\nlog}{\ensuremath{\textsc{NLogSpace}}}
\newcommand{\dlog}{\ensuremath{\textsc{DLogSpace}}}
\newcommand{\currentsidemargin}{%
  \ifodd\value{page}%
    \oddsidemargin%
  \else%
    \evensidemargin%
  \fi%
}
\newlength{\whatsleft}
\title{GQL and SQL/PGQ: Theoretical Models and Expressive Power}
\author{Am\'elie Gheerbrant}
\affiliation{%
  \institution{Universit\'e Paris Cit\'e, CNRS, IRIF}
  \city{Paris}
  \country{France}}
\email{amelie@irif.fr}
\author{Leonid Libkin}
\affiliation{%
  \institution{RelationalAI \& University of Edinburgh}
  \city{Paris \& Edinburgh}
  \country{UK \& France}}
  \email{l@libk.in}
\author{Liat Peterfreund}
\affiliation{%
  \institution{School of CS, Hebrew University}
  \city{Jerusalem}
  \country{Israel}}
\email{liat.peterfreund@mail.huji.ac.il}
\author{Alexandra Rogova}
\affiliation{%
  \institution{Université Paris Cité, CNRS, IRIF}
  \city{Paris}
  \country{France}}
\email{rogova@irif.fr}
\date{}
\begin{document}

\begin{abstract}
SQL/PGQ and GQL are very recent international standards for querying property graphs: SQL/PGQ specifies how to query relational representations of property graphs in SQL,
while GQL is a standalone language for graph databases. The rapid industrial development of these standards left the academic community trailing in its wake. While digests of the languages have appeared, we do not yet have concise foundational models like relational algebra and calculus for relational databases that enable the formal study of languages, including 
their expressiveness and limitations.
At the same time, work on the next versions of the standards has already begun, to address the perceived limitations of their first versions.

Motivated by this, 
we initiate a formal study of SQL/PGQ and GQL, concentrating on their concise formal model and expressiveness. For the former, we define simple core languages -- Core PGQ and Core GQL -- that capture the essence of the new standards,
are amenable to theoretical analysis, and clarify the difference
between PGQ's bottom up evaluation versus GQL's linear, or pipelined
approach. Equipped with these models, we both confirm the necessity to extend the language to fill in the expressiveness gaps and identify the source of these deficiencies. We complement our theoretical analysis with an experimental study, demonstrating that existing workarounds in full GQL and PGQ are impractical, further underscoring the necessity to correct deficiencies in language design.
\end{abstract}

\keywords{Graph databases, GQL, SQL/PGQ, Cypher, pattern matching,
  expressive power, language design}

\maketitle

\pagestyle{\vldbpagestyle}
\begingroup\small\noindent\raggedright\textbf{PVLDB Reference Format:}\\
Amelie Gheerbrant, Leonid Libkin, Liat Peterfreund, and Alexandra Rogova. 
\vldbtitle. PVLDB, \vldbvolume(\vldbissue): \vldbpages, \vldbyear.\\
\href{https://doi.org/\vldbdoi}{doi:\vldbdoi}
\endgroup
\begingroup
\renewcommand\thefootnote{}\footnote{\noindent
This work is licensed under the Creative Commons BY-NC-ND 4.0 International License. Visit \url{https://creativecommons.org/licenses/by-nc-nd/4.0/} to view a copy of this license. For any use beyond those covered by this license, obtain permission by emailing \href{mailto:info@vldb.org}{info@vldb.org}. Copyright is held by the owner/author(s). Publication rights licensed to the VLDB Endowment. \\
\raggedright Proceedings of the VLDB Endowment, Vol. \vldbvolume, No. \vldbissue\ %
ISSN 2150-8097. \\
\href{https://doi.org/\vldbdoi}{doi:\vldbdoi} \\
}\addtocounter{footnote}{-1}\endgroup

\ifdefempty{\vldbavailabilityurl}{}{
\vspace{.3cm}
\begingroup\small\noindent\raggedright\textbf{PVLDB Artifact Availability:}\\
The source code, data, and/or other artifacts have been made available at \url{\vldbavailabilityurl}.
\endgroup
}

\section{Introduction}
\newcommand{\graphfig}{\begin{figure*}
    \centering
    \includegraphics[width=0.7\textwidth]{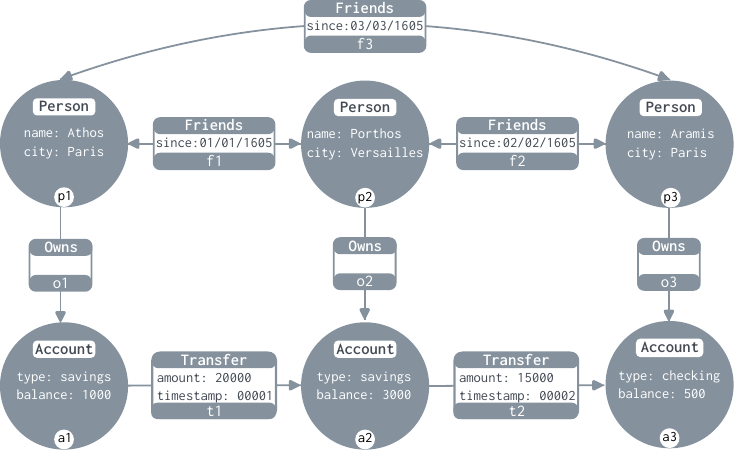}
\caption{A labeled property graph}
\label{fig:example-graph}
\end{figure*}}

In the past two years, ISO published two new international
standards. SQL/PGQ, released in 2023, as Part 16 of the SQL standard, provides
a mechanism for representing and querying property graphs in
relational databases using SQL. GQL, released in 2024, is a standalone
graph query language that does not rely on a relational
representation of graphs. SQL/PGQ and GQL are developed in the same
ISO committee that has been maintaining and enhancing the SQL standard
for decades. These developments reflect the interest of
relational vendors in implementing graph extensions and the emergence
of a new native graph database industry.

There is however a notable difference between this standardization
effort and that of SQL as it happened in the 1980s. Before SQL was
designed, strong theoretical and practical foundations of relational
databases had been developed. Equivalence of relational algebra and
calculus (first-order logic) had been known; these provided very clean
abstractions of relational languages that served as the basis of
relational database theory. This has had a profound impact on both the
theory and the practice of database systems.
\new{For example,
a question of central importance in the early days of relational
databases was expressiveness of query languages, with Fagin, Aho,
Ullman, Gaifman, and Vardi independently showing that relational
calculus cannot define the transitive closure of a relation
\cite{GaifmanVardi85,Fagin75,AhoUllman79}. This led  to
subsequent development of Datalog and culminated in the inclusion of linear
recursive queries by the SQL standard in 1999 \cite{Melton-SQL99}.
}

When it comes to graphs,
the adoption of languages
by industry runs well ahead of the development of their foundational
underpinnings. For many years, work on graph queries concentrated on
regular path queries (RPQs) \cite{RPQ} and their multiple
extensions (e.g.,  \cite{crpq,C2RPQ,BarceloLLW-tods12,B13,surveyChile}). 
\new{These however assume a much simplified model in which neither nodes nor edges have properties, unlike {\em property graphs} preferred in industry. A step in that direction is the model of {\em data graphs} in which nodes but not edges carry data \cite{gxpath-jacm}. To this day, these models are the basis of work on query language expressiveness \cite{anthony-express,wim-trichotomy}. But  we will see soon that from the point of view of GQL and SQL/PGQ, a full property graph model makes a huge difference.  
Existing theoretical languages such as  GXPath \cite{gxpath-jacm}, 
regular queries \cite{RRV17},  STRUQL \cite{struql}, while having a direct influence on the design of GQL \cite{gql-influence}, are not a good reflection of it. 
}

\new{
At the same time, extensive discussions in the standards committees are already under way to
identify new features of SQL/PGQ and GQL, based on their {\em perceived}, rather than
proved, shortcomings \cite{tobias,fred}.  
These considerations motivate our main contributions: }
\begin{enumerate}
\item We provide concise formal models of SQL/PGQ and GQL;
\item We confirm the intuition that certain queries of interest to
customers of graph databases cannot be
expressed by SQL/PGQ and GQL patterns;
\item We compare GQL with recursive SQL and show that the latter is
  more powerful; and
\item \new{We show experimentally that methods currently allowed in SQL/PGQ and GQL to overcome these limitations are impractical, even for small-sized graphs.}
\end{enumerate}

\OMIT{Of course, when we say ``cannot be expressed'' or ``is more
powerful'', these refer to the theoretical models of the
languages, in line with similar results for SQL and relational
calculus/algebra.
}

We now elaborate more on these goals.

\paragraph{\underline{Formal models of GQL and SQL/PGQ}}
The workhorse of graph query languages is {\em pattern
matching}. In Cypher, PGQ, GQL and others,
pattern matching turns a graph into a table; the remaining operations
of the language manipulate that table.
In fact in GQL and SQL/PGQ pattern matching is {\em identical}; it is
only how its results are processed that is different.
In SQL/PGQ, a graph is a view defined on a relational
database. The result of pattern matching 
is simply a table in the \sqlkw{FROM} clause of a SQL query
that may use other relations in the database.
GQL, on the other hand, is oblivious to how a graph is stored, and
the table resulting from pattern matching is manipulated by a sequence
of operators that modify it, in an imperative style that is referred
to as ``linear composition''.

To produce a formal model of these languages, we use the relationship
between SQL and relational calculus/algebra as the guiding
principle. SQL has a multitude of features: bag semantics,
nulls, arithmetic operations, aggregate functions, outerjoins, complex
datatypes such as arrays.  Some of these break the theoretical model
of First Normal Form (1NF) relations which are {\em sets} of tuples
of {\em atomic values}.  Relational algebra and
calculus get rid of the extra baggage that SQL is forced to add to be
usable in practice, and yet provide a simple core over sets of tuples
of atomic values.

\sloppy
Although some mathematical abstractions of GQL and SQL/PGQ
exist \cite{icdt23,pods23}, they are not yet at the same level as RA
or relational calculus, in terms of their simplicity and utility in formally proving
results. \new{For example, their formalizations of pattern matching
  come with a complex typing system and the use of conditional and
  group variables that create nulls and set-valued attributes.  
To achieve our goal of creating a simple and usable abstraction, we need to:
simplify the model of pattern matching, to 
avoid outputting non-1NF relations, and 
to formalize the linear composition of Cypher and GQL. This allows us
to define two languages
-- Core PGQ and Core GQL -- as RA or linear composition of
  relational operators on top
of pattern matching outputs. 
}

\paragraph{\underline{Limitations of pattern matching}}

Many pattern matching tasks require iterating patterns, notably when
we do not know a priori the length of a path we are searching for (a
basic example is reachability in graphs). 
The way pattern matching is designed in Cypher, GQL, and SQL/PGQ,
makes it easy to compare two consecutive iterations of a pattern based
on property values of their nodes, but at the same time it is very
hard to compare property values in {\em edges}.

To illustrate this, assume we have a chain of transfers between
accounts: accounts are nodes, with property values such as ``balance'',
and transfers are edges, with property values such as
``amount'' and ``timestamp'', between accounts. It is very easy to write a query looking
for a chain of transfers such that the \textit{balance} increases in accounts
along the chain. It appears to be very hard or impossible to write a
query looking for a chain of transfers such that the \textit{timestamp}
increases along the path.

Demand for these queries occurs often in practice, which forced
several companies participating in GQL design in ISO to start thinking
of language extensions that will permit such
queries \cite{tobias,fred}. However, before making non-trivial
language enhancements, it would be good to actually know that newly
added features cannot be achieved with what is already in the language.

We show that this is indeed the case.
\new{Equipped with our formalization of the pattern language, we
  prove that it cannot express a large class of queries that analyze
  how property values of edges change along matched paths; of these the {\em
increasing value in edges} query is the simplest. This follows in fact
  from a more general property that is akin to a pumping lemma for
  paths that can be selected by GQL and PGQ patterns.
}


\OMIT{
Furthermore, we use our formalization to clear up an issue related to
Cypher pattern matching. It has been claimed multiple
times (e.g., \cite{cypher,graphdb-book,surveyChile}) that Cypher's
patterns fall short of the full power of RPQs. In fact this was a key
motivation for extending its pattern matching language \cite{cypher}
to that of GQL \cite{sigmod22}. However, this ``folklore'' result has
never been actually proven, not least due to the lack of a suitable
formalization of Cypher. We use our formalization of GQL to define its
fragment that corresponds to the (original design of)
Cypher \cite{cypher}, and then formally show that not all RPQs are
expressible in it. 
}

\paragraph{\underline{GQL vs Recursive SQL vs Datalog}}

When the
ubiquitous CRPQs was introduced in \cite{crpq}, it was shown that a Datalog-like language based on
CRPQs has the power of transitive closure logic \cite{FMT} and
captures the complexity class \nlog\ of problems solved by
nondeterministic Turing machines whose work tape used logarithmic
number of bits in terms of the size of the input (this class is
contained in polynomial time). Since then, \nlog\ is the yardstick
complexity class we compare graph languages to, and Datalog -- that
subsumes the transitive closure logic -- is a
typical language used to understand the power of graph querying.
{
Datalog is also the basis of 
SQL's recursive common table expressions (CTE) introduced
by the
\sqlkw{WITH RECURSIVE} clause. 
In fact, in SQL only {\em linear
Datalog} rules are allowed: in them, the recursively defined
predicate can be used at most once.

Motivated by this, we compare the power of GQL with  recursive SQL
and linear Datalog. 
We show that there are queries that are
expressible in the latter,  
have very low  data complexity, 
and yet are {\em not expressible} in GQL. We explain how
this points out deficiencies of GQL design that will hopefully be
addressed in the future. 

}

\OMIT{
On SQL's side, Datalog recursive rules are represented by recursive
CTEs, using the \sqlkw{WITH RECURSIVE} clause. 
In fact, in SQL only {\em linear
Datalog} rules are allowed: in such rules, the recursively defined
predicate can be used at most once. This class of queries is not only
the basis of recursive SQL but in general plays a special role in many
applications of recursive queries \cite{ketsman-koutris}. 

Motivated by this, we compare the power of GQL with (a) recursive SQL
(b) linear Datalog, and (c) complexity class \nlog. At first it looks
like GQL should be able to capture \nlog\ and linear Datalog (whose
data complexity is in \nlog\ \cite{GottlobP03}). Indeed, reachability (that is easily
expressible by GQL patterns) is complete for \nlog\ under first-order
reductions, and such reductions are expressible in RA.\liat{I'm trying to reformulate this. It is not clear enough to this audience}

Defying this intuition, we show that there are queries that are
expressible in recursive SQL and linear Datalog, have \dlog\ data
complexity, and yet are {\em not expressible} in GQL. We explain how
this points out some deficiencies of GQL design that will hopefully be
addressed in the future. 
}

\new{
\paragraph{\underline{Experimental evaluation}: can graph DBMSs
  overcome these limitations?}
Real-life systems go beyond basic calculi; SQL with recursion and
aggregates is in fact Turing-complete and thus can express all
computable queries. Similarly, GQL, PGQ, Cypher and others have many
tools at their disposal to let users write very powerful queries. In
fact queries such as {\em increasing value in edges} are expressible
in real-life Cypher and PGQ, though in a very convoluted way. Since
the {\em complement} of the query is easily expressible, one can look
for all the paths and then subtract the complement. Intuitively, such
a way should not work in practice as the number of paths in a graph
grows exponentially.

This is precisely what we confirm using three different
implementation of graph database queries: Neo4j and Memgraph for
Cypher, and DuckDB for SQL/PGQ. Native graph systems can handle just a
few dozen nodes before 100\% timeout rate is observed; DuckDB does
marginally better as it computes fewer paths to start with. 
It is important to note that this is not a critique of the systems tested, as the indirect method of computing the query inherently requires an exponential number of paths. Instead, this confirms that no workaround can bypass the inherent expressibility limitations.
}

\smallskip

Since inexpressibility results are commonly used to identify deficiencies in
language design, and serve as a motivation to increase language
expressiveness,
we shall at the end of the paper discuss what these results tell
us in terms of new features of GQL and SQL/PGQ that will be required.

\graphfig

\subsubsection*{Related work}
While industry is dominated by property graphs (Neo4j \cite{cypher}, Oracle \cite{PGQL}, Amazon \cite{Neptune}, TigerGraph \cite{tigergraph-sigmod},
etc), much of academic literature still works with the model of
labeled graphs and query languages based on RPQs, with some
exceptions \cite{gxpath-jacm,ABFF18,defensive-citation-one}. 
However, the rare models focused on property graphs appeared before the new standards became available, and their analyses of
expressiveness and language features do not apply to GQL and
SQL/PGQ. The first commercial language 
for property graphs was Cypher, and it was fully formalized
in \cite{cypher}. As GQL and SQL/PGQ were being developed, a few academic papers
appeared. For example, \cite{sigmod22} gave an overview of their pattern matching
facilities, which was then further analyzed
in \cite{pods23}, and \cite{duckpgq} provided a description of an early implementation of SQL/PGQ. 

In \cite{icdt23}, a digest of GQL suitable for the
research community was presented. While a huge improvement compared to
the actual standard from the point of view of clarity, the
presentation of \cite{icdt23} replaced 500 pages of the text of the
Standard (notoriously hard to read) by a one-page long definition of
the syntax, followed by a four-pages long definition of the
semantics. 
It achieved a two orders of magnitude reduction in the size of
the definition of the language,  
but 5 pages is still way too long for ``Definition 1''. 
\new{The language of \cite{pods23} is closer to our goal, but it is not well-suited to the level reasoning we require here as it still contains still too much of the GQL and PGQ baggage, for example non-1NF relations, and a complex type
system for singleton, conditional, and group variables that we avoid here entirely and replace by the very familiar definition of free variables.  }

\OMIT{The pattern
matching defined in \cite{pods23} also had too much of the GQL and PGQ
baggage, producing non-1NF relations, and it used a complex type
system for singleton, conditional, and group variables. That type
system is rather complex to describe and it owes to the design
decisions that had more to do with getting the new standards over the
line than producing a perfect language design. As such it was not really suitable for a formal analysis. }

Another element missing from the literature is a proper
investigation of linear composition. Initially introduced in Cypher,
it was then  adopted in a purely relational language PRQL \cite{prql}
(positioned as ``pipelined'' alternative to SQL), embraced by
GQL, and influenced the designed of the piped SQL syntax \cite{google-pipes}; however formal analyses of this way of building complex queries are
lacking.

\section{GQL and SQL/PGQ by examples}\label{sec:ex}

We illustrate GQL and SQL/PGQ capabilities using the graph from
Figure \ref{fig:example-graph} and the following money-laundering query: 
{\em Find a pair of friends in the same city who transfer money to each
	other via a common friend who lives elsewhere}. 
Notice that we assume that friendship is a symmetric relation.

In GQL, it will be expressed as query: 

\begin{gql}
	MATCH (x)-[:Friends]->(y)-[:Friends]->(z)
	-[:Friends]->(x), (x)-[:Owns]->(acc_x),
	(y)-[:Owns]->(acc_y), (z)-[:Owns]->(acc_z), 
	(acc_x)-[t1:Transfer]->(acc_z)
	-[t2:Transfer]->(acc_y)
	FILTER (y.city = x.city) AND (x.city<>z.city)
	AND (t2.amount < t1.amount)
	RETURN x.name AS name1, y.name AS name2
\end{gql}

In SQL/PGQ, a graph is a view of a tabular schema. The graph from
Figure \ref{fig:example-graph} can be represented by the following set of tables: 
\begin{itemize}
	\item \verb+Person(p_id,name,city)+ for people,
	\item \verb+Acc(a_id,type)+ for accounts,
	\item \verb+Friend(p_id1,p_id2,since)+  for friendships,
	\item \verb+Owns(a_id,p_id)+ for ownership and 
	\item \verb+Transfer(t_id,a_from,a_to,amount)+ for transfers.
\end{itemize}
The property graph view is then defined by a $\sqlkw{CREATE}$ statement, part of
which is shown below:

\begin{sql}
	CREATE PROPERTY GRAPH Interpol1625 (
	VERTEX TABLES 
	Acc KEY (a_id) LABEL Account PROPERTIES (type)
	....
	EDGE TABLES 
	Transfer KEY (t_id)
	SOURCE KEY (a_from) REFERENCES Acc 
	DESTINATION KEY (a_to) REFERENCES Acc
	LABEL Transfer PROPERTIES (amount)
	.... ) 
\end{sql}

This view defines nodes (vertices) and edges of the graph, specifies
endpoints of edges, and defines their labels and properties. 
We can then query it, using pattern matching to
create a subquery:

\begin{sql}
	SELECT T.name_x AS name1, T.name_y AS name2
	FROM Interpol1625 GRAPH_TABLE (
	MATCH (x)-[:Friends]->(y)-[:Friends]->(z)
	-[:Friends]->(x),  (x)-[:Owns]->(acc_x),
	(y)-[:Owns]->(acc_y), (z)-[:Owns]->(acc_z), 
	(acc_x)-[t1:Transfer]->(acc_z)
	-[t2:Transfer]->(acc_y)
	COLUMNS x.city AS city_x, y.city AS city_y,
	z.city AS city_z,
	x.name AS name_x, y.name AS name_y, 
	t1.amount AS amnt1, t2.amount AS amnt2 
	) AS T
	WHERE T.city_x=T.city_y 
    AND T.city_x <> T.city_z
	AND T.amount1 > T.amount2 
\end{sql}

Note that in GQL, a sequence of operators can continue {\em after} the
$\sqlkw{RETURN}$ clause. For example, if we want to find large transfers between
the two potential offenders we could simply continue the first GQL query with extra clauses:
\begin{gql}
	MATCH (u WHERE u.name=name1)
	-[t:Transfer]->
	(v WHERE v.name=v2)
	FILTER t.amount > 100000
	RETURN t.amount AS big_amount
\end{gql}

This is what is referred to as {\em linear composition}: we can simply
add clauses to the already existing query which apply new
operations to the result of already processed clauses. 

In SQL/PGQ, such an operation is also possible, though perhaps a bit
more cumbersome as we would need to put the above PGQ query as a subquery in \sqlkw{FROM} 
and create another subquery for the second match, then join them on
{\tt name1} and {\tt name2}.

\section{Pattern Matching: Turning Property Graphs into Relations}\label{sec:pm}
We define property graphs and pattern matching, the key component of GQL  and SQL/PGQ, that extracts relations from  graphs.

\subsection{Property Graphs}

We use the standard definition (cf.~\cite{pods23}), 
and only consider directed edges.
Assume pairwise
disjoint countable sets $\labelset$ of labels, $\keyset$ of keys,
$\Constset$ of constants, $\Nodeset$ of node ids, and $\Edgeset$ of
edge ids.

\begin{definition}[Property Graph]
	\label{pg-def}
	A property graph  is a tuple 
	$
	\gdb = \langle N, E, \lbl, \src, \tgt, \prop\rangle
	$
	where
	\begin{itemize}
		\item $N \subset \Nodeset$ is a finite set of node ids used in $\gdb$;
		\item $E \subset \Edgeset$ is a finite set of directed edge ids used in $\gdb$;
		\item $\lbl:  N  \cup  E \to 2^{\labelset}$  
		is a labeling function that associates with every node or edge id a (possibly empty) finite set
		of labels from $\labelset$;
		\item $\src, \tgt: E \to N$ define source and target of an edge;
		\item $\prop: (N  \cup  E) \times \keyset \to \constset$ is a partial
		function that associates a property value
        with a node/edge id and a key. 
	\end{itemize}
\end{definition}

A \emph{path} in $G$ is an alternating sequence $u_0 e_1 u_1 e_2 \cdots
e_{n} u_n$, for $n \geq 0$, of nodes and edges that starts and ends with a node and so
that each edge $e_i$ connects the nodes $u_{i-1}$ and $u_i$ for $i
\leq n$.
More precisely, for each $i \leq n$, either $\src(e_i)=u_{i-1}$ and
$\tgt(e_i)=u_i$ (a {\em forward edge}), or  $\src(e_i)=u_{i}$ and
$\tgt(e_i)=u_{i-1}$ (a {\em backward edge}). 
Note that $n=0$ is possible, in which case the path consists of a
single node $u_0$. We shall explicitly spell out paths as
$\pathval(u_0,e_1,u_1,\cdots, e_n,u_n)$.
\OMIT{
	
	, that is, it is a sequence of the form 
	\[u_0 e_1 u_1 e_2 \cdots e_{n} u_n\;, \]
	where $u_0,\ldots,u_n$ are nodes and $e_1, \ldots,e_n$ are (directed or undirected) edges. Note that we allow $n=0$, in which case the path consists of a single vertex and no edges. For a path $p$ we denote $u_0$ as $\src(p)$ and $u_n$ as $\tgt(p)$; we also refer to $u_0$ and $u_n$ as the path's \emph{endpoints}.
	The \emph{length} of a path $p$, denoted $\pathlen(p)$, is $n$, i.e., the number of occurrences of edge ids in $p$. We also use the term \emph{edgeless path} to refer to a path of length zero. We spell paths explicitly as $\pathval(u_0,e_1,u_1,\cdots, e_n,u_n)$. 
	We denote the set of all paths by~$\PP$.
	
	A \emph{path in $G$} is a path such that each edge in it connects the nodes before and after it in the sequence.\footnote{As is usual in the graph database literature \cite{openCypher,MendelzonW95,Woo,B13}, we use the term path to denote what is called \emph{walk} in the graph theory literature \cite{bollobas2013modern}.} More formally, it is a path $\pathval(u_0, e_1, u_1, e_2, \ldots, e_{n}, u_n)$
	such that at least one of the following holds for each $i \in [n]$:
	\begin{enumerate}
		\item $\src(e_i) = u_{i-1}$ and $\tgt(e_i) = u_i$ in which case we speak of $e_i$ as a {\em forward} edge in the path;
		\item $\src(e_i) = u_i$ and $\tgt(e_i) = u_{i-1}$ in which case we speak of $e_i$ as a {\em backward} edge in the path;
		\item $\endpoints(e_i) = \{u_{i-1},u_i\}$ in which case we speak of $e_i$ as an {\em undirected} edge in the path.
	\end{enumerate}
	Here, both (1) and (2) can be true at the same time in the case of a directed self-loop. 
	By $\Paths(G)$ we denote the set of paths in $G$. Notice that $\Paths(G)$ can be infinite. 
}

Two paths $p = \pathval(u_0,e_0,\ldots,u_k)$ and $p' =
\pathval(u'_0,e'_0,\ldots,u'_j)$ 
\emph{concatenate}, written as $p \concatto p'$, 
if $u_k = u'_0$, in
which case their \emph{concatenation} $p \concat p'$ is defined as
$\pathval(u_0,e_0,\ldots,u_k,e'_0,\ldots,u'_j)$. %
Note that a single-node path is a unit of
concatenation: $p \concat \pathval(u)$
is defined iff $u=u_k$ and is equal to $p$.

\subsection{Pattern Matching}


Pattern matching is the key component of graph query languages.
As already mentioned, an early abstraction of GQL and PGQ patterns was
given in \cite{pods23}, but it retained too much of the baggage of the
actual language (non-1NF outputs, nulls, a complex type system) for
language analysis.
Thus, here we  refine the definitions from~\cite{pods23} to capture the core concepts of pattern matching, similarly to relational algebra for relational databases. To this end,
we fix an infinite set $\Vars$ of
variables and define Core GQL and Core PGQ pattern matching as follows:
\begin{center}
$
	\pat \ \ \df \ \  (x) \ \mid \
	\overset{x}{\rightarrow} \ \mid \
	\overset{x}{\leftarrow} \ \mid \
	\pat_1\, \pat_2  \ \mid\  
	\pat^{n..m} \ \mid\ 
	\pat\langle\theta \rangle \ \mid \ 
	\pat_1 + \pat_2
	$
	\end{center}
	where
	\begin{itemize}
\item $x \in \Vars$ and $0 \leq n \leq m \leq \infty$;
\item variables $x$ in node and edge patterns $(x)$,
$\overset{x}{\rightarrow}$, and $\overset{x}{\leftarrow}$ are optional,
\item $\pat\langle\theta \rangle$ is a conditional pattern, and
conditions  are given by $\theta, \theta' \ \df \ x.k=x'.k' \mid
x.k<x'.k' \mid \ell(x) \mid 
\theta \vee \theta'
\mid
\theta \wedge \theta'
\mid \neg \theta$
where $x,x'\in\Vars$ and $k,k'\in\keyset$;
\item $\pat_1+\pat_2$ is only defined when their sets of free
variables $ \sch{\pat_1}$
and $\sch{\pat_2}$ are equal.
\end{itemize}

The sets of free variables are defined as follows:
\begin{itemize}
\item $\schb{(x)} = \sch{\overset{x}{\rightarrow}} =  \sch{
	\overset{x}{\leftarrow}} \ \df \ \{x\}$;

\item   $\sch{\pat_1 + \pat_2} \ \df\  \sch{\pat_1}$
\item $\sch{ \pat_1\, \pat_2 } \ \df\ \sch{\pat_1}\cup\sch{ \pat_2 }$
\item $\sch{   \pat^{n..m} } \ \df\ \emptyset$
\item $\sch{\pat\langle\theta \rangle}  \ \df\ \sch{\pat}$
\end{itemize}

A pattern produces an output that consists of graph elements and their
properties. Such an 
output $\return$ is a (possibly empty)
tuple whose elements are either variables $x$ or properties $x.k$. 
A {\em pattern with output}, \new{or, pattern, for simplicity,} is an expression $\pat_\return$ such
that every variable present in $\return$ is in $\sch{\pat}$.  

\paragraph{Correspondence with Cypher and GQL}
For the reader familiar with Cypher and/or GQL, we explain how our
fomalization compares with these languages' patterns.
\begin{itemize}
\item $(x)$ is a node pattern that binds the variable $x$ to a
node;
\item   $\overset{x}{\rightarrow}$ and $\overset{x}{\leftarrow}$
are forward edge and backward edge patterns, that also bind $x$ to
the matched edge;
\item  $\pat_1\, \pat_2$ is the concatenation of patterns, 
\item $\pat^{n..m}$ is the repetition of $\pat$ between $n$ and $m$ times
(with a possibility of $m=\infty$)
\item  $\pat\langle\theta \rangle$
corresponds to $\sqlkw{WHERE}$ in patterns, conditions involve
(in)equalities between property values, checking for labels, and their Boolean combinations;
\item  $\pat_1 + \pat_2$ is the union of
patterns;
\item $\pat_\return$ corresponds to
the output forming clauses \sqlkw{RETURN} of Cypher and GQL
and \sqlkw{COLUMNS} of SQL/PGQ, with $\return$ listing
the attributes of returned relations. 
\end{itemize}

\paragraph{Semantics}
To define the semantics, 
we use set $\Valueset$ which is  the union of
$\Constset \cup 
\Nodeset \cup 
\Edgeset$. That is, its elements are node and edge ids, or values of
properties, i.e., precisely the elements that can appear as outputs of
patterns.

The \emph{semantics of a path pattern} $\pat$, with respect to a graph $\gdb$,
is a set of pairs $(p,\mu)$
where $p$ is  a path and $\mu$ is a mapping $\sch{\pat}\rightarrow
\Valueset$.
Recall that we write $\mu_\emptyset$ for the unique empty mapping 
with $\dom{\mu}=\emptyset$.

For the {semantics of path patterns with output} $\pat_\return$
we define $\mu_\return:\return \to \Valueset$ as the projection of
$\mu$ on $\return$:   
\[\mu_\return (\returnEl) \df
\begin{cases} 
\mu(x) & \text{if } \returnEl = x\in \Vars \\
\prop(\mu(x),k)  & \text{if } \returnEl = x.k\,. 
\end{cases}\]
Full definitions are presented in Figure~\ref{fig:sempathpattern}.
For node and edge patterns with no variables, the mapping part of the
semantics changes to $\mu_\emptyset$. 
The satisfaction of a condition $\theta$ by a mapping $\mu$, written
$\mu\models\theta$, is defined as follows: $\mu\models x.k =
x'.k'$ if 
both $\prop(\mu(x),k)$ and $\prop(\mu(x'),k')$ are both defined and
are equal (and likewise for $<$), and
$\mu\models\ell(x)$ if $\ell\in\lbl(\mu(x))$. It is then extended to
Boolean connectives $\wedge, \vee, \neg$ in the standard way. 

\newcommand{\patsem}
{
$\begin{array}{rl}
	\sem{(x)}_G & 
	\df \Set{(\path{n},\{x\mapsto n\}) \mid n\in \Nodeset}
	\\
	\sem{\overset{x}{\rightarrow}}_G &
	\df 
	\left\{
	(\path{n_1,e,n_2},\{x\mapsto e\}) 
	\ \middle|\
	e\in \Edgeset, \ \src(e)=n_1, \ \tgt(e)= n_2
	\right\}
	\\
	\sem{\overset{x}{\leftarrow}}_G &
	\df  \left\{(\path{n_2,e,n_1},\{x\mapsto e\}) 
	\ \middle|\
	e\in \Edgeset, \ 
	\src{(e)}=n_1, \ \tgt{(e)}= n_2
	\right\}
	\\
	\sem{\pat_1 + \pat_2}_G & \df
	\sem{\pat_1}_G \cup  \sem{\pat_2}_G\\  
	\sem{\pat_1 \,\pat_2}_{G} & \df 
	\left\{ ({p_1\concat p_2}, \mu_1\tupunion \mu_2 ) 
	\ \middle| \ (p_1,\mu_1)\in\sem{\pat_1}_{G},
	\ (p_2,\mu_2)\in\sem{\pat_2}_{G},\
	\mu_1\sim \mu_2, \ p_1 \concatto p_2
	\right\}
	\\
	\sem{\pat \langle\theta \rangle}_G &\df
	\Set{(p,\mu)\in\sem{\pi}_G \ \mid\  
		\mu \models \theta}  \\
	\OMIT{
		& \hspace*{1cm}\mu \models x.k < y.k' \text{ if }
		\mu(x).k < \mu(y).k'
		\\
		& \hspace*{1cm}\mu \models x.k = y.k' \text{ if }
		\mu(x).k=\mu(y).k'
		\\
		&\mu \models \ell(x) \text{ if }
		\ell \in \lbl({\mu(x)})
		\\
		&\mu \models \theta \wedge \theta' \text{ if }
		\mu\models \theta \text{ and } \mu\models \theta'
		\\
		&\mu \models \neg \theta  \text{ if }
		\mu\not \models \theta
	} 
	\sem{\pat^{n..m}}_G & \df \displaystyle{\bigcup_{i=n}^m}  \sem{\pat}^i_G \text{ where }\\
	& \qquad\qquad \sem{\pat}^0_G  \df \left\{(\path{n}, \mu_{\emptyset}) \ \big|\  
	n\in\Nodeset \right\}\\
	& \qquad\qquad {\sem{\pat}}^{n}_{G}  \df  
	\left\{ (p_{1} \cdots p_{n}, \mu_{\emptyset} ) 
	\ | \
	\exists \mu_{1},\ldots,\mu_{n}:
	(p_i,\mu_i)\in{\sem{\pat}}_{G} \text{ and }
	p_{i} \concatto p_{i+1}\text{ for all }i < n \right\},  \, n>0
	\\ & \\
	\sem{\pat_\return}_{\gdb}&\df 
	\left\{\mu_\return \ \mid \ 
	\exists p: (p, \mu) \in \sem{\pat}_{\gdb}
	\OMIT{\text{ if } \returnEl=x \text{ for some } x \text{ then } \mu(\returnEl) = \mu'(\returnEl), \\
		\text{otherwise } \returnEl=x.a \text{ for some } x \text{ and some } a \text{ then } \mu(\returnEl) = \mu'(x).a }
	\right\}
\end{array}
$
}

\begin{figure*}
\centering
\patsem
\caption{Semantics of patterns and patterns with output}
\label{fig:sempathpattern}
\end{figure*}

\paragraph{Pattern languages vs GQL and PGQ patterns}
Compared to GQL and PGQ patterns as described
in \cite{sigmod22,icdt23,pods23}, we make some simplifications.
First and foremost, they are to ensure that outputs of pattern
matching are 1NF relations. Similarly to formal models
of SQL by means of relational algebra and calculus over sets of
tuples, we do not include 
bags, nulls, and 
{relations whose entries are not atomic values.}
The differences manifest themselves in four ways. 

First, we use set semantics rather than bag semantics. GQL and PGQ
pattern can return tables with duplicate patterns; we follow their
semantics up to multiplicity.

 Second, in disjunctions $\pat_1 + \pat_2$ we require that free variables
of $\pat_1$ and $\pat_2$ be the same. In GQL this is not the case, but
then for a variable $x \in \sch{\pat_1}-\sch{\pat_2}$, a match for
$\pat_2$ would generate a {\em null} in the $x$ attribute. Following
the 1NF philosophy, we omit features that can generate nulls.

Third,
repeated patterns $\pat^{n..m}$ have no free variables. In GQL and
PGQ, free variables of $\pat$ become {\em group variables} in
$\pat^{n..m}$, leading to both a complex type system \cite{pods23} and
crucially non-flat outputs. Specifically such variables are evaluated
to {\em lists}. \new{For example, in the pattern $()\xrightarrow{x}^{0..\infty}()$, the variable $x$ would be mapped to the list of edge ids traversed by the path. These lists would be} typically represented as values of an \sqlkw{ARRAY}
type in implementations (cf.~\cite{duckpgq}) thus again violating 1NF. 

Fourth, we do not impose any conditions on paths that can be
matched. In GQL and PGQ they can be {\em simple} paths (no repeated
nodes), or {\em trails} (no repeated edges), or {\em shortest}
paths. In GQL, PGQ, and Cypher, paths themselves may be returned,
and such restrictions therefore are necessary to ensure finiteness of
output. Since we can only 
return graph nodes or edges, or their properties, we never have the
problem of infinite outputs, and thus we chose not to overcomplicate
the definition of core languages by deviating from flat tables as
outputs.

In what {\em might} look like a simplification w.r.t.~GQL
and PGQ, we do not have explicit joins of patterns, i.e., $\pat_1,
\pat_2$ \new{with the semantics $\sem{\pat_1,\pat_2}_{\gdb} =
\left\{\big((p_1,p_2),\mu_1 \tupunion \mu_2\big) \mid (p_i,\mu_i)\in
\sem{\pat_i}_\gdb, \  i=1,2\right\}$. This is because such joins are  definable
with RA operations. This simplification is in the same spirit as not including the natural join in the definition of RA, since it can already be expressed with product, selection, and projection.  
}

\section{SQL/PGQ: Theoretical Abstractions}\label{sec:pgq}
\new{Having defined patterns shared by SQL/PGQ and GQL, we can proceed to
provide a theoretical abstraction of SQL/PGQ. Recall that in PGQ,
the \sqlkw{MATCH} statement is embedded in \sqlkw{FROM}. In fact
results of matches over a graph are simply treated as relations, or
subqueries, over which the usual SQL can be asked. Taking again the
view that our goal is to provide the core abstraction of the language,
on top of which others can be built, we look at the essential core of
relational languages, namely relational algebra. With this in mind, we
define (for now informally):
\begin{itemize}
\item[] \hspace{-2em}{\em Core PGQ = Relational Algebra over pattern matching outputs.}
\end{itemize}
}

To make this definition formal, we define some very standard
concepts. 
We assume an infinite countable set $\globalsch$ of \emph{relation symbols},
such that each $S\in \globalsch$ is associated with
a sequence  $\attr{S}\df A_1, \ldots, A_n$ of attributes for some
$n>0$; here 
attributes $A_i$ come from a countably infinite set $\mathcal A$ of
attributes. By $\arity S$ we mean $n$, and to be explicit about
names of attributes write $S(A_1, \ldots, A_n)$.


Fix an infinite \emph{domain} $\Univ$ of values. 
A \emph{relation} over $S(A_1,\ldots, A_n)$ is a set of \emph{tuples}
$\tup:\{A_1,\ldots, A_n\} \rightarrow \Univ$. The domain of $\tup$ is
denoted $\domfunc{\tup}$, and we often represent tuples as sets of
pairs $(A,\tup(A))$ for $A\in \domfunc{\tup}$. 
A relational database $\db$ 
is a partial function that maps symbols $S\in \globalsch$ to relations $\db(S)$ over $S$ (if $\db$ is clear from the context we refer to $\db(S)$ simply as $S$, by a slight abuse of notation.) 
We say that $\db$ is over its domain $\dom{\db}$.

Let $\tup$ be a tuple and $\pmb{A} \subseteq \dom{\tup}$. We use
$\tup \restriction \pmb{A}$ to denote the restriction of $\tup$ to
$\pmb{A}$, that is, the mapping $\tup'$ with
$\dom{\tup'} = \pmb{A}$ and $\tup'(A) \df \tup(A)$ for every attribute
$A\in \pmb{A}$.  Two tuples $\tup_1,\tup_2$  are \emph{compatible},
denoted by $\tup_1 \sim \tup_2$, if 
$\tup_1(A) = \tup_2(A)$
for every $A\in \dom{\tup_1} \cap \dom{\tup_2}$.
For such compatible tuples  define
$\tup_1 \tupunion \tup_2$ as the mapping $\tup$ with
$\dom{\tup}\df \dom{\tup_1} \cup \dom{\tup_2}$, and
$\tup(A)\df \tup_1(A)$ if $A\in\dom{\tup_1}$ and
$\tup(A)\df \tup_2(A)$ otherwise.


If $A\in\domfunc{\tup}$ then the renaming $\rho_{ A\rightarrow
B}(\tup)$ of $A$ to $B$ is the mapping $\tup'$ with $\domfunc{\tup'}=
(\domfunc{\tup} \setminus\{A \})\cup\{B\}$ where $\tup'(B)\df \tup(A)$
and $\tup'(A')= \tup(A')$ for every other $A'\in\domfunc{\tup'}$.  


\new{
\paragraph{Relational Algebra (RA)}
We  
use a standard presentation of RA. 
Given a schema $\localsch$ which is a finite
subset of $\globalsch$, the expressions $\query$ of 
$\RA(\localsch)$ and selection conditions $\theta$ are defined as 

$$\begin{array}{rcl}
	\query, \query'  & \!\!\df\!\! &  R \ 
    \mid \ \pi_{\pmb{A}}(\query) \ 
    \mid \ \sigma_{\theta}(\query) \ 
    \mid \ \query \join \query' \
	\mid \ \query \cup \query' \ 
    \mid \ \query - \query' \\
    \theta  & \!\!\df\!\! & A =
   A' \mid \neg \theta \mid \theta \vee \theta \mid \theta \wedge \theta
\end{array}
$$ 
where $R$ ranges over relations in $\localsch$.
The sets of attributes of expressions $\attr{\query}$ are defined by extending $\attr{R}$, namely $\attr{\pi_{\pmb{A}}(\query)}$ is $\pmb{A}$, while both of $\attr{\sigma_{\theta}(\query)}$ and $\attr{ \query \op \query'}$ are $\attr{Q}$, for $\op$ being union and difference; $\attr{ \query \join \query'} = \attr{ \query} \cup \attr{  \query'}$ and $\attr{\rho_{ A \rightarrow A'}(\query)} = (\attr{\query} \setminus  \{ A \})\cup \{ A'\}$. 

The expressions of RA must satisfy the usual well-definedness rules: ${\pi_{\pmb{A}}(\query)}$ is well-defined if $\pmb{A}\subseteq \attr{\query}$; set operations are defined if $\attr{\query}=\attr{\query'}$, and for renaming from $A$ to $A'$ we must have $A\in\attr{\query}$ and 
$A'\not\in\attr{\query}$.
}

\OMIT{
\begin{align*}
	\query, \query' \ \ \df\ \ \  & R &\text{if $R \in \localsch$} \\  
	&\pi_{\pmb{A}} (\query) &\text{if $\pmb{A}\subseteq \attr{\query}$} \\ 
	&\sigma_{\theta}(\query) &
    \\
	&\rho_{ A \rightarrow  A'}(\query) &\text{if } A \in \attr{\query},A'\not \in \attr{\query}\\
	&\query \join \query' \\ 
	&\query \cup \query', \ \ \query \cap \query', \ \ \query - \query' &\text{if $\attr{\query} = \attr{\query'}$}  \\
\end{align*}
with $\theta, \theta' \df \ A = A' \mid \neg \theta \mid \theta \vee \theta' \mid \theta \wedge \theta'$ where $A,A'\in \attr{\query}$, and $\attr \query$ extending $\attr R$ by the following rules:

\begin{align*}
	\attr{\pi_{\pmb{A}} (\query)} \df\ \ \  & {\pmb{A}}& \\
	\attr{\sigma_{\theta} (\query)} \df\ \ \  & \attr{\query}& \\
	\attr{\rho_{ A \rightarrow A'}(\query)} \df\ \ \  & \left( \attr{\query} \setminus  \{ A \} \right)\cup \{ A'\}& \\
	\attr{ \query \op \query'} \df\ \ \  & \attr{ \query }& \text{ for }\op\in \{\cup,\cap,\setminus \}\\
	\attr{ \query \join \query'} \df\ \ \  & \attr{ \query} \cup \attr{  \query'}& \\
\end{align*}
}
The result of evaluation of a query $\query$ on a database $\db$ is a relation $\sem{\query}_{\db}$  over $\attr{\query}$ defined as:
\begin{align*}
	\sem{R}_{\db} \df\ \ \  & {\db}(R)&\\
	\sem{\pi_{\pmb{A}}(\query)}_{\db} \df\ \ \
	& \{ \tup \restriction_{\pmb{A}} \ \mid \ \tup \in \query \}&\\
	\sem{\sigma_{\theta}(\query)}_{\db} \df\ \ \  & \{ \tup \ \mid\ \tup \in \query \text{ and } \tup \models \theta\}&\\
	\sem{\rho_{ A \rightarrow A'}(\query)}_{\db}\df\ \ \  &\{ \rho_{ A \rightarrow  A'}(\tup) \mid \tup \in \query \}&\\
	\sem{\query \join \query'}_{\db} \df\ \ \
	& \{\mu \tupunion \mu' \ \mid \ \tup\in\query, \ \tup'\in \query' \}&\\
	\sem{\query \op \query'}_{\db} \df\ \ \  & \sem{\query}_{\db} \op \sem{ \query'}_{\db}& \text{ for }\op\in \{\cup,\setminus \}
\end{align*}

\noindent
with $\mu\models \theta$ having the standard semantics
$\mu \models A=A'$ iff $A,A'\in\dom{\mu}$ and $\mu(A) = \mu(A')$,
extended to Boolean connectives $\wedge, \vee, \neg$.

\paragraph{Core PGQ}\label{sec:corepgq}

\new{
Assume that for each variable $x\in\Vars$ and each key
$k\in\keyset$, both $x$ and $x.k$ belong to the set of attributes
$\mathcal A$. For each pattern $\pat$ and each output specification
$\return$, we have a relation symbol $R_{\pat,\return}$ whose set of
attributes are the elements of $\return$. 
Let $\globalschpat$ contain all such
relation symbols. 

\begin{definition}[Core PGQ]\label{core-pgq-def}
{\em Core PGQ} is defined as $\RA(\globalschpat)$, i.e.,
the set of relational algebra expressions over the schema
$\globalschpat$.
\end{definition}  

To define the semantics of Core PGQ queries, assume without loss of
generality that $\Valueset \subseteq \Univ$.  This ensures that
results of pattern matching are relations of the schema
$\globalschpat$, because for every path pattern with output
$\pat_{\return}$ and a property graph $\gdb$, the table 
$\sem{\pat_{\return}}_{\gdb}$ is an instance of relation
$R_{\pat,\return}$ from $\globalschpat$.
\OMIT{
\begin{proposition}\label{prop:multipatisrel}
	For every path pattern with output $\pat_{\return}$ and a property graph
	$\gdb$,  the set  $\sem{\pat_{\return}}_{\gdb}$ is an
	instance of  relation $R_{\pat,\return}$ from 
	$\globalschpat$.
\end{proposition}
}
%
Then the semantics of Core 
PGQ is simply the extension of the semantics of
$\RA$ defined above where for base relations we have
\OMIT{
extension of the semantics of $\RA$ 
as we only need to
define the semantics of base relations by
}
$ \sem{R_{\pat,\return}}_\gdb \ \df \ \sem{\pat_\return}_\gdb\,.$
\OMIT{
and then use the semantic rules for patterns from
Fig.~\ref{fig:sempathpattern} and for  $\RA$
 from 
Section \ref{sec:ra}.
}
}

\section{GQL: Theoretical Abstractions}\label{sec:gql}
\new{We next provide a formal model of GQL. Recall that it shares
patterns with PGQ.
What is different is the way GQL processes results of
pattern matching: not in a bottom-up way with $\RA$ operators like
PGQ, but rather in sequential, or pipelined way where the output of
each operation in a sequence serves as the input for the next
operation. Using terminology adopted by Cypher \cite{cypher}, GQL
calls this {\em linear composition}. Unlike $\RA$, it lacks proper
formalization, and thus next we provide a formal description of a
different flavor of $\RA$, obtained by linear composition.
}

\OMIT{
as we showed in
Section 
In this section we formalize the notion of linear composition that
underlies Cypher and GQL, and is present independently of them in
purely relational languages such as PRQL. We use similar preliminaries
(schemas, databases) as in the previous section, then
present its linear-relational algebra variant, prove equivalence with relational algebra, and discuss
the origins of this approach in query language design. 
}

\subsection{LCRA: Linear Composition RA}\label{sec:lcra}
{
This language captures the sequential
(linear) application of relational operators as seen in Cypher, GQL,
and also PRQL. Its expressions over a schema $\localsch$, denoted by
$\LRA(\localsch)$,  
are defined as: }
%
\begin{align*}
    \textit{Linear Clause:   }\,\,\,  \linquery, \linquery' \df&~    S \mid \pi_{\pmb{A}} \mid \sigma_{\theta} \mid \rho_{ A\rightarrow A'} \mid \linquery\, \linquery' \mid \{\query\} &\\
   \textit{Query:   }\,\,\, \query, \query' \df&~ \linquery \mid \query\cap \query' \mid \query\cup \query' \mid\query \setminus \query' 
\end{align*}

where $S$ ranges over $\localsch$,  while $\pmb{A} \subseteq \mathcal A,$ and $A,A'\in\mathcal A$, and $\theta$ is defined as for RA. Unlike for RA, the output schema of LCRA clauses and queries can be determined only dynamically.

To define the semantics of a query $\query$ on a database $\db$ we
need a starting value of the driving table $\rel$, and this is taken
to be $I_{\emptyset}$, the relation containing only one empty tuple
$\mu_{\emptyset}$ where 
$\dom{\mu_{\emptyset}}\df \emptyset$, cf.~\cite{cypher,icdt23}.

The semantics $\sem{\,}_{\db}$ of LCRA clauses $\linquery$ and queries
$\query$ is a {\em mapping} from relations into relations (known as
{\em driving tables} for Cypher and GQL). It is defined as follows: 

\begin{align*}
    \sem{S}_{\db}(\rel) \df&~ \rel \join \db(S) \\
    \sem{\pi_{\pmb{A}}}_{\db}(\rel)\df&~ \{\tup\restriction_{\pmb{A}\cap \attr{\rel}} \  \mid \ \tup\in \rel\}\\
    \sem{\sigma_{\theta}}_{\db}(\rel) \df&~ \{\tup \  \mid \ \tup \in \rel,\, \tup \models \theta  \} \\
    \sem{\rho_{ A \rightarrow  A'}}_{\db}(\rel) \df&~ \{\rho_{ A \rightarrow  A'}(\tup) \mid \tup\in \rel,\, A'\not\in \domfunc{\tup} \}\\
     \sem{\linquery \linquery'}_{\db}(\rel) \df&~  \sem{\linquery'}_{\db}(\sem{\linquery}_{\db}(\rel))\\
     \sem{\{\query\}}_{\db}(\rel) \df&~ \rel \join \sem{\query}_{\db}(I_{\emptyset})  \\
    \sem{\query \op \query'}_{\db}(\rel) \df&~ \sem{\query}_{\db}(\rel) \op \sem{\query'}_{\db}(\rel),  \quad  \text{ for } \op \in\{\cup,\cap,-\} 
\end{align*}

\new{
A clause or a query of $\LRA$ always looks at a unique input relation
$\rel$. If a clause is a name of a relation $S$ or an entire query
$\{\query\}$, then it is joined with $\rel$. Projection, selection,
and renaming clauses behave in the usual way, and apply to the input
relation $\rel$. The meaning of set operations -- union, intersection,
difference -- is also standard, and two clauses $\linquery\linquery'$
are simply executed in sequence, with the result of $\linquery$ on
$\rel$ becoming the input to $\linquery'$. 

This semantics determines (dynamically) the set of attributes of
outputs of clauses and queries; for completeness we present it here:}
\begin{flalign*}
    \attr{\sem{S}_{\db}(\rel)} \df&~ \attr{S} \cup \attr{R} &\\
    \attr{ \sem{\pi_{\mathbf A}}_{\db}(\rel)}\df&~ \mathbf A \cap \attr{ \rel} &\\
    \attr{\sem{\sigma_{\theta}}_{\db}(\rel)} \df&~ \attr{\rel} &\\
    \attr{\sem{\rho_{ A \rightarrow A'}}_{\db}(\rel) } \df&~\attr{\rel} \setminus\{A'\} \cup \{ A\}&\\ 
    \attr{\sem{\{\query\}}_{\db}(\rel)} \df&~ \attr{\sem{\query}_{\db}(\rel)} \cup \attr{\rel}&\\
    \attr{\sem{\query \circ \query'}_{\db}(\rel)} \df&~ \attr{\sem{\query}_{\db}(\rel)}&
\end{flalign*}
\new{
Query output on a database is defined as
$\sem{\query}_{\db}( I_{\emptyset})$.
}

\OMIT{
where
$I_{\emptyset}$ is a singleton relation containing the empty tuple,
namely $\mu_{\emptyset}$ where 
$\dom{\mu_{\emptyset}}\df \emptyset$, cf.~\cite{cypher,icdt23}.
Thus, when we simply write $\sem{\query}_{\db}$, we mean that the
semantic function is applied to the fixed database $I_{\emptyset}$. 
}

\OMIT{
For example, consider a schema with $\attr{P}=\{A_1,A_2\}$ and
$\attr{S}=\attr{T} = \{A_2\}$ and a query $$\query \ \df \ P\ \pi_{A_1}\
S\ \sigma_{A_2=1}\ \{ S - T\}\,.$$
An LCRA query is read left-to-right. We
start with the driving table $\rel_0 \df I_\emptyset$. Every clause modifies it resulting in a new value of $\rel$. After processing
the clause 
$P$, it becomes $\rel_1 \df \rel_0 \join P = P$. The next clause is
$\pi_{A_1}$ so this is applied to $\rel_1$ resulting in
$\rel_2 \df \pi_{A_1}(P)$. The next clause is $S$ so the new value of
the driving table becomes the old value $\rel_2$ joined with $S$, i.e.,
$\rel_3 \df \pi_{A_1}(P) \times S$ (as their attributes are disjoint). 
The clause $\sigma_{A_2=1}$ results in
$\rel_4 \df \sigma_{A_2=1}(\rel_3)
= \sigma_{A_2=1}\big(\pi_{A_1}(P) \times S)$ with attributes
$A_1,A_2$. Next, starting 
with this value of the driving table, rather than $I_\emptyset$, we
evaluate the query $S - T$. This results in $\rel' \df \rel_4 \join S -
\rel_4\join T$. This relation has attributes $A_1,A_2$ and hence the
result of the entire query is $\rel_4 \join \rel' = \rel_4 \cap \rel'$. 
}

\subsection{Core GQL}

\new{
Just as we defined Core PGQ as $\RA$ over output of patterns, we now
define Core GQL as $\LRA$ over the same.
\begin{definition}[Core GQL]\label{core-gql-def}
  The language {\em Core GLQ} is defined as
the set of linear composition relational algebra expressions over 
the schema $\globalschpat$, i.e.,
$\LRA(\globalschpat)$.
\end{definition}  
As with Core PGQ, we define the semantics of Core GQL by
$ \sem{R_{\pat,\return}}_\gdb  \df  \sem{\pat_\return}_\gdb$
and then use the semantic of $\LRA$ above. 

We now present an example to explain how GQL's linear composition
works.
}
We use a simplified query based on the money laundering query
from the introduction. It looks 
for someone who has two friends in a city different from theirs, and
outputs the person's name and account (\texttt{Porthos} and \texttt{a2} in our example): 

\begin{gql}
	MATCH (x)-[:Friends]->(y)-[:Friends]->(z),
	(y)-[:Owns]->(acc_y)
	FILTER (y.city) <> (x.city) 
    AND (x.city=z.city)
	RETURN y.name AS name, acc_y AS account
\end{gql}
\noindent
The equivalent Core GQL formula is 

\smallskip
$	R_{\pat_1,\return_1}  
	\quad R_{\pat_2, \return_2} \
	\quad \sigma_{y.\text{city}\neq x.\text{city} \wedge x.\text{city}=z.\text{city}} 
	\quad\pi_{y.\text{name},\text{acc}\_y}\\ 
	{} \quad\quad\quad\quad\rho_{y.\text{name} \to \text{name}}  
	\quad\rho_{acc\_y \to \text{account}}
$
\OMIT{
$\begin{array}{l}
	R_{\pat_1,\return_1} \\ 
	\quad R_{\pat_2, \return_2} \\
	\quad\quad \sigma_{y.\text{city}\neq x.\text{city} \wedge x.\text{city}=z.\text{city}} \\ 
	\quad\quad\quad\pi_{y.\text{name},\text{acc}\_y}\\ 
	\quad\quad\quad\quad\rho_{y.\text{name} \to \text{name}} \\ 
	\quad\quad\quad\quad\quad\rho_{acc\_y \to \text{account}}
\end{array}
$
}

\smallskip
\noindent
where 
$$
\begin{array}{rcl}
	\pat_1 & \df & \big((x) \overset{e_1}{\rightarrow} (y) \ (y)
	\overset{e_2}{\rightarrow}(z)\big)\langle \text{Friends}(e_1) \wedge
	\text{Friends}(e_2)\rangle
	\\
	\pat_2 & \df & \big((y) \overset{e_3}{\rightarrow} (\text{acc}\_y)\big)\langle
	\text{Owns}(e_3)\rangle
	\\
	\return_1 & \df & (x,\ y,\ z,\ x.\text{city},\ y.\text{city},\ z.\text{city})
	\\
	\return_2 & \df & (y, \ \text{acc}\_y)\,.
\end{array}
$$

\subsection{Equivalence of Core PGQ and Core GQL}
\OMIT{We denote by $\RA(\globalsch)$ the set of all queries in the infinite union
\[
\bigcup_{\localsch\subset \globalsch \text{ is a local schema}} \RA(\localsch)
\]
and by 
$\LRA(\globalsch)$, 
\[
\bigcup_{\localsch\subset \globalsch \text{ is a local schema}} \LRA(\localsch)
\]
}
{
We say that two queries $\query_1, \query_2$ (possibly from different
languages) are \emph{equivalent} if $\sem{\query_1}_{\db} =
\sem{\query_2}_{\db}$
for every database $\db$. 
A query language $L_1$ is \emph{subsumed by}  $L_2$ if for each query $\query_1$ 
in $L_1$ there is an equivalent query $\query_2$ 
in $L_2$.
If there is also a query $\query_2 \in L_2$ for which there is no equivalent query $\query_1\in L_1$ 
then $L_1$ is said to be \emph{strictly less expressive than } $L_2$.
Finally
$L_1$ and $L_2$ are {\em equivalent} if $L_1$ is {subsumed by}  $L_2$ and 
$L_2$ is {subsumed by}  $L_1$.

\begin{theorem}\label{thm:ravslra}
Languages  $\RA(\localsch)$ and
$\LRA(\localsch)$ are equivalent, for every schema $\localsch$
\end{theorem}

Thus, $\LRA$ proposed as the relational processing engine of graph
languages like Cypher and GQL is the good old $\RA$ in a slight
disguise. 
As an immediate consequence of Theorem~\ref{thm:ravslra} we have:
\begin{corollary}\label{cor:pgqvsgql}
	The languages Core PGQ and Core GQL have the same expressive
	power. 
\end{corollary}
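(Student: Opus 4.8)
The plan is to derive Corollary~\ref{cor:pgqvsgql} as an immediate application of Theorem~\ref{thm:ravslra}, using the fact that Core PGQ and Core GQL are defined over \emph{the same} schema $\globalschpat$ with \emph{the same} semantics for base relations. Concretely, Core PGQ is $\RA(\globalschpat)$ and Core GQL is $\LRA(\globalschpat)$, and both interpret each base symbol $R_{\pat,\return}$ by the very same relation $\sem{\pat_\return}_\gdb$ (well-defined as a relation of schema $\globalschpat$ by Proposition~\ref{prop:multipatisrel}). So the only difference between the two languages is the \emph{relational processing layer} sitting on top of pattern matching, and Theorem~\ref{thm:ravslra} asserts precisely that these two layers --- $\RA(\localsch)$ and $\LRA(\localsch)$ --- are equivalent for every schema $\localsch$.

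First I would instantiate Theorem~\ref{thm:ravslra} with $\localsch = \globalschpat$, obtaining that $\RA(\globalschpat)$ and $\LRA(\globalschpat)$ are equivalent \emph{as relational languages}, i.e., for every relational database $\db$ over $\globalschpat$ and every query in one language there is an equivalent query in the other. The one point that needs a sentence of care is that the notion of ``database'' in Theorem~\ref{thm:ravslra} is an arbitrary instance of $\globalschpat$, whereas in the graph setting the instances we care about are exactly those arising as $\db(R_{\pat,\return}) = \sem{\pat_\return}_\gdb$ for some property graph $\gdb$. Since the equivalence in Theorem~\ref{thm:ravslra} is universally quantified over all databases, it holds in particular for this restricted family of graph-induced instances; thus the translations between $\RA$ and $\LRA$ expressions preserve equivalence when both sides are fed the common base interpretation $\sem{R_{\pat,\return}}_\gdb = \sem{\pat_\return}_\gdb$.

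Putting these together, given any Core PGQ query $\query \in \RA(\globalschpat)$, Theorem~\ref{thm:ravslra} yields an equivalent $\LRA(\globalschpat)$ query, i.e., a Core GQL query, and conversely; since the base-relation semantics coincide by construction, evaluating either query on a graph $\gdb$ (via $\sem{R_{\pat,\return}}_\gdb \df \sem{\pat_\return}_\gdb$ and then the inherited $\RA$/$\LRA$ rules) gives identical results. Hence Core PGQ and Core GQL subsume each other and have the same expressive power.

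The substance of the corollary therefore lies entirely in Theorem~\ref{thm:ravslra}; the corollary itself is a transfer of that result across the fixed interpretation of $\globalschpat$, so I expect no real obstacle here. The only thing to be mildly careful about is that $\globalschpat$ is infinite while Theorem~\ref{thm:ravslra} is stated for finite $\localsch$; I would observe that any single Core PGQ or Core GQL query mentions only finitely many relation symbols $R_{\pat,\return}$, so it lives in $\RA$ (resp.\ $\LRA$) over a finite subschema of $\globalschpat$, to which the theorem applies directly.
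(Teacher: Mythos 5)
Your proof is correct and takes essentially the same route as the paper, which simply presents the corollary as an immediate consequence of Theorem~\ref{thm:ravslra} applied to the shared schema $\globalschpat$ with the common base-relation semantics $\sem{R_{\pat,\return}}_\gdb = \sem{\pat_\return}_\gdb$. Your extra remarks --- that the universal quantification over databases covers the graph-induced instances, and that any single query lives over a finite subschema of the infinite $\globalschpat$ --- are sound and, if anything, slightly more careful than the paper's one-line justification.
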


Notice that the definition of linear clauses and queries of $\LRA$ are
mutually recursive as we can feed any query $\query$ back into clauses
via $\{\query\}$ (this corresponds to the \sqlkw{CALL} feature of Cypher
and GQL). If this option is removed, and linear clauses are not
dependent on queries, we get a simplified language $\sLRA$ (simple
$\LRA$).
Specifically, in  this language linear clauses are given by the
grammar
$\linquery \df  S \mid \pi_{\mathbf A} \mid \sigma_{\theta} \mid \rho_{ A\rightarrow
A'} \mid \linquery \linquery$.
To see why the $\{\query\}$ clause was necessary in the definition of $\LRA$,
we show

\begin{proposition}\label{prop:SLRA-vs-LRA}
    $\sLRA$ is strictly less expressive than $\LRA$. 
\end{proposition}

\OMIT{ 
\noindent
{\em Proof}. At first, observe that simple linear clauses (without
$\{\query\}$) can only express conjunctive queries, as their semantics
applies operations $\join, \pi, \sigma$ and renaming to base
relations. Hence, queries of sLCRA are Boolean combinations of
conjunctive queries, known as BCCQs. While it appears to be folklore
that BCCQs are strictly contained in first-order logic, we were unable
to find the simple proof explicitly stated in the literature, hence we offer ohe
here.

Consider a vocabulary of a single unary predicate $U$ and databases
$\db_1$ and $\db_2$ such that  $\db_1(U) = \{a_1\}$ and
$\db_2(U) = \{a_1,a_2\}$ for two different constants $a_1$ and $a_2$. Since
$\db_1$ and $\db_2$ are homomorphically equivalent, they agree on all
conjunctive queries, and therefore on all BCCQs, but they do not agree
on first-order (and hence $\RA$) query that checks if relation $U$ has
exactly one element. 
\qed
}

\subsection{The origins of linear composition}

\new{Linear composition features prominently in graph languages
(Cypher, GQL) and also some relational languages (PRQL); at the same
time it had not been formalized nor studied as its non-linear
relational algebra analog (until now). Thus we use this short section to briefly
explain the origins of linear composition.
}
Linear composition was introduced in the design of
Cypher \cite{cypher} as a way to bypass
the lack of a compositional language for
graphs. Specifically, pattern matching
transforms graphs into relational tables, and other Cypher operations
modify these tables. If we have two such read-only queries
$\gdb \to \mathbf{T}_1$ and $\gdb \to \mathbf{T}_2$ from graphs to
relational tables, it is not
clear how to compose them. To achieve composition, Cypher read-only
queries are of the form $\query: \gdb \times \mathbf{T} \to \mathbf{T}'$ (and its
read/write queries are of the form
$\query: \gdb \times \mathbf{T} \to \gdb' \times \mathbf{T}'$). In
other words, they turn a graph {\em and a table} into a table. Thus, the
composition of two queries
$\query_1, \query_2: \gdb \times \mathbf{T} \to \mathbf{T}'$ is their {\em linear
composition} $\query_1 \ \query_2$ which on a graph $\gdb$ and table
$\mathbf{T}$ returns $\query_2\big(\gdb,\query_1(\gdb,\mathbf{T})\big)$. 

Independently, the same approach was adopted by a relational language PRQL \cite{prql},
where P stands for ``pipelined'' but the design philosophy is identical. For example one could write 

\medskip
\noindent 
{\small\tt 
\sqlkw{FROM} {\tt R} \sqlkw{FILTER} A=1 \sqlkw{JOIN:INNER} S \sqlkw{FILTER} B=2 \sqlkw{SELECT} C, D}

\medskip
\noindent
with each clause applied to the output of the previous clause. The
above query is the same as relational algebra query

\[\pi_{C,D}\Big(\sigma_{B=2}\big(\sigma_{A=1}(R) \Join S\big)\Big)\]

Though PRQL design is relations-to-relations, the motivation for
pipelined or linear composition comes from creating a database analog
of {\tt dplyr} \cite{dplyr}, a data manipulation library in R, that can be translated 
to SQL. While {\tt dplyr}'s operations are very much relational in
spirit, it is integrated into a procedural language, and hence the
imperative style of programming was inherited by PRQL and also adopted by the piped syntax of SQL \cite{google-pipes}.



\smallskip
\new{
To recap, we defined simple
theoretical abstractions
{\em Core PGQ} of SQL/PGQ and
{\em Core GQL} of GQL, that
share the same {pattern matching language} 
    turning graphs into relations;
then on top of pattern matching outputs we use $\RA$ for PGQ and
$\LRA$ for GQL.
}
It should be kept in mind that these abstractions capture
the essense of SQL/PGQ and GQL in the same way as $\RA$
and first-order logic capture the essence of SQL: they define a
theoretical core that is amenable to a formal study, but real
languages, be it 500 
pages of the GQL standard or over 4000 pages of the SQL standard,
have many more features.

\section{Limitations of Patterns: a theoretical investigation}\label{sec:expressiveness}
\new{
A common pattern of query language development is this: first, a
small core language is designed; its focus is on
declarativeness and optimizability. As a consequence the
expressiveness of such a language is limited. 
These limitations are typically observed in practice by programmers'
inability to write certain queries and often later proven formally,
for a theoretical language that defines the key features of
the practical one. Using these inputs as motivations, extra features are added to the
real-life language if practical applications warrant this and
user demands.
}

A classical example of this development cycle is SQL and recursion. 
In this case, we had a ``canonical'' query whose
inexpressibility it was important to show in order to justify
additions to the language. The query was transitive closure of
relation (or, equivalently, testing for graph
connectivity \cite{GaifmanVardi85,Fagin75,AhoUllman79}).
Thus, before embarking on the study of the expressiveness of GQL and
SQL/PGQ, we ask for an analog of such ``canonical'' queries for them
that users want to express but appear to be unable to. 

Fortunately, we have such queries, thanks to the discussions already happening in
the ISO committee that maintains both SQL and GQL
standards \cite{tobias,fred}. \new{In fact much of recent effort tries to
repair what appears to be a hole in the expressiveness of the language. 
Specifically, it seems to be impossible to express queries that impose
conditions on how values of edges properties change along the paths,
even if the same conditions can be expressed for properties of nodes.
To give a simple example of such a condition, consider the following:}
\begin{itemize}
\item we {\em can} check, by a very simple pattern,  if there is a path of
transfers between two accounts where balance in intermediary accounts
(values held in {\em nodes})
increases, but
\item it appears that we {\em cannot} check if there is a path of
transfers between two accounts where timestamp
(values held in {\em edges})
increases.
\end{itemize}
This perceived deficiency has already led to early proposals to significantly
enhance pattern matching capabilities of the language \cite{tobias,fred},  in a
way whose complexity implications appear significant but are not fully understood.

Before such a dramatic expansion of the language gets a stamp of
approval, it would be nice to know whether it is actually needed.
The goal of this section is to provide both theoretical and
experimental evidence that we do need extra language features to
express queries such as {\em ``increasing values in edges''}. 

\OMIT{
To analyze the expressiveness of patterns, there is a natural
way of defining queries given by
them, in the spirit of RPQs which
define pairs of nodes connected by a path. 
Take any pattern $\pat$ and turn it into a pattern
with output
\begin{equation}
\label{pat-to-rpq-eq}
\big((x_s)\ \pat \ (x_t)\big)_{x_s,x_t}
\end{equation} that will output pairs
of source and target nodes $x_s$ and $x_t$ connected by a path
satisfying $\pat$.
If we are given a query that maps a property graph $\gdb$ (possibly
from a class of graphs $\mathcal C$) to a pair
of nodes, we say it is expressible by a pattern $\pat$ if its output
is given by pattern with output (\ref{pat-to-rpq-eq}) on every graph
(from $\mathcal C$).

We now start with these results, before looking at conditions on
paths as a whole. We conclude this section by adjusting our definition
to model Cypher patterns as they originally appeared \cite{cypher} and
confirm a previously unproven folklore result that Cypher patterns
cannot express all RPQs.
}

\subsection{What \emph{can} be expressed}
\label{can-be-expressed-sec}
A pattern $\pat$ can be viewed as a  query $\big((x_s)\ \pat \
(x_t)\big)_{x_s,x_t}$ that returns endpoints (source and target) of the paths matched by
$\pat$ as values of attributes $x_s$ and $x_t$.
\OMIT{
analyze the expressiveness of patterns, there is a natural
way of defining queries given by
them, in the spirit of RPQs which
define pairs of nodes connected by a path. 
Take any pattern $\pat$ and turn it into a pattern
with output
\begin{equation}
\label{pat-to-rpq-eq}
\big((x_s)\ \pat \ (x_t)\big)_{x_s,x_t}
\end{equation} that will output pairs
of source and target nodes $x_s$ and $x_t$ connected by a path
satisfying $\pat$.
}
Earlier queries are defined formally as:

\begin{itemize}
\item $\qNodeLessThan$ returns endpoints of a path along which the value
of property $k$ of {\em nodes} increases.  It returns pairs $(u_0,u_n)$
of nodes such that there is path $\pathval(u_0,e_1,u_1,\ldots,e_n,u_n)$ so that $u_0.k < u_1.k < \cdots < u_n.k$. 
\item $\qEdgeLessThan$ returns endpoints of a path along which the value
of property $k$ of {\em edges} increases. It returns pairs $(u_0,u_n)$
of nodes such that there is path $\pathval(u_0,e_1,u_1,\ldots,e_n,u_n)$ so that $e_1.k < e_2.k < \cdots < e_n.k$.
\end{itemize}

It is a simple observation that $\qNodeLessThan$ is expressible by
the Core PGQ and GQL pattern with output
$$
(x_s) \ \Big(\, \big((x)\rightarrow(y)\langle
x.k<y.k \rangle \big)^{0..\infty}\,\Big)\ (x_t).\,$$
\new{
In fact this can be generalized to {\em order motifs} which are
defined as strings over the alphabet $\{\ua,\da\}$. Given such a
string $w$, the query $\qNode_w$ matches paths which can be decomposed
according to $w$ so that in each segment corresponding to $\ua$ values
of property $k$ increase and in each segment corresponding to $\da$
these values decrease. For example, $\qNode_{\ua\da\ua}$ matches
arbitrary length paths
where values of property $k$ of nodes first increases, then decreases,
and then increases again. With the same approach as the above query,
we can see that  $\qNode_w$ is expressible for every order motif $w$. 
}

What happens when we move from nodes to edges? The same approach
will not work for  $\qEdgeLessThan$: if 
 we (erroneously) express it  by 
 $$
\big(\,()\overset{x}{\rightarrow}()\overset{y}{\rightarrow}()\langle x.k<y.k \rangle\, \big)^{0..\infty}
$$
it fails: on the input
$()\overset{3}{\rightarrow}()\overset{4}{\rightarrow}()\overset{1}{\rightarrow}()\overset{2}{\rightarrow}()$
(where the numbers on the edges are values of property $k$) it
returns the start and the end node of the path, even though values in
edges do not increase. This is due to the semantics of path
concatenation: two paths concatenate if the last node of the first path
equals the first node of the second, and therefore conditions on
edges in two concatenated or repeated paths are completely ``local''
to those paths.

In some cases, where graphs are of special shape, we can obtain the
desired patterns by resorting to tricks that a normal query language user
would be rather unlikely to find. Specifically, consider property graphs
whose underlying graph structures are just paths. That is, we look at 
\emph{annotated paths} $P_n$, which are of the form 

\begin{center}
    \begin{tikzpicture}
    \node at (0,0) [shape=circle, draw] (node1) {$v_0$};
    \node at (1.5,0) [shape=circle] (invisiNode) {\ldots};
    \node at (3,0) [shape=circle, draw] (node2) {$v_n$};
    \draw[->] (node1.east) -- (invisiNode.west) node [above,align=center,midway]{$e_0$};
    \draw[->] (invisiNode.east) -- (node2.west) node [above,align=center,midway]{$e_{n-1}$};
\end{tikzpicture}
\end{center}

\noindent where $v_0,\ldots,v_n$ are distinct nodes, $e_0, \ldots,
    e_{n-1}$ are distinct edges (for $n > 0$), and each edge $e_i$
    has property $e_i.k$.
    Let
$\pat^{\Edgeset}_{<}$ be 
$$\begin{array}{cl}
& (x_s) \left(\big(
(u)\overset{x}{\rightarrow}(z)  \overset{y}{\rightarrow}(v) \overset{w}{\leftarrow}
(z)\big) \langle x.k < y.k \rangle\right)^{1..\infty}
\to (x_t)\\ + & (x_s)\to(x_t)
\end{array}
$$
\new{A pattern with output $\pat_\return$ expresses a query $Q$ on $G$ if 
 $\sem{\pat_\return}_G = Q(G)$.
}
\begin{proposition}\label{prop:e<express}
The pattern 
$\big(\pat^{\Edgeset}_{<}\big)_{(x_s,x_t)}$ 
expresses $\qEdgeLessThan$ on annotated paths.
\end{proposition}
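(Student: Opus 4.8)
The plan is to characterize, on an annotated path $P_n$ with edge values $a_i \df e_i.k$ (for $0\le i\le n-1$), exactly which pairs the pattern $\big(\pat^{\Edgeset}_{<}\big)_{\{x_s,x_t\}}$ returns, and then to match this set against the pairs returned by $\qEdgeLessThan$. The heart of the argument is a \emph{gadget lemma} describing the inner block $\beta \df \big((u)\overset{x}{\rightarrow}(z)\overset{y}{\rightarrow}(v)\overset{w}{\leftarrow}(z)\big)\langle x.k<y.k\rangle$. First I would show that on $P_n$ the pattern $\beta$ matches precisely when $u=v_i$, $z=v_{i+1}$, $v=v_{i+2}$, with $x=e_i$ and $y=w=e_{i+1}$: the crucial point is that because $P_n$ has a \emph{unique} edge between any two consecutive nodes, the backward step $\overset{w}{\leftarrow}$ from $v_{i+2}$ back to $z$ forces $w=e_{i+1}=y$ and $z=v_{i+1}$. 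Thus $\beta$ has the net effect of advancing exactly one node, from $v_i$ to $v_{i+1}$, while enforcing the single comparison $a_i<a_{i+1}$; this is the ``look-ahead'' that lets a local pattern peek at the next edge without committing to it.

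Next I would iterate the gadget. Using the repetition semantics of Figure~\ref{fig:sempathpattern} and induction on the number of repetitions $m\ge 1$, I would prove that $\beta^{1..\infty}$ starting at $v_s$ has a match ending at $v_{s+m}$ exactly when $a_s<a_{s+1}<\cdots<a_{s+m}$, which in particular requires the look-ahead edge $e_{s+m}$ to exist, so $s+m\le n-1$. The bookkeeping to get right here is that consecutive gadget matches must concatenate (the $u$ of the next block equals the $z$ of the previous one), and that the free variables of $\beta$ are discarded under repetition since $\sch{\pat^{n..m}}=\emptyset$, so only the two endpoints survive. Prepending $(x_s)$ and appending $\to(x_t)$ then traverses the final edge $e_{s+m}$ to $v_{s+m+1}=v_t$; setting $t\df s+m+1$, the first disjunct of $\pat^{\Edgeset}_{<}$ returns $(v_s,v_t)$ with $t\ge s+2$ precisely when $a_s<\cdots<a_{t-1}$. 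The second disjunct $(x_s)\to(x_t)$ contributes exactly the length-one pairs $(v_i,v_{i+1})$, covering the vacuously increasing single-edge case $t=s+1$.

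Taking the union, the pattern returns exactly $\{(v_s,v_t)\mid s<t,\ a_s<\cdots<a_{t-1}\}$. To finish I would characterize $\qEdgeLessThan$ on $P_n$ directly: a path with strictly increasing edge values can never reuse an edge, since a repeated edge would contribute two equal values, so such a path is a \emph{trail}; and on a path graph every trail is a contiguous monotone traversal of a subpath, whose endpoints are some $v_s,v_t$ and whose edge-value sequence read in traversal order is $a_s,\ldots,a_{t-1}$. Reading off the endpoints of the increasing forward traversals thus yields precisely the pairs above, so the output of $\big(\pat^{\Edgeset}_{<}\big)_{\{x_s,x_t\}}$ coincides with $\qEdgeLessThan$ on annotated paths.

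The main obstacle is the gadget lemma of the first paragraph: making the match analysis of the forward--backward block fully rigorous, and in particular arguing soundness and completeness of the ``look-ahead'' simultaneously with the concatenation and indexing constraints of the repetition, so that the look-ahead edge is guaranteed to exist at the last iteration and the trailing $\to$ legitimately re-traverses it. Everything after that is a careful but routine induction together with a comparison of two explicitly described sets. It is worth stressing that the entire construction collapses if the uniqueness of edges between consecutive nodes fails, which is exactly why the result is confined to annotated paths rather than arbitrary property graphs.
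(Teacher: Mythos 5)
Your proposal is correct and follows essentially the same route as the paper, which only justifies the proposition with an informal paragraph: the forward--backward block is a look-ahead gadget that advances one node per iteration while enforcing a single comparison $e_i.k<e_{i+1}.k$ (with uniqueness of edges on $P_n$ forcing the backward edge to coincide with the look-ahead edge), the trailing forward edge re-traverses the already-checked look-ahead edge, and the second disjunct covers single-edge paths. The one point worth making explicit is that $\qEdgeLessThan$ on annotated paths must be read as ranging over forward traversals only (as the paper implicitly does throughout, e.g.\ via the word $w_p=e_0.k\cdots e_{n-1}.k$ in the appendix), since the paper's notion of path also admits all-backward traversals, which your final characterization silently discards.
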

In the first disjunct, the forward edge $\overset{y}{\rightarrow}$ serves as a
look-ahead that enables checking whether the condition holds, and the
backward edge $\overset{w}{\leftarrow}$ enables us to continue constructing the path in the next
iteration from the correct position $(z)$. The fact that the query
operates on $P_n$ ensures that the last edge to $x_t$ does not violate the
condition of the query as it was already traversed in the iterated subpattern. The second disjunct takes care of
paths of length $1$. 

\subsection{What \emph{cannot} be expressed with patterns}

\new{To achieve expressibility of $\qEdgeLessThan$ in Proposition~\ref{prop:e<express} we made two strong assumptions: not only is the input of a  very special shape (a directed path with forward edges), but also the pattern uses backwards edges; the latter would be natural for an  
{\em oriented} path (in which edges can go in either
direction \cite{HN-book}), but looks rather unnatural in this setting. Thus, we ask ourselves whether $\qEdgeLessThan$ can be expressed in a {\em natural}
way. In fact we can pose a more general question about queries
$\qEdge_w$ for arbitrary order motifs; in Section \ref{can-be-expressed-sec} we saw that on nodes, all
order motifs can be expressed.
}

To formalize what we mean by ``natural'', define
\emph{one-way path patterns} by a restriction of the
grammar:
\begin{align*}
\pat \ \df \ & 
(x) \ \mid\ \overset{x}{\rightarrow} \ \mid \  \pat_1
+ \pat_2 \ \mid \  \pat^{n..m} \ \mid \ \pat_{\langle \theta \rangle} \ \mid\   \pat_1\, \pat_2
\end{align*}
where we require that $\sch{ \pat_1}\cap \sch{\pat_2} = \emptyset$ in
$\pat_1\, \pat_2$ (and variables $x$, as 
before, are  optional). 
The omission of backward edges $\overset{x}{\leftarrow}$ in one-way
patterns is quite intuitive. The restriction on
variable sharing in concatenated patterns is because 
backward edges can be simulated by simply  repeating variables, as 
done above in $\pat^{\Edgeset}_{<}$.


If there is a natural way, easily found by programmers, of writing the
$\qEdgeLessThan$ query in PGQ 
and GQL, one would expect it to be done without backward
edges. Yet, this is not the case. \new{In fact, no order motif on
edges can be captured by GQL and PGQ patterns.  
\begin{theorem}\label{thm:Q<edge}
    There is no order motif $w$ such that $\qEdge_w$ is expressible by
    a one-way path pattern query. 
\end{theorem}

The inexpressibility of $\qEdgeLessThan$ is then an immediate
corollary for $w=\ua$.
}
\new{This result is a direct consequence of a general pumping argument
applied to annotated paths accepted by one-way patterns. 
\begin{theorem}
    For every one-way path pattern $\pat$, if for every $n\in \mathbb{N}$ there exists an annotated path $p$ of length $n$ accepted by $\psi$ then there exists $n_0\in \mathbb{N}$ such that for every annotated path $p$ accepted by $\pat$ with $|p|>n_0$, the path $p$ can be decomposed as  $p=p_1 p_2 p_3$, $|p_2|>1$ and for every $n\in \mathbb{N}$, the annotated path $p_1 p_2^{n+1}p_3 $ is also accepted by $\pat$.
\end{theorem}
Note that in the newly constructed path with repetitions, we assign new ids to the different occurrences of the elements in $p_2$, while keeping the data (annotations) unchanged.
The idea behind the proof is as follows: Since, by definition, there are infinitely many annotated paths that conform to 
$\pat$, it must exhibit unbounded repetition. However, because the semantics disregard variables that occur within unbounded repetitions, the transfer of information between iterations is limited. This restriction allows us to repeat parts of the annotated path while preserving the same semantics and, consequently, still conforming to
$\pat$.

We can derive a further corollary of this theorem that reinforces the intuition that GQL and PGQ patterns are incapable of capturing data values along a path ``as a whole.''

A canonical example of such a condition is checking whether all values
of a specific property of nodes or edges along a path are distinct. 
Formally, we define:  
\begin{itemize}
\item $\qdisjnodes$
returns pairs $(u_0,u_n)$
of nodes such that there is path
$\pathval(u_0,e_1,u_1,\ldots,e_n,u_n)$ so that $u_i.k \neq u_j.k$ for
all $0 \leq i < j \leq n$.
\end{itemize}
and $\qdisjedges$ is defined likewise but for edges in place of nodes.
One-way path
patterns do not have enough power to express such queries.
\begin{corollary}\label{thm:QNdiff}
   No one-way path pattern expresses $\qdisjnodes$ nor $\qdisjedges$. 
\end{corollary}
}

\new{\subsection{What \emph{cannot} be expressed in full GQL}}
Having examined the expressiveness of patterns, we now look at the
entire query languages Core GQL and Core PGQ and
compare them with recursive SQL and linear Datalog.
Recursive SQL is a good comparison target as the
 most natural relational language into which graph queries involving
 pattern with arbitrary length paths are
 translated \cite{yakovets,tassos}. 
The theoretical basis for recursive SQL common table expressions is
 {\em linear Datalog}, i.e., the fragment of Datalog in which definitions can refer to
recursively defined predicates at most once. This is precisely the
 restriction of recursive SQL: a recursively defined table can
 appear at  most once in \sqlkw{FROM}. 

Of course recursive SQL can be enormously powerful: as already
mentioned, combining recursion and arithmetic/aggregates one can
simulate Turing machines \cite{AbiteboulV95}. Thus, to make the comparison
fair, we look at {\em positive recursive SQL}: this is the fragment of recursive SQL
where subqueries can only define equi-joins. In other words, they only
use conjunctions of equalities in \sqlkw{WHERE}. Such a language just
adds recursion on top of unions of conjunctive queries, and ensures
termination and tractable data complexity \cite{ABLMP21}.   
We show that even these simple fragments of SQL and Datalog can
express queries that Core GQL and PGQ cannot define. 


\begin{theorem}\label{main-gql-thm}
There are queries that are expressible in positive recursive SQL, and in
linear Datalog, and yet are not expressible
in Core GQL nor Core PGQ.
\end{theorem}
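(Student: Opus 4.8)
The plan is to reduce everything to Core PGQ, fix a single \emph{equality-based} reachability query as the witness, and place the entire difficulty in its inexpressibility. By Corollary~\ref{cor:pgqvsgql}, Core GQL and Core PGQ have the same expressive power, so it suffices to exhibit one query $Q$ that lies in positive recursive SQL and in linear Datalog but not in Core PGQ $=\RA(\globalschpat)$. Since positive recursive SQL only permits conjunctions of equalities, the witness must avoid the order and disequality conditions built into $\qEdgeLessThan$ and $\qdisjnodes$. I would therefore take $Q$ to be the Boolean \emph{equal-distance co-reachability} query on graphs carrying two distinguished source labels $S_1,S_2$: $Q$ holds iff some node $z$ is reachable from an $S_1$-labelled node and from an $S_2$-labelled node by two paths of \emph{equal length}. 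This is the paradigmatic query forcing two unboundedly long recursions to proceed in lockstep.

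First I would dispatch the two positive directions, which are routine. In linear Datalog, $Q$ is defined by a single linearly recursive predicate that advances both frontiers simultaneously:
\[
\mathsf{SD}(u,v) \leftarrow S_1(u), S_2(v); \qquad
\mathsf{SD}(u',v') \leftarrow \mathsf{SD}(u,v), E(u,u'), E(v,v'); \qquad
\mathsf{Ans} \leftarrow \mathsf{SD}(z,z).
\]
Every rule body is a conjunction of equalities and $\mathsf{SD}$ occurs at most once in each body, so the program is linear. The translation into positive recursive SQL is mechanical: the single recursive occurrence becomes a single reference to the recursive table in \sqlkw{FROM}, and the equi-joins become a conjunction of equalities in \sqlkw{WHERE}. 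Hence $Q$ is expressible in both target languages with very low (indeed \nlog) data complexity.

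The substance is the negative direction. A Core PGQ query is a relational-algebra expression over finitely many pattern relations $R_{\pat,\return}$, so its power is constrained by two facts inherited from the semantics of Figure~\ref{fig:sempathpattern}: under an unbounded repetition the free variables are discarded ($\sch{\pat^{n..m}}=\emptyset$), and each condition $\langle\theta\rangle$ is evaluated locally within a single match. Thus a pattern can export only boundedly many distinguished graph elements and can never relate the \emph{length} of one traversed segment to that of another. I would make this precise by a pumping argument. Let $G_n$ consist of an $S_1$-source and an $S_2$-source, each joined to a common sink $z$ by a disjoint simple forward path of length $n$; let $H_n$ be identical except that the $S_2$-path has length $n+1$. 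Then $z$ is the only node reachable from both sources, so $G_n\models Q$ while $H_n\not\models Q$. The goal is to show that no fixed Core PGQ query separates $G_n$ from $H_n$ once $n$ exceeds the syntactic size of the query.

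To finish I would prove a pumping lemma at the pattern level: for every pattern $\pat$ of size at most $s$ and every $n>s$, there is a bijection between the matches of $\pat$ in $G_n$ and in $H_n$ preserving the bindings of all non-repeated variables and their local neighbourhoods, so that $\sem{\pat_\return}_{G_n}$ and $\sem{\pat_\return}_{H_n}$ coincide after renaming the anonymous interior nodes. Inserting one node into the middle of a long forward path is invisible to bounded conditions and to repeated subpatterns, and---crucially---the disjointness of the two source-paths blocks any single (even partially backward) path from traversing them in lockstep, which is precisely what defeats the backward-edge tricks available on annotated paths. Because relational algebra is generic and acts uniformly on its base relations, equality of all pattern relations under this bijection lifts to equality of the whole query's output, contradicting $G_n\models Q$ and $H_n\not\models Q$. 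The main obstacle is exactly this lifting through the RA layer: Theorems~\ref{thm:Q<edge} and \ref{thm:QNdiff} only rule out a \emph{single} one-way pattern, whereas here I must show that \emph{no} Boolean combination, projection, join, or difference of finitely many arbitrary patterns---including ones using backward edges and repeated variables---detects the unit change in length. This forces the pumping bijection to be coordinated across all patterns of the query at once (equivalently, an Ehrenfeucht--Fra\"iss\'e / partial-isomorphism argument on $G_n$ and $H_n$ that is robust under the full pattern algebra), which is substantially more delicate than the single-pattern pumping behind the earlier theorems.
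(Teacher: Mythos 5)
Your choice of witness is genuinely different from the paper's: the paper separates the languages with the Boolean query ``is the length of a dataless path a power of $2$?'', proves expressibility via an explicit piecewise-linear Datalog program (an \texttt{ADD} relation on path positions feeding a \texttt{POW2} relation) and its positive recursive SQL transcription, and proves inexpressibility by compiling \emph{every} Core GQL query over dataless paths into a formula of Presburger Arithmetic over node positions, so that definable sets of lengths are semilinear, hence ultimately periodic, which $\{2^k\}$ is not. Your positive direction (the lockstep predicate $\mathsf{SD}$) is correct and routine, as you say. The problem is the negative direction, which is where all the content lies and where your sketch has a genuine gap.

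Concretely, the lifting step through the relational-algebra layer does not work as stated. To conclude that every $\RA(\globalschpat)$ query agrees on $G_n$ and $H_n$ by genericity, you need a single bijection of active domains carrying $\sem{\pat_\return}_{G_n}$ onto $\sem{\pat_\return}_{H_n}$ for \emph{all} patterns simultaneously; but your $H_n$ has one more node than $G_n$, so already the pattern $(x)$ with output $x$ yields relations of different cardinalities and no such bijection exists. (RA cannot count generically, but your argument as phrased invokes an isomorphism that cannot exist, so it proves nothing; you would at minimum need to compare, say, two graphs with branch lengths $(n,n)$ versus $(n-1,n+1)$.) Even after that repair, the ``pumping bijection coordinated across all patterns at once'' is precisely the theorem you would need to prove, and it is not a routine extension of Theorems~\ref{thm:Q<edge} and~\ref{thm:QNdiff}: those handle a single \emph{one-way} pattern on annotated paths, whereas here you must control patterns with backward edges and repeated variables that can walk up the $S_1$-branch through $z$ and back down the $S_2$-branch, together with arbitrary Boolean combinations, joins, and differences of such pattern relations. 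The paper's Presburger translation is exactly the device that tames the RA layer (conjunction, negation, and projection of PA formulas), and it leans on the input being a single path so that every binding is an integer position; your branching instances forfeit that, so you would have to rebuild an analogue of Lemmas~\ref{lem:pat_to_aut}--\ref{lem:dlp_to_pa} for two-branch graphs. As it stands the proposal identifies the right obstacle but does not overcome it.
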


At the end of the section, we explain why this is quite
surprising in view of what we know about Core GQL and PGQ:
complexity-theoretic considerations strongly suggest these should
define {\em all} queries from linear Datalog, and yet this is not the
case due to subtle deficiencies in the language design, which we
outline in Section \ref{sec:concl}.

To talk about expressing 
property graph queries in relational languages, 
we must represent graphs as relations. There
are many possibilities, and it does not matter (for showing
Theorem \ref{main-gql-thm}) which
one we choose, as these different representations are inter-definable
by means of unions of select-project-join queries. 
Since graphs in the separating example have labels and do not have property
values, we use an encoding consisting of unary
relations $N_\ell$ storing  $\ell$-labeled nodes, and binary relations $E_\ell$ storing pairs of nodes
$(n_1,n_2)$ with an $\ell$-labeled edge between
them.

To sketch the idea of the separating query, 
we define {\em dataless paths} as graphs $\gdb_n$, $n > 0$, with
nodes $v_0,\ldots,v_n$,
 edges $(v_0,v_1), \ (v_1,v_2), \ldots,
    (v_{n-1},v_n)$, where  
      $v_0$ has label  $\mathit{min\_elt}$ and $v_n$ has label $\mathit{max\_elt}$.
The separating query asks: is $n$ a power of 2?

The inexpressibility proof of this is based on showing that GQL
queries can only define 
Presburger properties of lengths of dataless paths, by
translating Core GQL queries on such paths into formulae of Presburger
Arithmetic (i.e., the first-order theory of $\langle \mathbb{N}, +,
<\rangle$).
Therefore the only definable properties of 
lengths are semilinear sets \cite{ginsburg-spanier}. Semi-linear
subsets of $\mathbb{N}$ are known to be ultimately periodic: that is,
for such a set $S\subseteq \mathbb{N}$, there exists a threshold $t$
and a period $p$ such that $n\in S$ if and only if $n+p\in S$, as long
as $n>t$.
Clearly the set $\{2^k \mid
k \in \mathbb{N}\}$ is not such. 

\OMIT{
\begin{sql}[float,caption=Power of 2 in SQL,captionpos=b,label=sql-power-2-query]
WITH RECURSIVE ADD(A, B, C) AS 
  ( (SELECT P.A, MIN_ELT.A AS B, P.A AS C
     FROM P, MIN_ELT) 
   UNION
    (SELECT MIN_ELT.A, P.A AS B, P.A AS C
     FROM P, MIN_ELT) 
   UNION
    (SELECT ADD.A, P1.B, P2.B AS C
     FROM ADD, P P1, P P2
     WHERE ADD.B=P1.A AND ADD.C=P2.A) ),
  POW2(A, B) AS 
  ( (SELECT P2.A, P2.B 
     FROM MIN_ELT M, P P1, P P2 
     WHERE M.A=P1.A AND P1.B=P2.A) 
  UNION                      
    (SELECT P.B AS A, ADD.C AS B
     FROM POW2, ADD, P
     WHERE POW2.A=P.A AND ADD.A=POW2.B
           AND ADD.B=POW2.B) )
(SELECT 'YES' FROM POW2, MAX_ELT 
 WHERE POW2.B=MAX_ELT.A)
\end{sql}
}

The query can be expressed in positive recursive SQL: 
\begin{sql}
WITH RECURSIVE ADD(A, B, C) AS 
  ( (SELECT P.A, MIN_ELT.A AS B, P.A AS C
     FROM P, MIN_ELT) 
   UNION
    (SELECT MIN_ELT.A, P.A AS B, P.A AS C
     FROM P, MIN_ELT) 
   UNION
    (SELECT ADD.A, P1.B, P2.B AS C
     FROM ADD, P P1, P P2
     WHERE ADD.B=P1.A AND ADD.C=P2.A) ),
  POW2(A, B) AS 
  ( (SELECT P2.A, P2.B 
     FROM MIN_ELT M, P P1, P P2 
     WHERE M.A=P1.A AND P1.B=P2.A) 
  UNION                      
    (SELECT P.B AS A, ADD.C AS B
     FROM POW2, ADD, P
     WHERE POW2.A=P.A AND ADD.A=POW2.B
           AND ADD.B=POW2.B) )
(SELECT 'YES' FROM POW2, MAX_ELT 
 WHERE POW2.B=MAX_ELT.A)
\end{sql}

\noindent 
The path is given by a binary relation
{\tt P} containing pairs $(v_i,v_{i+1})$ while 
minimal/maximal elements $v_0$ and $v_n$ are given by unary relations {\tt MIN\_ELT} and
{\tt MAX\_ELT}. The first common table expression
defines a ternary relation  {\tt ADD}
with  tuples $(v_i,v_j,v_k)$ such that
$i+j=k$. If $v_i$ is in {\tt MIN\_ELT}, then $(v_i,v_j,v_j)$
and $(v_j,v_i,v_j)$ are
in {\tt ADD} for all $j$ (the basis of recursion), and if
$(v_i,v_j,v_k)$ is in {\tt ADD} then so is $(v_{i+1},v_j,v_{k+1})$
(the recursive step). 
After that {\tt POW2} builds a relation with tuples
$(v_i,v_j)$ for $j = 2^i$. Indeed, if $(v_i,v_j)$
is in {\tt POW2}, then so is $(v_{i+1},v_k)$ for  $k=2\cdot j$,
which is tested by $(v_j,v_j,v_k)\in\texttt{ADD}$. The length of the
path is a power of 2 if
the second projection of
{\tt POW2} contains {\tt MAX\_ELT}.

Note that {\tt ADD} is defined by a linear
Datalog program and {\tt POW2} is defined by a linear Datalog program
that uses {\tt ADD} as EDB. Hence, the entire query is
defined by a piece-wise linear program (where already
defined predicates can be used as if they were EDBs). It is known that
such programs can be expressed in linear Datalog \cite{Afrati-tcs03}.


\paragraph{Complexity-theoretic considerations.}
We now explain why the inexpressibility result is quite
unexpected.  Note that all graph pattern languages can express the
reachability query. It is  complete for the complexity class 
\nlog\
via first-order reductions \cite{imm-book}:  an extension of 
first-order logic with the
reachability predicate capturtes precisely all \nlog\ queries.
Since GQL and PGQ can
emulate relational algebra (and thus first-order logic) over
pattern matching results, it appears that they should be able to express all
\nlog\ queries.
In fact graph query
languages expressing all \nlog\ queries have been known 
for a long time, starting with 
GraphLog \cite{crpq}, which introduced the ubiquitous notion
of CRPQs. 

\OMIT{
Graph query languages with data complexity  in \nlog\ have been
known for a long time. The  early
language GraphLog \cite{crpq}, which introduced the ubiquitous notion
of CRPQs,  was shown to capture the transitive closure logic and
thus \nlog\ (over ordered graphs). 
GraphLog's approach is similar to Datalog: CRPQs define new relations, which can then be used to define further CRPQs, much like how idb relations define new idb relations in Datalog.
}

However, Theorem~\ref{main-gql-thm} not only refutes the complexity-based intuition,
but in fact the separating query has an even lower \dlog\
complexity. Indeed, one can traverse the dataless path graph
while maintaining the counter
that needs a logarithmic number of bits.
This limitation highlights a deficiency of the language design:
despite having access to reachability and full power of first-order
logic on top of it, Core GQL and PGQ fall short of a declarative
language that is first-order logic with the reachability
predicate. The reason for this deficiency, 
that should ideally be addressed in future revisions of the standards,
is discussed below in  Section \ref{sec:concl}.

\OMIT{
Our result shows that despite having patterns that possess at least
the power of reachability and more generally RPQs, and having the
power of RA to manipulate their results, we fall short of capturing
even \dlog\ queries. It thus appears that the relational querying of
Core GQL and PGQ is somehow lacking in power. Looking at languages
such as GraphLog, we can actually point out precisely the missing
ingredient, which is {\em compositionality}. In GraphLog, an output of
a query can be treated as a new graph edge and used in subsequent
querying. Read-only GQL and PGQ lack this ability.
Specifically, if we have a query $\query$, we cannot designate its
projection on two columns containing node ids as a new edge that can
be used in subsequent patterns in an $\LRA$ expression.  
}

\OMIT{
\subsubsection*{Datalog}

Let $\mathcal S$ be a schema.
\emph{An atom} over 
$\mathcal{S}$ is an element of the form 
$S(x_1,\ldots, x_k)$ where $S\in\mathcal{S}$, $\arity{S}=k$ and $x_1,\ldots x_k$ is a sequence of variables with no repetitions.  
 \emph{A Datalog program} is a tuple $(\edb,\idb,\dlrules,\dloutrule)$ where $\edb$ and $ \idb$, namely \emph{the IDB and EDB signatures}, respectively, are disjoint schemas, $\dloutrule$ is a designated relation symbol in $\idb$, and $\dlrules$ is a set of \emph{rules} $\rho$ of the form 
\[
\psi\leftarrow \phi_1 , \ldots, \phi_m
\]
where $\psi$ is an atom over $\idb$, and each $\phi_i$ is an atom over $\idb \cup \edb$.
We restrict further the rules by allowing only cases in which each variable in the  \emph{head} $\psi$ of $\rho$ occurs also in its \emph{body}, namely $\phi_1, \ldots, \phi_m$.

\emph{The semantics $\sem{P}_{\db}$ of a Datalog program} $P \df  (\edb,\idb,\dlrules,\dloutrule)$ over a database $\db$ is defined in the standard way~\cite{} as we recall now:
Given a database 
$\db$ over a schema 
$\localsch$, a partial mapping 
$\mu:\Vars \rightarrow \Univ $ \emph{satisfies an atom $\phi(x_1,\ldots,x_k)$ (w.r.t. $\db$)} 
if $\phi \in \localsch$ and 
$(\mu(x_1),\ldots,\mu(x_k))\in \db({\phi})$.
A database $\db'$ satisfies a rule $\rho$ if 
every partial mapping $\mu$ that satisfies each of $\rho$'s body atoms w.r.t. $\db'$, also satisfies its head atom. 
A database $\db'$ satisfies a set $\Phi$ of rules if it satisfies every $\rho\in \Phi$.
We define the semantics $ \sem{P}_{\db} \df \db'$ such that (1) $\db'$ satisfies $\dlrules$ and (2)  every $\db''$ that satisfies $\dlrules$ has the following property: every relation symbol $R \in \idb \cup \edb $, the set $\db'(R) \subseteq \db''(R)$.
We say that $P$ is Boolean if $\arity{\dloutrule} = 0$, and in this case, we say it outputs $\true$ if $\dloutrule() \in\sem{P}_{\db}$.

\paragraph*{Datalog over property graphs}
We view property graphs $
    G \df \langle N, E, \lbl, \src, \tgt, \prop\rangle$ as relational structures $\db_G$  over the relational schema consisting of relation symbols $N, E, \lbl, \src, \tgt, \prop$ with the straightforward interpretation.
}


\OMIT{

\newcommand{\idbname}[1]{\mathsf{#1}}
\begin{example}
The Datalog program defined by the rules: 
    \begin{align*}
        \idbname{eqLen}(x,y,z,w)  
        &\leftarrow 
        E(x,y),E(z,w)\\
        \idbname{eqLen}(x,y,z,w)  
        &\leftarrow
         \idbname{eqLen}(x,y',z,w'), 
        E(y',y),E(w'w)
    \end{align*} 
    extracts from a bounded data-less path bindings $\mu$ of $x,y,z,w$ such that the number of edges between $\mu(x)$ and $\mu(y)$ equals to that between $\mu(z)$ and $\mu(w)$.
\end{example}

\begin{example}
    The Datalog program given by the rules
\begin{align*}
    \idbname{len}_{2^n}(x,y) &\leftarrow 
   E(x,z),E(z,y)
    \\
    \idbname{len}_{2^n}(x,y) &\leftarrow \idbname{len}_{2^n}(x,z), \idbname{len}_{2^n}(w,y),
    \idbname{eqLen}(x,z,w,y)
    \\
    \dloutrule() &\leftarrow \idbname{len}_{2^n}(x,y), \lbl(x,\mathsf{first}),  \lbl(y,\mathsf{last})
\end{align*}
outputs $\true$ if and only if the length of the input bounded data-less path is $2^n$ for some $n\in \mathbb{N}$.
\end{example}

\OMIT{
Similarly to Lemma~\ref{lem:quotient}, we can show that 
\begin{lemma}\label{lem:dlquotient}
For every Datalog program $P$  
and data-less paths $p,p$', if $p\sim p'$ then 
$\sem{P}_{\db_{p'}} = \sem{P}_{\db_p}$.
\end{lemma}
We denote $\db_p$ by the $\db_n$ where $n$ is the length of $p$.
We can view Boolean Datalog Programs as queries on the quotient space $P/\sim$ and 
define 
$$\len(P) \df \{n \mid \sem{P}_{\db_{n}} = \true \}. $$}
With the previous example, we can conclude:
\begin{corollary}\label{cor:dl2pown}
Datalog over property graphs is strictly more expressive than GQL.
\end{corollary}

}%


\OMIT{
In this section we show that 
not only one-way path patterns are restricted but so are multi-path patterns, and GQL queries.

To this end, we focus on restricted inputs, namely paths with no data. 
Formally, we define a \emph{data-less path} $p$ as a path graph
in which 
\begin{itemize}
\item vertices and edges have no data \item
first vertex is labeled  $\textsf{first}$ and
\item last vertex is labeled $\textsf{last}$.
\end{itemize}
We denote the set of all data-less paths by $P$ and 
define an equivalence relation $\sim$ on the set of data-less paths as follows: $p \sim p'$ if and only if $\left|p\right|=\left|p'\right|$. 
We say that a Boolean GQL query $\query$ is \emph{length checking} if it is of the form 
\[
\pi_{\emptyset} (\query, \pat ) 
\]
where $\pat \df (x) \rightarrow^{0..\infty} (y)  \langle  x.\mathsf{first} = \true \wedge  y.\mathsf{last} = \true \rangle $
\begin{lemma}
For every length checking Boolean GQL query $\query$  
and data-less path $p$, if $p'\sim p$ then 
$\sem{\query}_p' = \sem{\query}_{p}$.
\end{lemma}

This allows us to interpret Boolean GQL queries $\query$ as queries on the quotient space of dats-less paths defiend by $\sim$.
We denote by $p_n$ the representative of data-less paths of length $n$ and for every length checking Boolean query $Q$, we define 
\[
\len(\query) \df \{n \mid \sem{\query}_{p_n} = \true \}
\]

We next show that $\len(\query)$ is rather restricted. 
We recall that a subset $A$ of $\mathbb{N}$ is said to be \emph{linear} if there exists $v_0,\ldots,v_m \in \mathbb{N}$ such that $A$ is of the form $\{ v_0 + \Sigma_{i=1}^m k_iv_i \mid k_1,\ldots,k_m\in \mathbb{N} \}$. \emph{A semilinear set} is a finite union of linear sets. With this definition, we can show the following:
\begin{theorem}\label{thm:semilinear}
   For every length-checking Boolean GQL query $\query$, the set $\len({\query})$ is semilinear.  
\end{theorem}

Since the set $\{ 2^n \mid n\in \mathbb{N} \}$ is not semilinear, we can conclude the following. 
\begin{corollary}
There is no length-checking Boolean GQL query $\query$ with 
\[\len({\query}) = \{ 2^n \mid n\in \mathbb{N} \}.\]
\end{corollary}

Nevertheless, we next show that if we use Datalog over multipath patterns, we can express such queries. 
\subsubsection*{Datalog}
\newcommand{\edb}{\mathcal E}
\newcommand{\idb}{\mathcal I}
\newcommand{\dlrules}{\Phi}
\newcommand{\dlrule}{\phi}
\newcommand{\dloutrule}{\mathtt{Out}}
Let $\mathcal S$ be a schema, and let 
$S\in \mathcal{S}$ with arity 
$\arity{S}=k$. \emph{An atom} over 
$\mathcal{S}$ is an element of the form 
$S(x_1,\ldots, x_k)$ where $x_1,\ldots x_k$ is a sequence of variables with no repetitions.  
 \emph{A Datalog program} is a tuple $(\edb,\idb,\dlrules,\dloutrule)$ where $\edb$ and $ \idb$, namely \emph{ the IDB and EDB signatures}, respectively, are disjoint schemas, $\dloutrule$ is a designated relation symbol in $\idb$, and $\dlrules$ is a set of \emph{rules} $\rho$ of the form 
\[
\psi\leftarrow \phi_1 , \ldots, \phi_m
\]
where $\psi$ is an atom over $\idb$, and each $\phi_i$ is an atom over $\idb \cup \edb$.
We restrict further the rules by allowing only cases in which each variable in the  \emph{head} $\psi$ of $\rho$ occurs also in its \emph{body}, namely $\phi_1, \ldots, \phi_m$.

\emph{The semantics $\sem{P}_{\db}$ of a Datalog program} $P \df  (\edb,\idb,\dlrules,\dloutrule)$ over a database $\db$ is defined in the standard way~\cite{} as we recall now.

Given a database 
$\db$ over a schema 
$\localsch$, a partial mapping 
$\mu:\Vars \rightarrow \Univ $ \emph{satisfies an atom $\phi(x_1,\ldots,x_k)$ (w.r.t. $\db$)} 
if $\phi \in \localsch$ and 
$(\mu(x_1),\ldots,\mu(x_k))\in \db{\phi}$.
A database $\db'$ satisfies a rule $\rho$ if 
every partial mapping $\mu$ that satisfies each of $\rho$'s body atoms w.r.t. $\db'$, also satisfies its head atom. 
A database $\db'$ satisfies a set $\Phi$ of rules if it satisfies every $\rho\in \Phi$.

For a Datalog program $P\df(\edb,\idb,\dlrules,\dloutrule)$, we define $\db' \df \sem{P}_{\db}$ such that (1) $\db'$ satisfies all the rules in $\dlrules$ and (2) for every $\db''$ that satisfies $\dlrules$ it holds that for every relation symbol $R \in \idb \cup \edb $, the set $\db'(R) \subseteq \db''(R)$.

\paragraph*{Datalog over graph databases}
To define the semantics of a  Datalog program over a graph database $\gdb$ we need to assume that each  $R\in \edb$ is associated with a multi-path pattern $\multipat_{R}$. We can then define
\emph{the semantics of $P$ over $\gdb$} by 
$\sem{P}_{\gdb} \df \sem{P}_{\db}$ where $\db$ is the relational database over $\edb$ defined by $\db(R) \df \{\mu \mid  \exists p: (p,\mu) \in \sem{\multipat_R}_{\gdb}\}$. 

\newcommand{\idbname}[1]{\mathsf{#1}}
\begin{example}
The following Datalog program given by the rules: 
    \begin{align*}
        \idbname{eqLen}(x,y,z,w)  
        &\leftarrow 
        &\left[(x)\rightarrow (y) \right],\left[ (z) \rightarrow (w)\right]\\
        \idbname{eqLen}(x,y,z,w)  
        &\leftarrow
        & \idbname{eqLen}(x,y',z,w'), 
        \\&&
         \left[(x) \rightarrow^{*} (y')\rightarrow (y)\right],
         \\&&\left[ (z) \rightarrow^{*} (w')\rightarrow (w )\right]       
    \end{align*} 
    extracts from a data-less path $p$ bindings $\mu$ of $x,y,z,w$ to node identifiers along $p$ such that the number of nodes between $\mu(x)$ and $\mu(y)$ is similar to that between $\mu(z)$ and $\mu(w)$.
\end{example}

\begin{example}
    The program
\begin{align*}
    \idbname{len}_{2^n}(x,y) &\leftarrow 
    [(x)\rightarrow() \rightarrow (y)]
    \\
    \idbname{len}_{2^n}(x,y) &\leftarrow \idbname{len}_{2^n}(x,z), \idbname{len}_{2^n}(w,y),
    \idbname{eqLen}(x,z,w,y)
    \\
    \dloutrule() &\leftarrow \idbname{len}_{2^n}(x,y), [(x) \rightarrow^{*}(y)\langle x.\mathsf{first} = \true \wedge x.\mathsf{last} = \true  \rangle ]
\end{align*}
output $\true$ if and only if the length of a data-less path $p$ is $2^n$ for some $n\in \mathbb{N}$.
\end{example}
The previous example enables us to conclude:
\begin{corollary}
    There is a Datalog program $P$ such that for every data-less path $p$ the following equivalence holds: $\sem{P}_p = \true$ if and only if 
    $\len(p ) \in \{2^n \mid n\in \mathbb{N} \}$.
\end{corollary}

\begin{figure}
    \centering
    \begin{tabular}{|l|c|c|c|c|}
        \hline
         & \multicolumn{2}{|c|}{Nodes} & \multicolumn{2}{|c|}{Edges} \\ \hline
         & one-way & full & one-way & full \\ \hline
        Local (=global) $Q_{=}$ & $\expressible$ &  $\expressible$ & $\notexpressible^{?}$ & $\notexpressible^{?}$ \\ \hline 
        Local (=global) $Q_{<}$ & $\expressible$ & $\expressible$ & $\notexpressible^{?}$ & $\notexpressible^{?}$ \\ \hline 
        Local $Q_\neq$ & $\expressible$ & $\expressible$ & $\notexpressible^{?}$ & $\notexpressible^{?}$ \\ \hline
        Global $Q_\neq$ & $\notexpressible$ & $\notexpressible^{*}$ & $\notexpressible^{*}$ & $\notexpressible^{*}$ \\ \hline
    \end{tabular}
    \caption{Overview of GQL/PGQ expressive power}
    {\small $^{*}$ Conjecture}
    \label{fig:gql-expressive-power}
\end{figure}

\subsection{Limitation of Cypher - SHOULD BE MOVED TO THE RIGHT PLACE DEPENDING ON THE RESULT}

Theorem: Cypher cannot express (aa)*
Proof: On Leonid's iPad
}

\section{Limitations of Patterns: an experimental investigation}\label{sec:experiments}

\begin{figure*}
	\centering
	\begin{subfigure}{0.3\textwidth}
		\includegraphics[width=\textwidth]{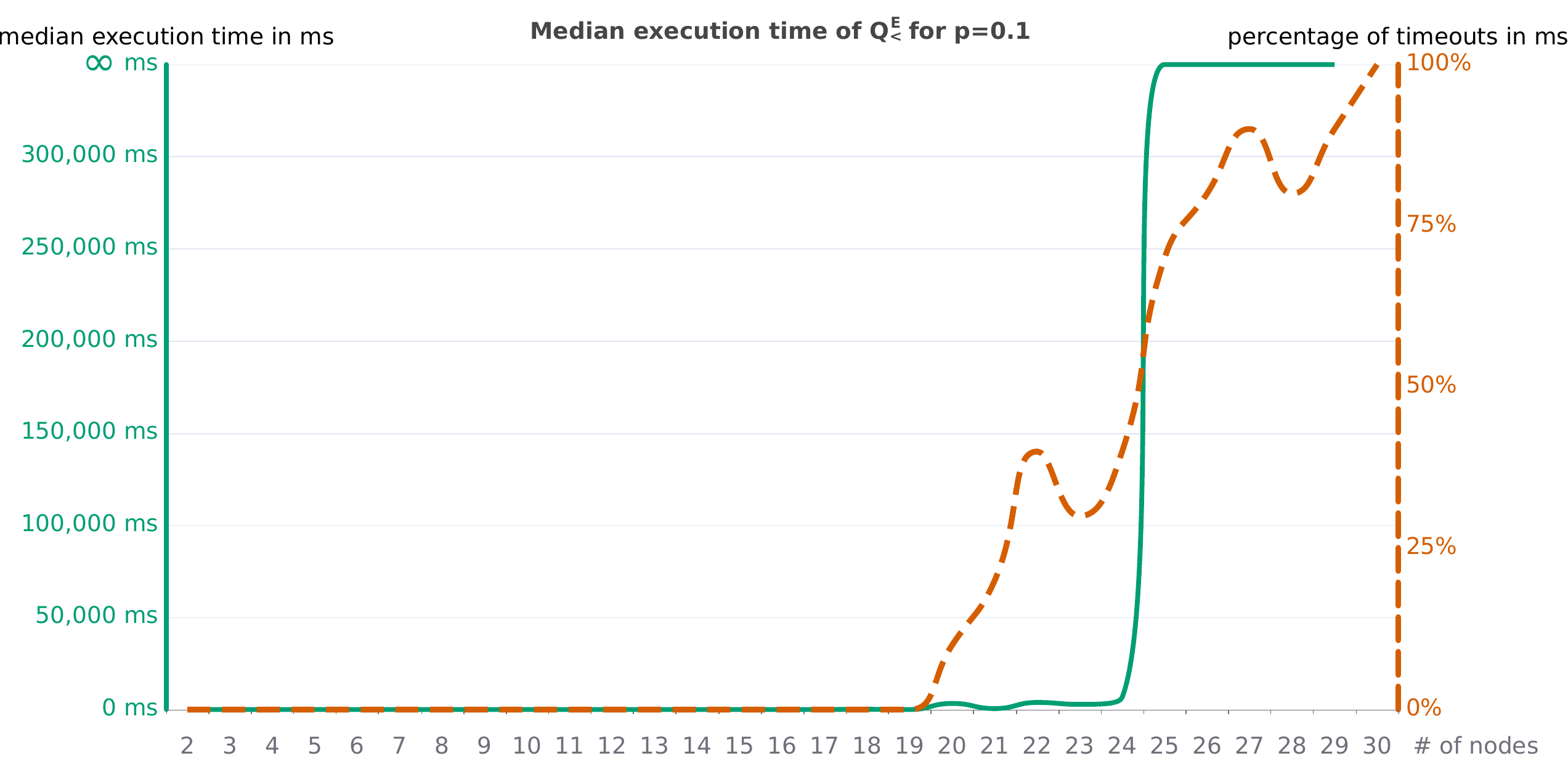}
		\caption{$p=0.1$}
		\label{neo4j-test-0.1}
	\end{subfigure}
	\begin{subfigure}{0.3\textwidth}
		\includegraphics[width=\textwidth]{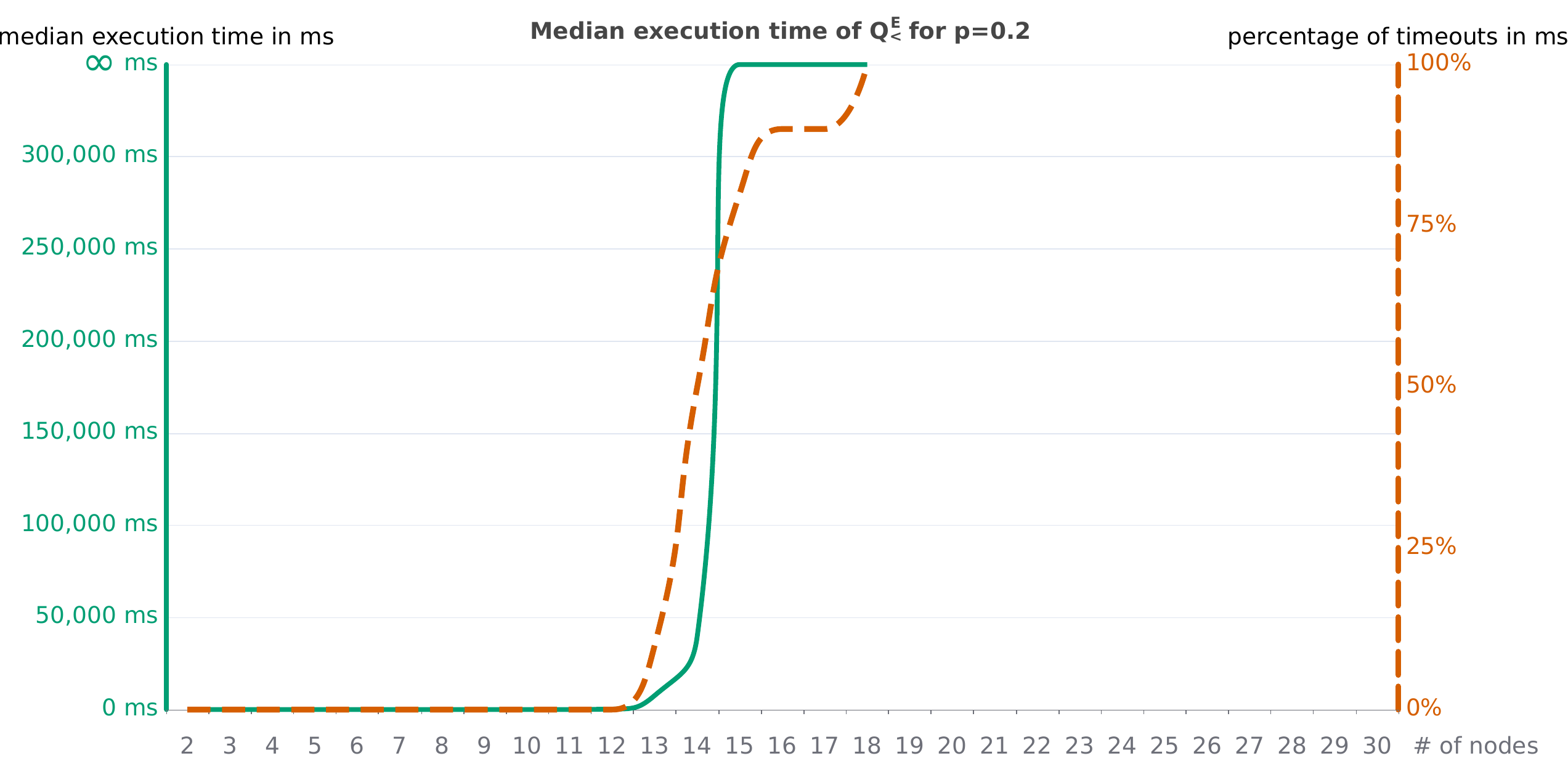}
		\caption{$p=0.2$}
		\label{neo4j-test-0.2}
	\end{subfigure}
	\begin{subfigure}{0.3\textwidth}
		\includegraphics[width=\textwidth]{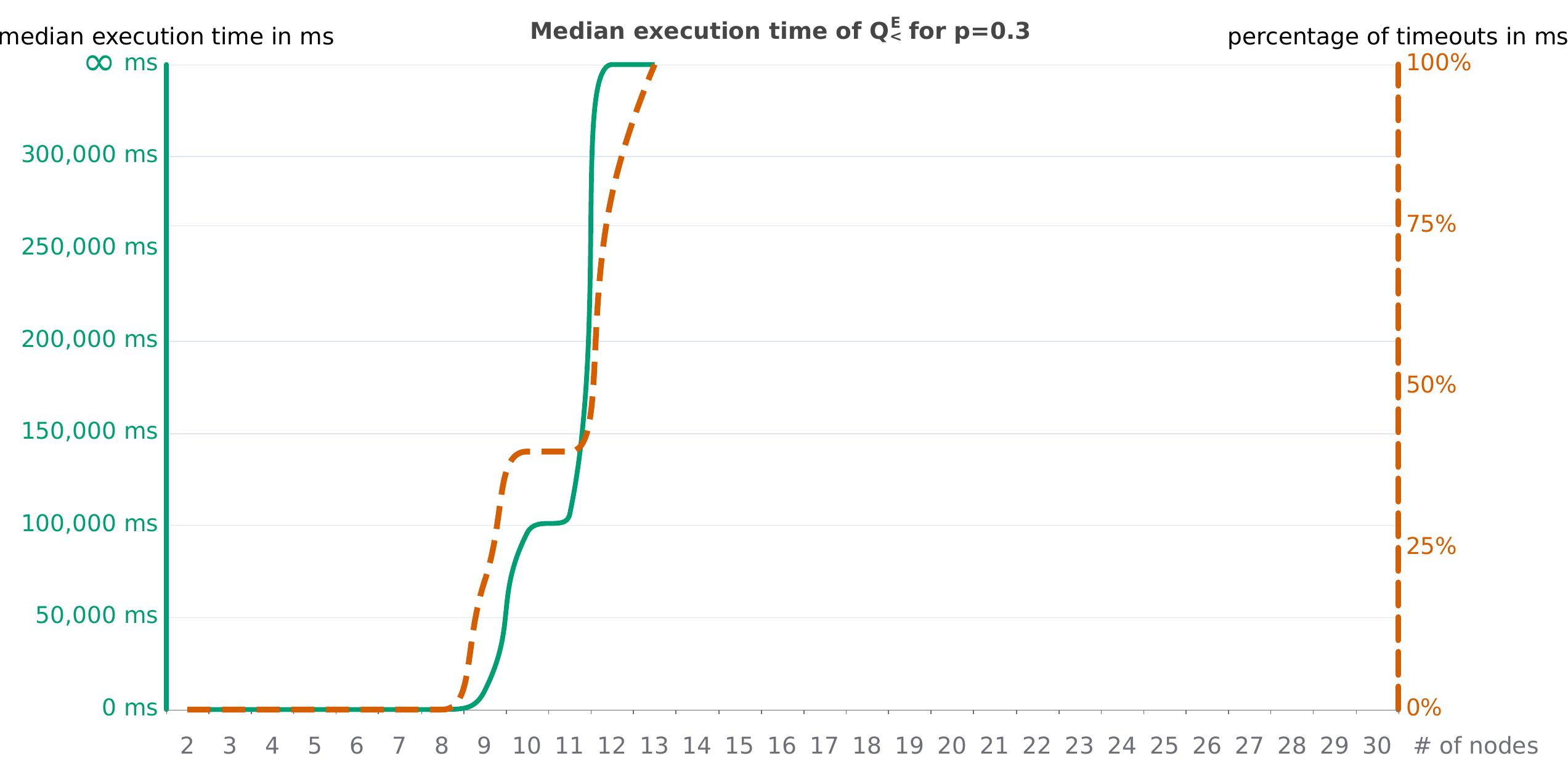}
		\caption{$p=0.3$}
		\label{neo4j-test-0.3}
	\end{subfigure}
	\begin{subfigure}{0.3\textwidth}
		\includegraphics[width=\textwidth]{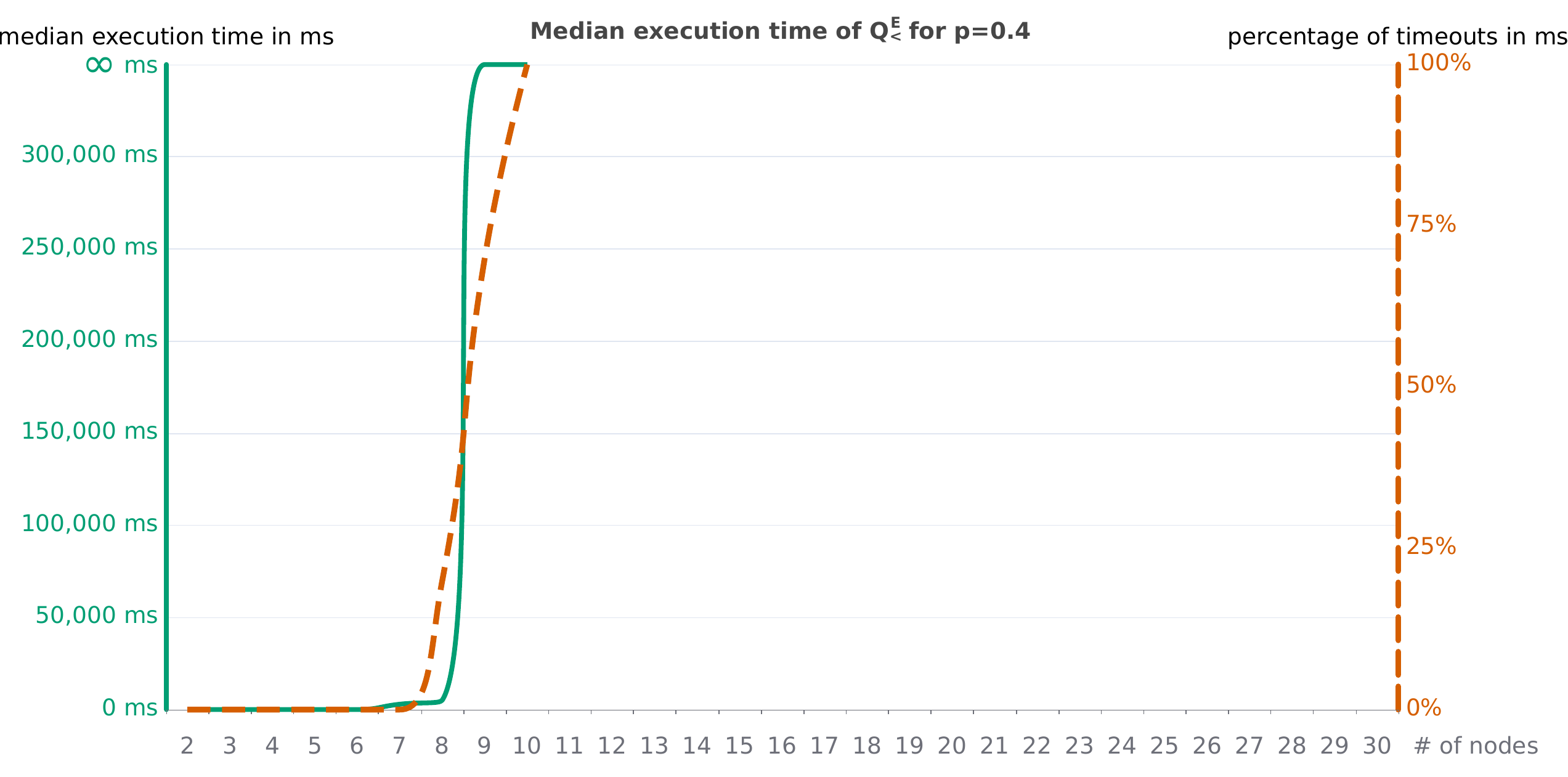}
		\caption{$p=0.4$}
		\label{neo4j-test-0.4}
	\end{subfigure}
	\begin{subfigure}{0.3\textwidth}
		\includegraphics[width=\textwidth]{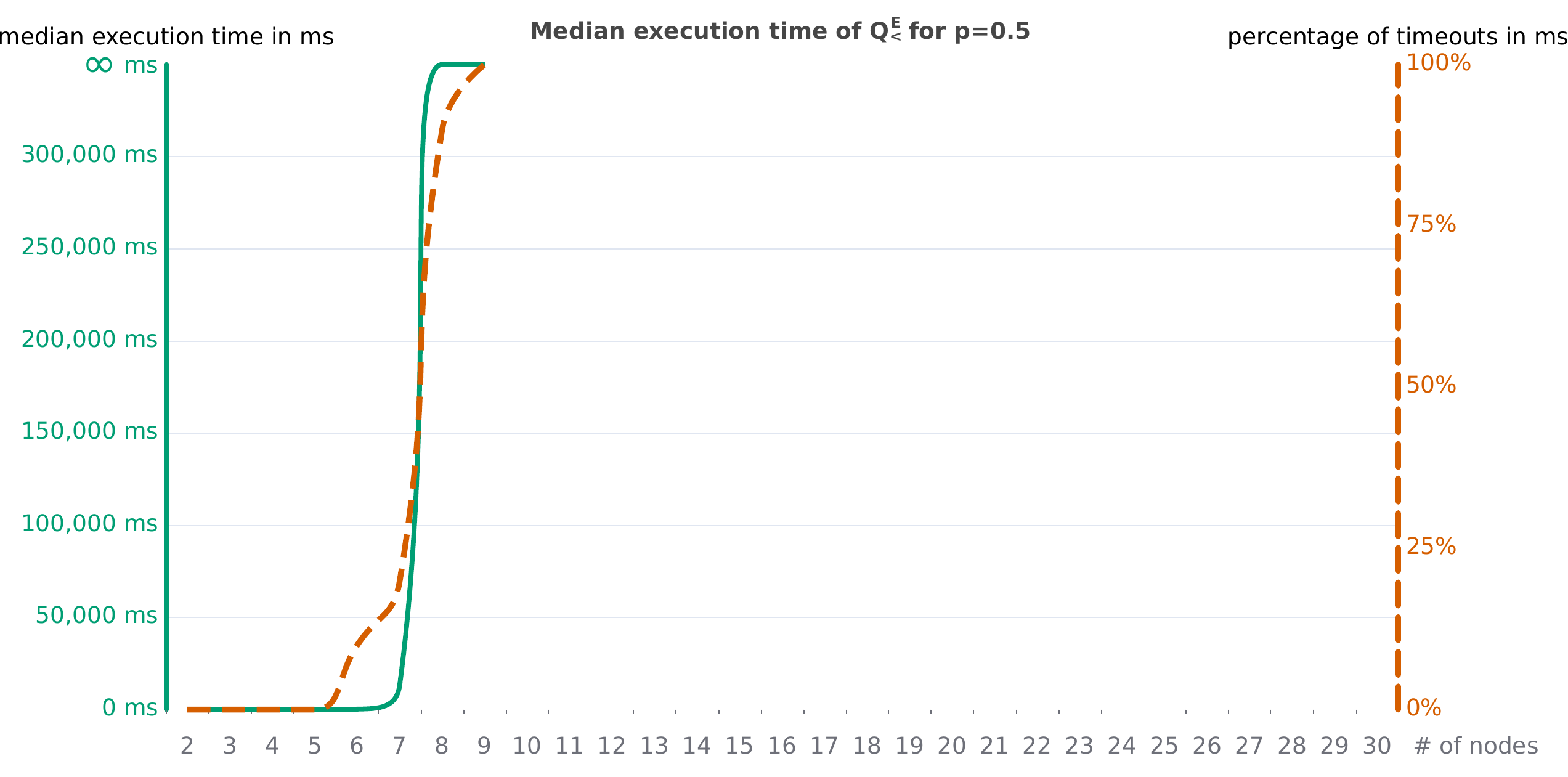}
		\caption{$p=0.5$}
		\label{neo4j-test-0.5}
	\end{subfigure}
	\caption{Timeouts and median running time of Neo4j for $\qEdgeLessThan$}
	\label{neo4j-tests}
\end{figure*}

\new{
We have shown that many queries tracing changes in property values of
edges cannot be expressed in Core GQL and PGQ, among them the simple
query $\qEdgeLessThan$ that generated significant interest in the GQL
standardization committee of ISO \cite{tobias,fred}.
Of course
real-life languages have more expressiveness than their theoretical
counterparts, and
thus real-life GQL, SQL/PGQ, and also Cypher can express this query.\
 However, they do so in a rather
convoluted way. We now show experimentally that this way of expressing
simple graph queries has no realistic
chance to work, as it generates enormous (exponential size)
intermediate results, and the query would not terminate even on {\em
	tiny graphs}.

The idea of expressing $\qEdgeLessThan$ is that its {\em complement} is easily
definable in Core GQL. In  GQL,  PGQ, and Cypher, 
paths can be named and
output.
This is an advanced feature that we omitted in our
core language: since 
paths are represented as lists, it results in non-flat outputs.
\OMIT{
	This gives a backdoor to highly inefficient expressivity of
	the queries studied in this section. The key observation is that their
	{\em complements} are expressible. For example, the complement of
	$\qdisjnodes$ is the query that finds endpoints of paths on which two
	values of property $k$ are equal. This is very easy to express by a pattern
	$\pat_{=}$, and the entire query can be formulated as the
	difference of $\big(p\df ((x) {\rightarrow} (z))^{1..\infty}\big)_G
	- \big(p\df\pat_{=}\big)_G$ which results in all the paths in a graph
	satisfying $\qdisjnodes$. Note that the first subquery simply returns
	{\em all} paths, making it highly inefficient.
}
}
\new{
The complement of  $\qEdgeLessThan$ is expressed by the pattern
$\pat^\neg$:}
\begin{gql}
(v1)->*(-[x]-> -[y]-> WHERE x.k>=y.k)->*(v2)
\end{gql}
\new{
testing for a pair of consecutive edges that break the increasing
motif; it can also be expressed by CoreGQL as
$$\pat^\neg \df (v_1) \to^{0..\infty} \big(\ () \overset{x}{\rightarrow}
() \overset{y}{\rightarrow} ()\ \big)\langle x.k \geq y.k\rangle \to^{0..\infty}(v_2)\,.$$
\OMIT{contains patterns that have a subpattern $\big(\ () \overset{x}{\rightarrow}
	() \overset{y}{\rightarrow} ()\ \big)\langle x.k \geq y.k\rangle$
	which
}
Thus, the query below
}
\begin{gql}[escapeinside={(*}{*)}]
  MATCH p = (v1) ->* (v2) RETURN v1, v2, p
  EXCEPT
  MATCH p=(*$\phi^\neg$*)  RETURN v1, v2, p
\end{gql}  
\new{finds all nodes $v_1, v_2$ and path $p$ between them satisfying
$\qEdgeLessThan$.
}
The mere fact of expressing something does not yet mean it will work
-- for example, despite SQL having the capability to simulate Turing
machines, we do not expect it to perform well with complex graph
algorithms. Likewise here, the first subquery enumerates paths between
two different nodes, and even the number of simple paths between two
nodes in a graph can grow as fast as $O(n!/n^2)$.

We now show experimentally that queries of this kind have no chance to work even on
very small graphs. A small obstacle is that there is not yet any
available implementation of GQL, and Cypher, the closest language,
chose not to have $\sqlkw{EXCEPT}$. However, there
is a way around it in Cypher by using list functions that can detect
the violation of the ``value in edges increases'' condition: 

\begin{gql}
	MATCH p=()-[*2..]->()
	WITH p, reduce(acc=relationships(p)[0].val, 
	v in relationships(p) | 
    CASE  WHEN acc=-1 THEN -1 
    WHEN v.val>=acc THEN v.val ELSE -1	END) AS inc
	WHERE NOT inc = -1 RETURN p
\end{gql}

We then tested its median running time in Neo4j\footnote{The testing
program is written in Go and communicates with Neo4j (v5.18.1) via the
Neo4j Go driver. All tests were executed on a machine with the
following configuration: 16 Intel i7-10700 @ 2.90GHz CPUs, 16GB RAM,
Ubuntu 22.04.3 LTS}, as well as the percentage of queries that time
out (the timeout is set at 300 seconds). When more than 50\% of
queries time out, the median is shown as $\infty$ms. Otherwise the
running time is computed as the median over 10 different graphs, with
1 run per graph (and an additional run 0 to
generate the appropriate indices).

The graphs on which we tested the query are the random graphs
$G(n,p)$ \cite{bollobas} on $n$ nodes where an edge exists between two
nodes with probability $p$. We considered  values $p$ between 0.1 and
0.5, with the step 0.1. As for data values in edges, they are also
randomly generated, between 0 and 100.
The reason we used this simple model of synthetic property graphs is
that it very convincingly demonstrates that the above implementation
of  $\qEdgeLessThan$ has no chance to work in practice. Even with the
smallest probability of 0.1, with a mere 24 nodes timeout was observed
in more than 50\% of all cases, and with just 30 nodes in {\em all}
cases. As the probability $p$ increases (meaning that there are more
edges in the graph, and thus the number of paths increases), the
cutoff for everything-times-out dropped to fewer than 10 nodes! 

\new{
We add two notes here. Queries $\qdisjnodes$ and $\qdisjedges$ can be similarly expressed, with the bottleneck query matching all paths being identical to the above, hence resulting in a very similar behavior. But behavior is {\em not} observed with  $\qNodeLessThan$ as it can be  
expressed directly with Cypher and GQL pattern matching, without having to find an exponential number of paths.
}

\new{
\begin{figure*}
\new{    \begin{tabular}{l|ccc|ccc|ccc|}
     & \multicolumn{3}{c}{Neo4j} & \multicolumn{3}{|c|}{Memgraph} & \multicolumn{3}{c|}{DuckDB} \\ 
     & p=0.1 & p=0.3 & p=0.5 & p=0.1 & p=0.3 & p=0.5 & p=0.1 & p=0.3 & p=0.5 \\ 
     \# nodes at 50\% timeout & 24 & 12 & 7 & 26 & 12 & 8 & N/A & N/A & N/A \\ 
     \# nodes at 100\% timeout / OOM & 30 & 13 & 8 & 28 & 13 & 9 & 164 & 131 & 128 \\
\end{tabular} }
\caption{Performance comparison of Neo4j, Memgraph, and DuckDB}
\label{comparison-fig}
\end{figure*}

}

\new{

\paragraph{Other systems}

Even though it appears that bypassing expressivity bounds by
generating exponentially many queries should not have a chance to
work (as our Neo4j tests confirm), we wanted to completely
exclude the possibility that this behavior could be due to one
specific implementation. Thus, to confirm the above results, we ran
the same tests on two other systems:
\href{https://memgraph.com}{Memgraph}, a graph-only database that uses
Cypher as its query language, and \href{https://duckdb.org}{DuckDB}, a
relational database that implements SQL/PGQ as an extension~\cite{duckpgq}.

The results confirm that the problem is the way the query is written
rather than a particular implementation. In
fact, the limits of Neo4j and Memgraph are almost identical 
as shown in Fig.~\ref{comparison-fig} which reports numbers of nodes at which 50\% and 100\% of runs timeout. 
\OMIT{: the first
system reaches 50\% timeout on 24 nodes for $p=0.1$, 12 nodes for
$p=0.3$, and 7 nodes for $p=0.1$, while the second reaches 50\%
timeout on 26 nodes for $p=0.1$, 12 nodes for $p=0.3$, and 8 nodes for
$p=0.1$. This similarity is also maintained for 100\%
timeout.
}
However, it must be noted that for the configurations that do
not timeout, the performance of Memgraph appears to be much more
efficient than that of Neo4j, with almost all test cases taking less
than 1 ms. This might be explained by the fact that the
timing procedures are different for the two systems as Memgraph does
not make query execution time available to the driver.

The performance of DuckDB was tested using the SQL/PGQ query below;
it first finds the shortest paths in the graph using the PGQ
pattern matching syntax (the inner query), then checks, in SQL, that
the edge weights appear in sorted order (the outer query).}

\begin{sql}
WITH q1 AS (SELECT *, unnest(path_edges) AS e_id
    FROM GRAPH_TABLE (testgraph
      MATCH 
      p = ANY SHORTEST (n1:N)-[e:E]->{2,}(n2:N)
      COLUMNS (edges(p) AS path_edges) ),
    LATERAL (SELECT edges.weight, 
                    edges.row_id FROM edges) ) 
SELECT path_edges, array_agg(weight) AS weights 
FROM q1 WHERE e_id=row_id GROUP BY path_edges
HAVING ARRAY_AGG(weight) = 
       ARRAY_SORT(ARRAY_AGG(weight));    
\end{sql}

\new{

While the results for DuckDB appear to be better than for native graph systems (it can handle 164 nodes with the lowest probability  $p=0.1$ before running out of memory), there is a simple explanation for it:
the difference in {\em path semantics}. The only path
semantics available for repeated patterns in DuckDB is
\textit{shortest path}, 
whereas the only one available in Neo4j and Memgraph is
\textit{trail}, which matches paths that do not go through the same edge twice. In most graphs there are significantly more trails than
shortest paths between any given nodes, hence the number of candidate
paths to be checked is much higher for Neo4j and Memgraph than for  DuckDB. Even with this, however, the way of
bypassing expressivity bounds by generating a large number of paths
can only handle very small graphs, not even reaching 200 nodes. 

}

\section{What we learned about language design problems}\label{sec:concl}
\new{A common way of enhancing the capabilities of query languages is:
(1) an initial design is produced; (2) user-demanded queries that are
not expressible in this initial design are identified; (3) their
inexpressibility is formally confirmed; (4) language deficiencies that
led to (3) are identified; (5) these deficiencies are
fixed. Note that (5) is often an iterative process that involves a deep
analysis of possible language design enhancements. Returning to
the example of recursive queries in SQL, note that while it was quite
clear that programmers cannot write queries such as transitive
closure, it took a significant effort to confirm this formally.
Subsequently hundreds of papers analyzed the expressiveness and
complexity of datalog and fixed-point extensions of relational
calculus, until the linear datalog approach was adopted by the SQL standard.

Where does this workflow put us with respect to the new
graph standards, SQL/PGQ and GQL? Prior to this paper we were at stage
(2); the results of this paper complete step (3) and put us at the
beginning of stage (4). Thus, in this last section we outline further
developments as we envision them now: what language deficiencies cause
limited expressiveness, and how they 
could be addressed. 

The main limitation of current graph query languages is that they {\em lack
compositionality}. They are essentially
graphs-to-relations languages, as opposed to graph-to-graph languages (though some ad hoc functionalities exist for
viewing relational outputs into graphs, such as the Neo4j browser  or its data science library
facilities  \cite{HodlerN22} that can extract graphs from relations  for analytics tasks; also the study of graph views is in its
infancy \cite{ives-sigmod24}). 
The lack of compositionality also manifests itself in the information flow in graph query languages. It happens in one
direction, from graphs to relations: patterns turn
graphs into relations, but there are no natural ways of going in the
other direction, and use results of relational operations of either $\RA$
or $\LRA$ to create new graphs. Of course real-life languages with
their extended functionalities provide such ways (e.g., via libraries
as mentioned above, or by writing data and using it to create new
graph views in PGQ) -- but the main point is that these lie {\em
outside} the realm of a declarative query language.

With the understanding of causes of limitations, the focus needs to be
shifted to ways to remedy those, i.e., step (5) above. If the story
of SQL is any indication, this will require a considerable research
effort. Still, we can outline what we think are possible approaches to
fixing the non-compositionality issues.

\paragraph{Pattern matching restricted to matched paths}
While graph-to-graph languages are the ultimate long-term goal, in the
short term some of non-compositionality can be fixed by letting
pattern matching  operate on previously matched patterns. We explain
this idea by the query $\qEdgeLessThan$. To find such paths from
$\ell_1$-labeled nodes to $\ell_2$-labeled nodes, we can first 
match paths from $\ell_1$-nodes to $\ell_2$-nodes and then exclude
those where the ``increasing value in edges'' condition is
violated. Specifically, we start with the pattern
$\pat \df (x) \to^{1..\infty}
(y) \langle{\ell_1(x) \wedge \ell_2(y)}\rangle$ and then create a new
pattern
\begin{equation}
\label{eq-for-concl}
\pat \ | \ \neg\exists(\overset{u}{\to} \ () \ \overset{v}{\to}\langle
u.k \geq v.k\rangle)
\end{equation}
The pipe operator $|$ means that the pattern $\overset{u}{\to} () \overset{v}{\to}\langle
u.k \geq v.k\rangle$ is evaluated on the result of the match of
pattern $\pat$, and the condition $\neg\exists$ says that there are no
matches for it. This will ensure that in the match for $\pat$ there
are no consecutive edges for which the value of their property does
not increase. 

\paragraph{Constructing graph elements from tables} 
In current graph languages the flow of information is in the direction
from graph to constructed relations. What if we could reverse it as
well and construct new graph elements from relations? As an example of
this, suppose we could construct new nodes with label \emph{old\_edge}
whose ids are edge ids in a graph $G$, and a new edge with a
lable \emph{new\_edge} which connects two such nodes coming from edges
$e_1$ and $e_2$ of $G$ if the target of $e_1$ is the source of
$e_2$. Then $$(x) \big( (u) \overset{e}{\to} (v)\langle
u.k<v.k \wedge \textit{new\_edge}(e)\rangle \big) (y)$$
(i.e., $\qNodeLessThan$ on the constructed graph) expresses
$\qEdgeLessThan$ on $G$. This powerful idea was already present in an
early theoretical language of \cite{crpq} but never properly explored
in the context of practical graph languages. 

\smallskip
Language design is a delicate process
in which the right balance must be struck between expressivity and
complexity. It is not  simple ``add these features''; 
much new research is required as they come with complexity
consequences. To give one example, in (\ref{eq-for-concl}) replace
$\overset{u}{\to} () \overset{v}{\to}\langle
u.k \geq v.k\rangle$ with $(u) \to^{1..\infty} (v)\langle u.k = v.k\rangle$. This
results in 
query $\qdisjnodes$ known to be
NP-hard in data complexity \cite{gxpath-jacm}. Thus, such extensions
are very sensitive to small syntactic changes. This however should
simply be viewed as an invitation to start a research investigation
into extensions of GQL and PGQ that allow desirable queries
without unmanageable computational overhead. Once again, the past
history of SQL tells us that such research could be very productive
and have a great influence on the language. 

}
\OMIT{

Such a functionality would have
helped us with the {\em increasing-values-in-edges}
queries. Specifically, if we could view the output of pattern matching
as a graph itself, we could also define a downstream pattern matching
task restricted to that graph. Then, with subqueries, one can check
that a match contradicting the query does not exist.

\smallskip

The key takeaway from our expressivity results is that we need to
focus on increasing the power of GQL. Similarly to inexpressibility
results in theoretical models underlying SQL that led to additions to the
standard (e.g., recursive queries), we must use our results and any
future findings enabled by our formalization of the language, to
drive the search for extensions. Using the SQL analogy, it is not about
adding specific queries to the language but rather incorporating the
{\em missing constructs and features} that currently prevent these queries from
being expressed.

Hence, our next step will be the search for such missing constructs in
GQL and PGQ. We conclude the paper by stating where we believe such
search should take place.

Current graph query languages, by and large, {\em lack
compositionality}. To start with, they are essentially
graphs-to-relations languages. Some ad hoc functionalities exist for
turning relational outputs into graphs (e.g., the browser of
Neo4j, or its data science library
functions \cite{HodlerN22} that can turn relational outputs into
graphs for analytics tasks), and the study of graph views is in its
infancy \cite{ives-sigmod24}. However, there is no systematic way of viewing
outputs of graph queries as graphs. Such a functionality would have
helped us with the {\em increasing-values-in-edges}
queries. Specifically, if we could view the output of pattern matching
as a graph itself, we could also define a downstream pattern matching
task restricted to that graph. Then, with subqueries, one can check
that a match contradicting the query does not exist.

The failure to capture linear Datalog and recursive SQL is also due to
the lack of compositionality, but of slightly different kind. Note
that GQL and PGQ consist of essentially two separate sublanguages:
pattern matching provides tables for a relational language to
manipulate with. But there is no way back from relational language to
pattern matching. Specifically, relational queries cannot naturally
define new graph elements that can be used in downstream pattern
matching. Even though such functionality is already present in the
language GraphLog from 1990, GQL and PGQ in their core lack this
ability. If complete interoperability between pattern matching and
relational querying existed,  we could have emulated first-order
reductions in the language, and using reachability we would have
expressed all \nlog\ and linear Datalog queries.

Thus, our main task for future work is to look for new primitive that
provide full compositionality for graph languages, in terms of both
graph outputs and two-way interoperability of pattern matching and
relational querying.

}

\section*{Acknowledgments}
This research was supported by ANR-21-CE48-0015 Project VeriGraph, a grant from RelationalAI to IRIF, Poland's National Science Centre grant 2018/30/E/ST6/00042, and 
Israel Science Foundation grant 2355/24. 

\OMIT{
\section*{Acknowledgments}
This research was supported by ANR-21-CE48-0015 Project VeriGraph (Leonid Libkin), a grant from RelationalAI to IRIF, Poland's National Science Centre grant 2018/30/E/ST6/00042 (Alexandra Rogova) and 
Israel Science Foundation 2355/24 (Liat Peterfreund).
}

\bibliographystyle{abbrv}
\balance
\bibliography{references}

\newpage
\onecolumn
\appendix
\section{Proof of theorem \ref{prop:SLRA-vs-LRA}}

At first, observe that simple linear clauses (without
$\{\query\}$) can only express conjunctive queries, as their semantics
applies operations $\join, \pi, \sigma$ and renaming to base
relations. Hence, queries of sLCRA are Boolean combinations of
conjunctive queries, known as BCCQs. While it appears to be folklore
that BCCQs are strictly contained in first-order logic, we were unable
to find the simple proof explicitly stated in the literature, hence we offer ohe
here.

Consider a vocabulary of a single unary predicate $U$ and databases
$\db_1$ and $\db_2$ such that  $\db_1(U) = \{a_1\}$ and
$\db_2(U) = \{a_1,a_2\}$ for two different constants $a_1$ and $a_2$. Since
$\db_1$ and $\db_2$ are homomorphically equivalent, they agree on all
conjunctive queries, and therefore on all BCCQs, but they do not agree
on first-order (and hence $\RA$) query that checks if relation $U$ has
exactly one element.

\section{Proof of theorem \ref{thm:ravslra}}

We first prove that $\RA(\localsch)$ is subsumed by $\LRA(\localsch)$.
Let $Q$ be a query in $\RA(\localsch)$. We show that there exists an equivalent query $Q^{\LRA}$ in $\LRA(\localsch)$, i.e. such that for every database $\db$ over $\localsch$ it holds that $\sem{Q}_{\db} = \sem{Q_L}_{\db}$ by induction on the structure of $Q$. 
\begin{itemize}
    \item (base case) If $Q = R$ then $Q_L = S$ where $S = \{R\}$
    \item If $Q = \pat_{\bar{A}} (Q')$ and $Q'_L$ is the LRA query equivalent to $Q'$ then $Q_L = Q'_L~\pi_{\bar{A}}$
    \item If $Q = \sigma_{\theta}(Q')$ and $Q'_L$ is the LRA query equivalent to $Q'$ then $Q_L = Q'_L ~ \sigma_{\theta}$
    \item If $Q = \rho_{\bar{A} \rightarrow \bar {B}}(Q')$ and and $Q'_L$ is the LRA query equivalent to $Q'$ then $Q_L = Q'_L~\rho_{\bar{A} \rightarrow \bar {B}}$
    \item[] For the three following cases, $Q'$ (resp. $Q''$) is systematically translated as $\{Q'_L\}$ (resp. $\{Q_L''\}$) 
    \item If $Q = Q' \cup Q''$ and $Q'_L$ (resp. $Q''_L$) is the LRA query equivalent to $Q'$ (resp. $Q''$) then $Q_L = \{Q'_L\} \cup \{Q''_L\}$
    \item If $Q = Q' \setminus Q''$ and $Q'_L$ (resp. $Q''_L$) is the LRA query equivalent to $Q'$ (resp. $Q''$) then $Q_L = \{Q'_L\} \setminus \{Q''_L\}$
    \item If $Q = Q' \join Q''$ and $Q'_L$ (resp. $Q''_L$) is the LRA query equivalent to $Q'$ (resp. $Q''$) then $Q_L = \{Q'_L\} \{Q''_L\}$ as, by definition of RA, it must be that $\attr{Q'} \cap \attr{Q''} = \emptyset$
\end{itemize}

We now prove that $\LRA(\localsch)$ is subsumed by $\RA(\localsch)$. Let $Q$ be a query in $\LRA(\localsch)$. We show that there exists a query $Q^{RA} \in \RA(\localsch)$ such that for every database $\db$ over $\localsch$ it holds that $\sem{Q}_{\db} = \sem{Q^{RA}}_{\db}$ by induction on the structure of $Q$.

We start with the linear clauses. Since their structure is not tree-shaped, but indeed \textit{linear}, the query must be parsed from left to right, while the equivalent $\RA$ query will be built bottom-up. Let $C = c_0, \ldots, c_n$ be a linear clause in $\LRA(\localsch)$. We build the induction on the number of clauses \mbox{of $C$.} 

\begin{itemize}
    \item (base case) If $n=1$ and $C$ is an instance of the $S$ rule, i.e. $C = R$ for some $R$ then the $\RA$ query equivalent to $C$ is $R$
    \item (base case) If $n=1$ and $C$ is one of $\pi_{\bar{A}}, \sigma_{\theta}$ or $\rho_{\bar{A} \rightarrow \bar {B}}$ then the $\RA$ query equivalent to $C$ is $I_{\emptyset}$ (the empty tuple relation)
    \item (base case) If $n=1$ and is $C$ is an instance of the $\{Q\}$ rule, then the $\RA$ query equivalent to $C$ is the one equivalent to $Q$ (see induction for queries below)
    \item[] For the inductive cases, let $C_i^{RA}$ be the $\RA$ query such that $\sem{c_0, \ldots, c_i} = \sem{C_i^{RA}}$
    \item If $n > 1$ and $c_{i+1}$ is an instance of the $S$ rule, i.e. $c_{i+1} = R$, then $C_{i+1}^{RA} = C_i^{RA} \join R$ 
    \item If $n > 1$ and $c_{i+1} = \pi_{\bar{A}}$, then $C_{i+1}^{RA} = \pi_{\bar{A}} (C_i^{RA})$
    \item If $n > 1$ and $c_{i+1} = \sigma_{\theta}$, then $C_{i+1}^{RA} = \sigma_{\theta} (C_i^{RA})$
    \item If $n > 1$ and $c_{i+1} = \rho_{\bar{A} \rightarrow \bar{B}}$, then $C_{i+1}^{RA} = \rho_{\bar{A} \rightarrow \bar{B}} (C_i^{RA})$
    \item If $n > 1$, $c_{i+1} = \{Q\}$ and $C_Q^{RA}$ is the $\RA$ query equivalent to $Q$, then $C_{i+1} = C_i^{RA} \join C_Q^{RA}$
\end{itemize}

We now treat the queries. Let $Q$ be an $\LRA$ query. We build the induction on the structure of $Q$. 

\begin{itemize}
    \item (base case) If $Q = L$ then the $\RA$ query equivalent to $Q$ is the one equivalent to $L$ (see above for induction on linear clauses).
    \item If $Q = Q_1 \cap Q_2$ and $Q_1^{RA}$ (resp. $Q_2^{RA}$) is the $\RA$ query equivalent to $Q_1$ (resp. $Q_2$) then the $\RA$ query equivalent to $Q$ is $Q_1^{RA} \cap Q_2^{RA}$
    \item If $Q = Q_1 \cup Q_2$ and $Q_1^{RA}$ (resp. $Q_2^{RA}$) is the $\RA$ query equivalent to $Q_1$ (resp. $Q_2$) then the $\RA$ query equivalent to $Q$ is $Q_1^{RA} \cup Q_2^{RA}$
    \item If $Q = Q_1 \setminus Q_2$ and $Q_1^{RA}$ (resp. $Q_2^{RA}$) is the $\RA$ query equivalent to $Q_1$ (resp. $Q_2$) then the $\RA$ query equivalent to $Q$ is $Q_1^{RA} \setminus Q_2^{RA}$
\end{itemize}

\section{Proof of Theorem~\ref{thm:Q<edge} }\label{:annex proof:Q<edge}

For an annotated path  $p$, we define $w_p$ as the sequence $e_0.k\cdots e_{n-1}.k$.


\begin{lemma}\label{lem:pathinvariance}
For every one-way path pattern $\pat$ and annotated paths $p,p'$ with $w_{p'}=w_p$, 
if $(p,\mu)\in \sem{\pat}_{p}$ then there is $\mu'$ such that $ (p',\mu')\in \sem{\pat}_{p'}$.    
\end{lemma}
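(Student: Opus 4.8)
The plan is to prove a statement stronger than the lemma by structural induction on the one-way path pattern $\pat$, tracking not only whole-path matches but every matched sub-path. Since $w_{p'}=w_p$, the two annotated paths have the same length $n$, so I write $p=\pathval(v_0,e_0,\ldots,e_{n-1},v_n)$ and $p'=\pathval(v'_0,e'_0,\ldots,e'_{n-1},v'_n)$ and let $h$ be the bijection on graph elements defined by $h(v_i)=v'_i$ and $h(e_i)=e'_i$. The crucial observation is that $h$ is an isomorphism of everything a pattern can inspect on an annotated path: it preserves the forward source/target structure, it preserves the key-$k$ value of every edge (because $w_{p'}=w_p$ gives $e_i.k = e'_i.k$), and it trivially preserves the absence of node properties, of any edge property other than $k$, and of all labels. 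In particular, extending $h$ to mappings by $(h\circ\mu)(x)\df h(\mu(x))$, one checks atom by atom that $\mu\models\theta$ if and only if $h\circ\mu\models\theta$ for every condition $\theta$, since each atom $x.k'=x'.k''$, $x.k'<x'.k''$, or $\ell(x)$ has the same truth value under $\mu$ in $p$ and under $h\circ\mu$ in $p'$; Boolean combinations then follow.

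The strengthened claim I would prove is: for every one-way path pattern $\pat$ and every $(q,\mu)\in\sem{\pat}_p$, it holds that $(h(q),h\circ\mu)\in\sem{\pat}_{p'}$, where $h(q)$ is the path obtained by applying $h$ to each node and edge of $q$ (this is always a legal forward sub-path of $p'$, since one-way patterns match only forward sub-paths and $h$ respects source and target). I would verify the base cases $(x)$ and $\overset{x}{\rightarrow}$ directly: a node match $\pathval(v_i)$ maps to $\pathval(v'_i)$ and a forward-edge match $\pathval(v_i,e_i,v_{i+1})$ maps to $\pathval(v'_i,e'_i,v'_{i+1})$, both legal matches on $p'$, with the binding transported by $h$. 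The union case $\pat_1+\pat_2$ is immediate from the induction hypothesis. For concatenation $\pat_1\,\pat_2$, a match decomposes as $(q_1\concat q_2,\mu_1\tupunion\mu_2)$ with $q_1\concatto q_2$; the induction hypothesis yields matches for $\pat_1$ and $\pat_2$ on $p'$, and since $h$ is injective it preserves the junction condition $q_1\concatto q_2$, while the one-way restriction $\sch{\pat_1}\cap\sch{\pat_2}=\emptyset$ makes $h\circ\mu_1$ and $h\circ\mu_2$ have disjoint domains, so they are compatible and their union equals $h\circ(\mu_1\tupunion\mu_2)$. For repetition $\pat^{n..m}$ I would use that its semantics discards all variables, so each output tuple is $\mu_\emptyset$: a match is a concatenation $q_1\cdots q_i$ of sub-matches of $\pat$, and applying the induction hypothesis to each piece together with preservation of the concatenation conditions gives the corresponding iterated match on $p'$, again with the empty binding. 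For the conditional pattern $\pat_{\langle\theta\rangle}$ I would combine the induction hypothesis for $\pat$ with the atom-preservation observation above to transfer $\mu\models\theta$ to $h\circ\mu\models\theta$.

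The lemma then follows by instantiating the strengthened claim with $q=p$: since $h(p)=p'$, a match $(p,\mu)\in\sem{\pat}_p$ yields $(p',h\circ\mu)\in\sem{\pat}_{p'}$, so taking $\mu'=h\circ\mu$ completes the argument. The main obstacle is not any single case but getting the inductive statement right, namely generalizing from whole-path matches to arbitrary matched sub-paths and carrying the transported binding $h\circ\mu$ along; once this is set up, the content of the lemma is concentrated in two structural facts — that unbounded repetition discards bindings and that concatenation of one-way patterns uses disjoint variable sets — which together prevent the matching of $\pat$ from depending on anything about $p$ beyond its edge-value sequence $w_p$.
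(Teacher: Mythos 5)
Your proposal is correct and follows essentially the same route as the paper: the paper's proof defines the very same position-preserving bijection (called $f$ there) between the elements of $p$ and $p'$, sets $\mu' \df f\circ\mu$, and asserts the result "by induction on $\pat$" --- the induction you spell out in detail, including the key strengthening to arbitrary matched sub-paths. Your version is simply a more explicit write-up of the argument the paper leaves implicit.
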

\begin{proof}
    Assume that $(p,\mu)\in \sem{\pat}_{p}$. 
    Let us denote $p$ by
    \begin{center}
    \begin{tikzpicture}
    \node at (0,0) [shape=circle, draw] (node1) {$v_0$};
    \node at (1.5,0) [shape=circle] (invisiNode) {\ldots};
    \node at (3,0) [shape=circle, draw] (node2) {$v_n$};
    \draw[->] (node1.east) -- (invisiNode.west) node [above,align=center,midway]{$e_0$};
    \draw[->] (invisiNode.east) -- (node2.west) node [above,align=center,midway]{$e_{n-1}$};
\end{tikzpicture}
\end{center}
and $p'$
by 
\begin{center}
    \begin{tikzpicture}
    \node at (0,0) [shape=circle, draw] (node1) {$v'_0$};
    \node at (1.5,0) [shape=circle] (invisiNode) {\ldots};
    \node at (3,0) [shape=circle, draw] (node2) {$v'_n$};
    \draw[->] (node1.east) -- (invisiNode.west) node [above,align=center,midway]{$e'_0$};
    \draw[->] (invisiNode.east) -- (node2.west) node [above,align=center,midway]{$e'_{n-1}$};
\end{tikzpicture}
\end{center}
(Note that they have the same length since $w_p = w_{p'}$.)
    We denote by $f$ the mapping defined by $f(v_i) \df v'_i$ for $0\le i \le n$, and $f(e_i) = e'_i$ for $0 \le i \le n-1$. We then define $\mu'$ by setting $\dom{\mu'} = \dom{\mu}$ and $\mu'(x) \df f(\mu(x))$. It can be shown by induction on $\pat$ that if $(p,\mu)\in \sem{\pat}_{p}$ then  $ (p',\mu')\in \sem{\pat}_{p'}$.
\end{proof}

For two annotated paths $p\df$
    \begin{tikzpicture}
    \node at (0,0) [shape=circle, draw] (node1) {$v_0$};
    \node at (1.5,0) [shape=circle] (invisiNode) {\ldots};
    \node at (3,0) [shape=circle, draw] (node2) {$v_n$};
    \draw[->] (node1.east) -- (invisiNode.west) node [above,align=center,midway]{$e_0$};
    \draw[->] (invisiNode.east) -- (node2.west) node [above,align=center,midway]{$e_{n-1}$};
\end{tikzpicture}
and $p'\df$
    \begin{tikzpicture}
    \node at (0,0) [shape=circle, draw] (node1) {$v'_0$};
    \node at (1.5,0) [shape=circle] (invisiNode) {\ldots};
    \node at (3,0) [shape=circle, draw] (node2) {$v'_{n'}$};
    \draw[->] (node1.east) -- (invisiNode.west) node [above,align=center,midway]{$e'_0$};
    \draw[->] (invisiNode.east) -- (node2.west) node [above,align=center,midway]{$e'_{n'-1}$};
\end{tikzpicture}
 we use
${p'}\sqsubseteq p$ to denote that $p'$ is contained within $p$. Formally, 
 ${p'}\sqsubseteq p$ if there is $j$ such that 
 $v'_0 = v_j, \ldots, v'_{n'} = v_{j+n'}$, and $e'_0 = e_j,\ldots , e'_{n'-1} = e_{j+n'-1}$.
\begin{lemma}\label{lem:pathtrim}
For every one-way path pattern $\pat$ and annotated paths $p,p'$ where  ${p'}\sqsubseteq p$,   
\begin{itemize}
    \item[$(1)$] 
    if there is $\mu$ such that $(p',\mu)\in \sem{\pat}_p$ then 
$(p',\mu)\in \sem{\pat}_{p'}$
    \item[$(2)$] 
    $\sem{\pat}_{p'} \subseteq \sem{\pat}_p$
\end{itemize}
\end{lemma}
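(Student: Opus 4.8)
The plan is to prove both parts by induction on the structure of the one-way path pattern $\pat$, establishing part~$(2)$ first as a self-contained monotonicity statement and then invoking it inside the argument for part~$(1)$. The reason for this ordering is that the concatenation case of part~$(1)$ is the only place where the two statements interact, and there it is cleanest to already have part~$(2)$ in hand.

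For part~$(2)$, the guiding observation is that $p'\sqsubseteq p$ means that, viewed as property graphs, $p'$ is a subgraph of $p$: every node and edge of $p'$ is a node/edge of $p$ with identical source, target, labels, and property values. The base cases $(x)$ and $\overset{x}{\rightarrow}$ are then immediate, since every single-node path and every forward edge available in $p'$ is also available in $p$. For a conditional pattern $\pat\langle\theta\rangle$ I would note that satisfaction $\mu\models\theta$ depends only on the property and label values of the elements in the range of $\mu$, and these agree in $p'$ and $p$; hence a pair surviving the filter in $p'$ survives it in $p$, and the inductive hypothesis closes the case. The cases $\pat_1+\pat_2$, $\pat_1\,\pat_2$, and $\pat^{n..m}$ follow mechanically from the inductive hypotheses, since union, concatenation, and repetition in Figure~\ref{fig:sempathpattern} are built uniformly from the semantics of the immediate subpatterns.

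The heart of the argument is part~$(1)$: given $(p',\mu)\in\sem{\pat}_p$, whose path component is \emph{exactly} $p'$, I must replay the match inside the smaller graph $p'$. Inducting on $\pat$, the base cases are clear, since a match of $(x)$ produces a single-node path which must be one of $p'$'s own nodes, and a match of $\overset{x}{\rightarrow}$ together with its endpoints already constitutes $p'$. Conditions are handled exactly as in part~$(2)$, because the filter is graph-independent; a union reduces to one of its disjuncts by the inductive hypothesis. The main obstacle, and the only case that genuinely combines both statements, is concatenation (and, by iteration, repetition). If $(p',\mu)\in\sem{\pat_1\,\pat_2}_p$, then $p'=p'_1\concat p'_2$ with $(p'_1,\mu_1)\in\sem{\pat_1}_p$, $(p'_2,\mu_2)\in\sem{\pat_2}_p$, $\mu_1\sim\mu_2$, and $\mu=\mu_1\tupunion\mu_2$. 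Since $p'_1$ and $p'_2$ are the two contiguous pieces whose concatenation is $p'$, each satisfies $p'_i\sqsubseteq p$ and also $p'_i\sqsubseteq p'$. Applying the inductive hypothesis for part~$(1)$ to each subpattern gives $(p'_i,\mu_i)\in\sem{\pat_i}_{p'_i}$, and then the already-established part~$(2)$, instantiated at $p'_i\sqsubseteq p'$, upgrades this to $(p'_i,\mu_i)\in\sem{\pat_i}_{p'}$; recombining yields $(p',\mu)\in\sem{\pat_1\,\pat_2}_{p'}$.

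For $\pat^{n..m}$ the same decomposition is iterated across the $i\in[n,m]$ consecutive factors, which is painless precisely because repetition discards all variables (the associated mapping is $\mu_{\emptyset}$), so no cross-factor compatibility bookkeeping is needed. The structural fact that makes every decomposition land inside $p'$ is the one-way nature of the patterns: on an annotated path a forward walk is forced to be a contiguous sub-path, so the factors of $p'$ can never escape $p'$. This contiguity, together with the interplay of parts~$(1)$ and~$(2)$ in the concatenation case, is where I expect essentially all of the real work to lie; everything else is bookkeeping driven directly by the semantic clauses of Figure~\ref{fig:sempathpattern}.
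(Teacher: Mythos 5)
Your proposal is correct and follows essentially the same route as the paper's own proof: a structural induction in which part~$(2)$ is a self-contained monotonicity claim and the concatenation (and repetition) case of part~$(1)$ decomposes $p'$ into contiguous factors, applies the inductive hypothesis for~$(1)$ to land each factor in its own sub-path, and then uses part~$(2)$ to lift the matches back up to $p'$ before recombining. The paper phrases this as a mutual induction and only spells out the concatenation case, but the logical content is identical to yours.
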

\begin{proof}
    We use the same notation as above and show the claim by a mutual induction on $\pat$. We skip the induction basis since it is trivial, and refer to the following interesting case in the induction step:
    \begin{itemize}
        \item[$(1)$] 
        \subitem $\pat = \pat_1\pat_2$ and assume $(p',\mu) \in \sem{\pat_1\pat_2}_p$. By definition, 
        there are $p_1,p_2,\mu_1,\mu_2$ such that $p' = p_1p_2$, $(p_1,\mu_1) \in \sem{\pat_1}_p, (p_2,\mu_2) \in \sem{\pat_2}_p$, $\mu_1\sim \mu_2$, and $p_2$  concatenates to $p_1$.
       Notice that $p_1,p_2\sqsubseteq p'$, and hence also $p_1p_2 \sqsubseteq p'$. By applying induction hypothesis $(1)$, we get $(p_1,\mu_1) \in \sem{\pat_1}_{p_1}, (p_2,\mu_2) \in \sem{\pat_2}_{p_2}$. By applying induction hypothesis $(2)$ we get  $(p_1,\mu_1) \in \sem{\pat_1}_{p_1p_2}, (p_2,\mu_2) \in \sem{\pat_2}_{p_1p_2}$. This allows us to conclude that, by definition,    $(p_1p_2,\mu_1\join\mu_2) \in \sem{\pat_1\pat_2}_{p_1p_2}$.
       \item[$(2)$]
         If $\pat = \pat_1 \pat_2$ then $\sem{ \pat_1 \pat_2}_{p'} =  \left\{ ({p_1\, p_2}, \mu_1\join \mu_2 ) 
    \,\middle| \begin{array}{l}(p_1,\mu_1)\in\sem{\pat_1}_{p'}, (p_2,\mu_2)\in\sem{\pat_2}_{p'},\\ \mu_1\sim \mu_2, p_2 \text{ concatenates to }p_1
    \end{array}
    \right\}$. By induction hypothesis $(2)$, $\sem{\pat_1}_{p'} \subseteq \sem{\pat_1}_{p}$ and $\sem{\pat_2}_{p'} \subseteq \sem{\pat_2}_{p}$. Thus,  
    \[\sem{ \pat_1 \pat_2}_{p'} \subseteq  \left\{ ({p_1\, p_2}, \mu_1\join \mu_2 ) 
    \,\middle| \begin{array}{l}(p_1,\mu_1)\in\sem{\pat_1}_{p}, (p_2,\mu_2)\in\sem{\pat_2}_{p},\\ \mu_1\sim \mu_2, p_2 \text{ concatenates to }p_1
    \end{array}
    \right\}.\] By definition,  $\sem{ \pat_1 \pat_2}_{p'} \subseteq \sem{ \pat_1 \pat_2}_{p}$.
    \end{itemize}
\end{proof}

\OMIT{
\begin{lemma}\label{lem:pathextend}
For every path-pattern $\pat$, and word paths $p',p$ where
   $p' \sqsubseteq p$
     it holds that  $\sem{\pat}_{p'} \subseteq \sem{\pat}_p$.
\end{lemma}
\begin{proof}
    We show the claim by induction on $\pat$.
    
    \textbf{Induction Base:} \todo{complete}
    
    \textbf{Induction Step:} \todo{complete}

\end{proof}
}

Let $\pat$ be a path pattern. We define 
\[
\lang{\pat} = \{w \mid \forall\,p: \left(w=w_p \rightarrow \exists\, \mu:(p,\mu )\in \sem{\pat}_{p} \right) \}
\]
\OMIT{say that a word $w \in \lang{\pat}$ if for every annotated path $p$ where $w=w_p$ there is a partial mapping $\mu$ such that $(p,\mu )\in \sem{\pat}_{p}$.
}
For concatenation, we have:
\begin{lemma}\label{lem:breakconcat}
For every one-way path patterns 
$\pat_1,\pat_2$, the following equivalence holds:
$\lang{\pat_1 \pat_2} = \lang{\pat_1 }\lang{\pat_2}$.
\end{lemma}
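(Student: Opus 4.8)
The plan is to prove the two inclusions of $\lang{\pat_1 \pat_2} = \lang{\pat_1}\lang{\pat_2}$ separately. In both directions the idea is to reduce membership in the relevant language to the existence of a match on an \emph{arbitrary} annotated path carrying the prescribed edge-value word, and then to move matches between a path and its subpaths using Lemma~\ref{lem:pathtrim} and Lemma~\ref{lem:pathinvariance}. Throughout I use that $\pat_1\pat_2$ is a one-way concatenation, so $\sch{\pat_1}\cap\sch{\pat_2}=\emptyset$.

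For the inclusion $\lang{\pat_1}\lang{\pat_2} \subseteq \lang{\pat_1 \pat_2}$ I would start from a word $w = w_1 w_2$ with $w_1 \in \lang{\pat_1}$ and $w_2 \in \lang{\pat_2}$, and fix an arbitrary annotated path $p$ with $w_p = w$. Splitting $p$ after its first $|w_1|$ edges yields $p = p_1 \concat p_2$ with $w_{p_1}=w_1$, $w_{p_2}=w_2$, where $p_1,p_2 \sqsubseteq p$ are again annotated paths. Membership of $w_1,w_2$ in the respective languages supplies matches $(p_1,\mu_1)\in\sem{\pat_1}_{p_1}$ and $(p_2,\mu_2)\in\sem{\pat_2}_{p_2}$; applying Lemma~\ref{lem:pathtrim}(2) lifts these to $(p_1,\mu_1)\in\sem{\pat_1}_{p}$ and $(p_2,\mu_2)\in\sem{\pat_2}_{p}$. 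Since $\sch{\pat_1}\cap\sch{\pat_2}=\emptyset$ we have $\mu_1 \sim \mu_2$ automatically, and $p_1\concatto p_2$ by construction, so the semantics of concatenation gives $(p,\mu_1\tupunion\mu_2)\in\sem{\pat_1\pat_2}_{p}$. As $p$ was arbitrary, $w \in \lang{\pat_1\pat_2}$.

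For the converse $\lang{\pat_1 \pat_2} \subseteq \lang{\pat_1}\lang{\pat_2}$ I would fix a single representative annotated path $p^\ast$ with $w_{p^\ast}=w$ (one exists since the sets of node, edge, and constant ids are infinite). Membership $w\in\lang{\pat_1\pat_2}$ gives a match, which by the semantics of concatenation decomposes as $p^\ast = p_1 \concat p_2$ with $(p_1,\mu_1)\in\sem{\pat_1}_{p^\ast}$ and $(p_2,\mu_2)\in\sem{\pat_2}_{p^\ast}$. Setting $w_1 \df w_{p_1}$ and $w_2 \df w_{p_2}$ gives $w = w_1 w_2$. To establish $w_1\in\lang{\pat_1}$ I first localize the match by Lemma~\ref{lem:pathtrim}(1), obtaining $(p_1,\mu_1)\in\sem{\pat_1}_{p_1}$, and then for an arbitrary annotated path $q$ with $w_q=w_1=w_{p_1}$ invoke Lemma~\ref{lem:pathinvariance} to produce some $\mu'$ with $(q,\mu')\in\sem{\pat_1}_{q}$; hence $w_1\in\lang{\pat_1}$, and symmetrically $w_2\in\lang{\pat_2}$.

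The main obstacle is exactly this second inclusion: a priori the split point of a matched path may depend on the path, so it is not immediate that a \emph{single} factorization of $w$ simultaneously witnesses both memberships. The resolution is the decoupling above, reading off one factorization $w=w_1 w_2$ from the match on the representative path $p^\ast$ and then using Lemma~\ref{lem:pathinvariance} to argue that this factorization is \emph{universal}, i.e.\ every annotated path with edge-word $w_1$ (resp.\ $w_2$) is matched by $\pat_1$ (resp.\ $\pat_2$). Lemma~\ref{lem:pathtrim} is what makes matches portable between a path and its subpaths in both directions, while the one-way restriction $\sch{\pat_1}\cap\sch{\pat_2}=\emptyset$ guarantees that the two partial matches always combine into one.
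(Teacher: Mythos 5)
Your proof is correct and follows essentially the same route as the paper's: the $\supseteq$ inclusion via splitting an arbitrary path, lifting the sub-matches with Lemma~\ref{lem:pathtrim}(2), and combining them using variable-disjointness; the $\subseteq$ inclusion via decomposing a match, localizing with Lemma~\ref{lem:pathtrim}(1), and transferring to arbitrary paths with Lemma~\ref{lem:pathinvariance}. If anything, you are slightly more explicit than the paper, which skips the Lemma~\ref{lem:pathtrim}(1) localization step before invoking Lemma~\ref{lem:pathinvariance} in that direction.
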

\begin{proof}

\textbf{$\subseteq$ direction:}
Assume $w\in\lang{\pat_1 \pat_2}$. By definition, for every $p$ such that $w=w_p$ there is $\mu $ such  that $(p,\mu) \in \sem{\pat_1\pat_2}_p$.
In turn, by definition of $\sem{\pat_1\pat_2}_p$ we can conclude that there are paths $p_1,p_2$ such that $p=p_1p_2$ and there are $\mu_1,\mu_2$ for which $(p_1,\mu_1)\in \sem{\pat_1}_p$, $(p_2,\mu_2)\in \sem{\pat_2}_p$, and $\mu_1\sim \mu_2$. We set $w_1 = w_{p_1}$ and $w_2 = w_{p_2}$. For $p_1$ we already saw that there is a $\mu_1$ such that $(p_1,\mu_1)\in \sem{\pat_1}_p$. Let $p'$ be such that $w_1 = w_{p_1} = w_{p'}$. By Lemma~\ref{lem:pathinvariance}, we can conclude that there is $\mu'$ such that $(p',\mu')\in\sem{\pat_1}_{p'}$. Hence, by definition $w_1 \in \lang{\pat_1}$. We can show similarly that $w_2 \in \lang{\pat_2}$, which completes this direction. 

\textbf{$\supseteq$ direction:} Assume that 
$w\in \lang{\pat_1 }\lang{\pat_2}$. That is, there are 
$w_1, w_2$ such that $w= w_1w_2$ and $w_1 \in \lang{\pat_1}, w_2\in \lang{\pat_2}$.
Let $p$ be a path such that $w_p = w = w_1w_2$. Let $p_1,p_2$ be its subpaths such that $p=p_1p_2$ and $w_{p_1}=w_1, w_{p_2}= w_2$.
Since $w_{p_1}=w_1$ there are $\mu_1,\mu_2$ such that $(p_1,\mu_1)\in \sem{\pat_1}_{p_1}, (p_2,\mu_2)\in \sem{\pat_2}_{p_2}$.
Since $\pat$ is a one-way path pattern, it holds that $\dom{\mu_1} \cap \dom{\mu_2} = \emptyset$, and thus $\mu_1 \sim \mu_2$.
Applying Lemma~\ref{lem:pathtrim} $(2)$ enables us to conclude that  $(p_1,\mu_1)\in \sem{\pat_1}_{p}, (p_2,\mu_2)\in \sem{\pat_2}_{p}$ since $p_1,p_2\sqsubseteq p$. 
By definition of $\sem{\pat_1\pat_2}_p$, we can conclude that $(p_1p_2,\mu_1\join \mu_2)\in\sem{\pat_1\pat_2}_p$. 
This suffices to conclude that $w\in \lang{\pat_1\pat_2}$, which completes this direction. 
\end{proof}

For repetition, we have:
\begin{lemma}
\label{lem:rep}
For every one-way path pattern $\pat$,
$\mathcal{L}(\pat^2)= \mathcal{L}(\pat) \mathcal{L}(\pat)$.
\end{lemma}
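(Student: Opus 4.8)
The plan is to reprove the statement directly along the lines of Lemma~\ref{lem:breakconcat}, rather than to derive it as a corollary: one cannot simply instantiate that lemma with $\pat_1=\pat_2=\pat$, since $\pat\,\pat$ is a legal one-way pattern only when $\sch{\pat}\cap\sch{\pat}=\emptyset$, i.e.\ when $\pat$ has no free variables. The crucial structural observation is that repetition behaves more permissively than concatenation: by the semantics of $\sem{\pat}^2$, a pair lies in $\sem{\pat^2}_p$ exactly when $p$ decomposes as $p=p_1p_2$ with $p_1\concatto p_2$ and there exist $\mu_1,\mu_2$ with $(p_i,\mu_i)\in\sem{\pat}_p$ --- with \emph{no} requirement that $\mu_1\sim\mu_2$, and with the output binding discarded to $\mu_{\emptyset}$. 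I would prove the two inclusions of $\lang{\pat^2}=\lang{\pat}\lang{\pat}$ separately.

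For $\supseteq$, I would take $w=w_1w_2$ with $w_1,w_2\in\lang{\pat}$ and an arbitrary annotated path $p$ with $w_p=w$, split it as $p=p_1p_2$ with $w_{p_1}=w_1$ and $w_{p_2}=w_2$ (so that $p_1\concatto p_2$), obtain $\mu_i$ with $(p_i,\mu_i)\in\sem{\pat}_{p_i}$ from $w_i\in\lang{\pat}$, and lift both matches to $\sem{\pat}_p$ via Lemma~\ref{lem:pathtrim}(2), using $p_1,p_2\sqsubseteq p$. Since repetition imposes no compatibility constraint, this immediately yields $(p,\mu_{\emptyset})\in\sem{\pat^2}_p$; as $p$ was arbitrary, $w\in\lang{\pat^2}$. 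This is precisely the step where repetition is easier than concatenation: no disjointness-of-schemas argument is needed to secure $\mu_1\sim\mu_2$.

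For $\subseteq$, I would fix one annotated path $p$ with $w_p=w$, use $w\in\lang{\pat^2}$ to extract a decomposition $p=p_1p_2$ together with matches $(p_i,\mu_i)\in\sem{\pat}_p$, and set $w_1=w_{p_1}$ and $w_2=w_{p_2}$, so that $w=w_1w_2$. To conclude $w_1\in\lang{\pat}$ I would first localize the match to the subpath by Lemma~\ref{lem:pathtrim}(1), obtaining $(p_1,\mu_1)\in\sem{\pat}_{p_1}$, and then transport a match onto every $p'$ with $w_{p'}=w_1$ via Lemma~\ref{lem:pathinvariance}; the argument for $w_2$ is symmetric. The only point requiring care is to keep straight the two opposite directions in which the locality lemma is invoked --- part~(2) (a subpath match extends to the whole path) in the $\supseteq$ direction and part~(1) (a whole-path match restricts to the subpath) in the $\subseteq$ direction --- but given these lemmas the remaining steps are routine, so I expect the main conceptual obstacle to be the initial realization that the result must be established from scratch rather than inherited from Lemma~\ref{lem:breakconcat}.
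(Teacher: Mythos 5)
Your proposal is correct and follows essentially the same route as the paper's proof: both directions rest on Lemma~\ref{lem:pathtrim} (part~(1) to localize a match to a subpath in the $\subseteq$ direction, part~(2) to lift subpath matches to the whole path in the $\supseteq$ direction) together with Lemma~\ref{lem:pathinvariance} to transfer matches across paths with the same edge-value word. Your handling of the $\supseteq$ direction --- splitting an \emph{arbitrary} path $p$ with $w_p=w$ rather than gluing two freshly chosen witnesses --- is in fact slightly more faithful to the universal quantifier in the definition of $\lang{\cdot}$ than the paper's version, and your remark that repetition discards the bindings and so needs no $\mu_1\sim\mu_2$ compatibility check is an accurate account of why this case is easier than concatenation.
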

\begin{proof}

\textbf{$\subseteq$ direction:}
    Assume that $w\in \lang{\pat^2}$ and let $p$ be a path with $w=w_p$.
    By definition 
    $\sem{\pat^2}_p = \{(p_1p_2,\emptyset) \mid \exists \mu_1,\mu_2:\,(p_1,\mu_1) \in \sem{\pat}_p , (p_2,\mu_2) \in \sem{\pat}_p ,\text{ and $p_2$ concatenates to $p_1$}  \}$. Therefore, $p = p_1p_2$ for some $p_1,p_2$ such that there are $\mu_1,\mu_2$ where $(p_1,\mu_1), (p_2,\mu_2) \in \sem{\pat}_p$.
    Let us denote $w_1 \df w_{p_1}$ and $w_2 \df w_{p_2}$. It holds that $w=w_1w_2$ and it suffices to show that $w_1,w_2\in \lang{\pat}$.
    For $p_1$ there is $\mu_1$ such that $(p_1,\mu_1 )\in \sem{\pat}_p$. Due to Lemma~\ref{lem:pathtrim} $(1)$, since $p_1\sqsubseteq p$ we can conclude that  $(p_1,\mu_1 )\in \sem{\pat}_{p_1}$. Due to Lemma~\ref{lem:pathinvariance}, for every $p$ with $w_p = w_1$ it holds that $(p,\mu_1 )\in \sem{\pat}_{p}$. Hence $w_1\in\lang{\pat}$. We show similarly that  $w_2\in\lang{\pat}$, which completes this direction.

    \textbf{$\supseteq$ direction:}
    Assume that $w_1,w_2\in \lang{\pat}$.
    Let $p_1,p_2$ be such that $w_{p_1}=w_1, w_{p_2} = w_2$. There are $\mu_1,\mu_2$ such that $(p_1,\mu_1)\in \sem{\pat}_{p_1}$ and
    $(p_2,\mu_2)\in \sem{\pat}_{p_2}$.
    If $p_2$ does not concatenate to $p_1$, then by Lemma~\ref{lem:pathinvariance} we can change its first node id to match the last node id of $p_1$. Thus, we can assume that $p_2$ concatenates to $p_1$.
    Due to Lemma~\ref{lem:pathtrim} $(2)$, it holds that  $(p_1,\mu_1)\in \sem{\pat}_{p}$ and
    $(p_2,\mu_2)\in \sem{\pat}_{p}$
    where $p=p_1 p_2$. Therefore, 
    $(p_1p_2,\emptyset) \in \sem{\pat^2}_{p}$ by definition which completes this direction.
\end{proof}

We can further generalize this lemma by replacing $2$ with any $k\ge 1$. 
\begin{lemma}\label{lem:breakrepetition}
For every one-way path patterns $\pat$,
    $\lang{\pat^k}= \left(\lang{\pat}\right)^k$, $k\ge 1$
\end{lemma}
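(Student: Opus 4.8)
The plan is to prove the statement by induction on $k$, reducing each inductive step to the concatenation identity of Lemma~\ref{lem:breakconcat} together with the observation that unbounded, and more generally fixed, repetition discards variable bindings. The crucial preliminary remark is that $\lang{\pat}$ depends only on the \emph{set of paths} that appear as first components of $\sem{\pat}_p$, and not at all on the associated mappings: by definition, $w\in\lang{\pat}$ holds exactly when every $p$ with $w_p=w$ occurs as a first component of some pair in $\sem{\pat}_p$. Consequently, any two one-way path patterns whose semantics produce the same set of matched paths (regardless of their output mappings) define the same language.

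The base case $k=1$ is immediate from the semantics of repetition: $\sem{\pat^1}_G=\{(p_1,\mu_\emptyset)\mid \exists\mu_1:(p_1,\mu_1)\in\sem{\pat}_G\}$, so $\pat^1$ and $\pat$ match exactly the same paths, whence $\lang{\pat^1}=\lang{\pat}=\left(\lang{\pat}\right)^1$ by the preliminary remark.

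For the inductive step, assume $\lang{\pat^k}=\left(\lang{\pat}\right)^k$. First I would argue that $\pat^{k+1}$ and the concatenation pattern $\pat^k\,\pat$ match exactly the same paths. Unwinding the definitions, a path $p$ is a first component of $\sem{\pat^{k+1}}_G$ iff $p=p_1\cdots p_{k+1}$ with each $p_i$ matched by $\pat$ and consecutive pieces concatenating; and $p$ is a first component of $\sem{\pat^k\,\pat}_G$ iff $p=p'\,p''$ where $p'$ is such a $k$-fold concatenation and $p''$ is matched by $\pat$ --- which describes the same decompositions. Here the compatibility requirement in concatenation is automatic, since $\sem{\pat^k}$ outputs only $\mu_\emptyset$. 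By the preliminary remark, $\lang{\pat^{k+1}}=\lang{\pat^k\,\pat}$.

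It remains to chain the equalities. Since repetition has no free variables, $\sch{\pat^k}\cap\sch{\pat}=\emptyset$, so $\pat^k\,\pat$ is a legitimate one-way path pattern and Lemma~\ref{lem:breakconcat} applies, giving $\lang{\pat^k\,\pat}=\lang{\pat^k}\,\lang{\pat}$. Combining with the induction hypothesis yields $\lang{\pat^{k+1}}=\lang{\pat^k}\,\lang{\pat}=\left(\lang{\pat}\right)^k\lang{\pat}=\left(\lang{\pat}\right)^{k+1}$, completing the induction. The one step deserving real care is the equality of matched-path sets for $\pat^{k+1}$ and $\pat^k\,\pat$: their outputs differ (the empty mapping $\mu_\emptyset$ versus the binding of the final $\pat$-factor), so one must invoke the remark that $\lang{\cdot}$ is insensitive to bindings rather than claim equality of the semantic relations themselves. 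Everything else is a routine unfolding of the definitions, with Lemma~\ref{lem:rep} serving as the concrete blueprint for the $k=2$ instance.
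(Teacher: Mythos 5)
Your proposal is correct and follows essentially the same route as the paper, which proves the lemma by induction on $k$ combining Lemma~\ref{lem:breakconcat} with the $k=2$ instance (Lemma~\ref{lem:rep}). The extra details you supply --- that $\lang{\cdot}$ is insensitive to the output mappings, and that $\pat^{k+1}$ and $\pat^k\,\pat$ match the same set of paths with the disjointness condition $\sch{\pat^k}\cap\sch{\pat}=\emptyset$ holding trivially --- are exactly the steps the paper leaves implicit, so this is a faithful elaboration rather than a different argument.
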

\begin{proof}
    The claim can be shown by induction on $k$  by combining Lemmas~\ref{lem:breakconcat} and~\ref{lem:rep}.
\end{proof}



To apply the above lemmas to our settings we first present a Normal Form for path patterns. 
We say that a one-way path pattern is in $+$NF if $+$ occurs only under unbounded repetition. 
\begin{lemma}\label{lem:plusNF}
    For every one-way path pattern $\pat$ there is a one-way path pattern $\pat'$ in $+$NF such that for every annotated path $p$ it holds that $\sem{\pat}_p = \sem{\pat'}_p$. 
\end{lemma}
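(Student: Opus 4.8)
The plan is to prove, by structural induction on $\pat$, the slightly more explicit statement that every one-way path pattern is equivalent on annotated paths to a top-level union $\alpha_1 + \cdots + \alpha_k$, each of whose disjuncts $\alpha_i$ has all of its $+$'s under unbounded repetitions — this is exactly the shape $+$NF takes once the outermost disjunction is exposed. Here equivalence means $\sem{\pat}_p = \sem{\alpha_1 + \cdots + \alpha_k}_p$ for every annotated path $p$, and I would carry as an invariant that all the $\alpha_i$ share the same free-variable set $\sch{\pat}$, so that every $+$ introduced is well-formed.

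The driving facts are three distributivity identities, each immediate from the semantics in Figure~\ref{fig:sempathpattern}: concatenation distributes over $+$ on either side, i.e.\ $\sem{(\beta_1+\beta_2)\,\gamma}_p = \sem{\beta_1\gamma + \beta_2\gamma}_p$ and symmetrically, and a condition distributes over $+$, i.e.\ $\sem{(\beta_1+\beta_2)\langle\theta \rangle}_p = \sem{\beta_1\langle\theta \rangle + \beta_2\langle\theta \rangle}_p$. In the inductive step each constructor is processed assuming its immediate subpatterns are already in disjunctive form: concatenation $\pat_1\pat_2$ becomes the cross-product union $\sum_{i,j}\alpha_i\beta_j$ (well-formed because the one-way constraint gives $\sch{\pat_1}\cap\sch{\pat_2}=\emptyset$, hence $\sch{\alpha_i}\cap\sch{\beta_j}=\emptyset$); a condition $\pat\langle\theta \rangle$ becomes $\sum_i\alpha_i\langle\theta \rangle$; a top-level $+$ merely merges the two unions; and basic node and edge patterns are already normal. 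The required free-variable side conditions are inherited from the induction hypothesis in each case.

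The crux is repetition, where the binding-erasing semantics of $\pat^{n..m}$ must be respected. Here I exploit that $\sem{\pat^{n..m}}_p$ always outputs the empty mapping $\mu_{\emptyset}$ and that $\sch{\pat^{n..m}} = \emptyset$. If the repetition is \emph{unbounded} ($m=\infty$), the disjuncts of its normalized body now sit within the scope of an unbounded repetition, so nothing needs to be done: $(\alpha_1 + \cdots + \alpha_k)^{n..\infty}$ is already a single $+$NF disjunct. If the repetition is \emph{bounded} ($m<\infty$), I expand it into a finite top-level union. The key observation is that, because bindings are discarded, wrapping each disjunct as $(\,\cdot\,)^{1..1}$ erases its free variables, so concatenations of such wrapped disjuncts are well-formed one-way patterns (their free-variable sets are empty, hence trivially disjoint) and reproduce exactly the empty-binding factorization of the repetition. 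Concretely I would rewrite $(\alpha_1+\cdots+\alpha_k)^{n..m}$ as $\bigcup_{\ell=n}^{m}\,\bigcup_{(i_1,\ldots,i_\ell)} \alpha_{i_1}^{1..1}\cdots\alpha_{i_\ell}^{1..1}$, taking the $\ell=0$ summand (when $n=0$) to be the variable-free node pattern $(\,)$, whose semantics equals $\sem{\pat^{0..0}}_p$. Each resulting term is in $+$NF, since every $\alpha_i$ was and the outer $(\,\cdot\,)^{1..1}$ wrappers add no new top-level $+$.

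The step I expect to be the main obstacle is exactly this bounded-repetition expansion: one must verify, against the definition of $\sem{\pat^{n..m}}_p$, that the $(\,\cdot\,)^{1..1}$-wrapping faithfully reproduces both the empty-mapping output and the concatenability conditions. A naive unfolding into $\alpha_{i_1}\cdots\alpha_{i_\ell}$ without the wrappers would both retain spurious bindings and, for subpatterns with nonempty $\sch{}$, violate the one-way disjointness requirement on concatenation. Establishing that the wrapped expansion has free-variable set $\emptyset$ and exactly the same set of $(p',\mu_{\emptyset})$ pairs as the original repetition is the heart of the argument; the remaining cases, and the termination of the construction — immediate because the bottom-up induction handles each constructor once with only finite expansions — are routine.
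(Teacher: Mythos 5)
Your proposal is correct and follows the same overall strategy as the paper: a bottom-up structural induction that exposes a top-level disjunction, pushes every $+$ upward via distributivity of concatenation and of conditions over union, and leaves unbounded repetitions untouched (any $+$ inside them is already "under an unbounded repetition" and hence permitted by $+$NF). The one place where you genuinely diverge is the bounded repetition $\pat^{n..m}$ with $m<\infty$. The paper's translation unfolds it into the union of the $\ell$-fold self-concatenations $\pat\cdots\pat$ for $\ell=n,\ldots,m$ and recurses on those; taken literally this is delicate when $\sch{\pat}\neq\emptyset$, since $\pat\,\pat$ is then not a one-way pattern (concatenation of one-way patterns requires disjoint free-variable sets), and its semantics joins bindings and imposes a compatibility requirement $\mu_i\sim\mu_j$ that the binding-erasing semantics of $\pat^{n..m}$ does not impose. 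Your $(\cdot)^{1..1}$ wrappers address exactly this: since $\sch{\alpha^{1..1}}=\emptyset$ and $\sem{\alpha^{1..1}}_p$ outputs only $\mu_\emptyset$, the $\ell$-fold concatenations $\alpha_{i_1}^{1..1}\cdots\alpha_{i_\ell}^{1..1}$ are well-formed one-way patterns and reproduce precisely the pairs $(p_1\cdots p_\ell,\mu_\emptyset)$ with independently chosen witnesses, matching $\sem{\pat}^{\ell}_{p}$; handling the $\ell=0$ summand with the variable-free node pattern is likewise the right move, as its semantics coincides with $\sem{\pat}^{0}_{p}$. So your argument is the paper's argument carried out more carefully at its only delicate point, and the extra care is warranted.
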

\begin{proof}
    We show that there are translation rules that preserve semantics on annotated paths.
    \newcommand{\tr}{\mathsf{tr}}
    This translation $\tr$ is described inductively.
    \begin{itemize}
        \item $\tr\left((x)\right)\df (x)$
        \item 
      $ \tr \left( \overset{x}{\rightarrow}\right) \df \overset{x}{\rightarrow}$
        \item 
      $ \tr \left( \overset{x}{\leftarrow}\right) \df \overset{x}{\leftarrow}$
      \item 
       $ \tr \left( \pat_1+ \pat_2 \right) \df \pat_1+ \pat_2$
       \item 
       $\tr\left( \pat_1 \pat_2\right)
       \df 
       +_{1\le i\le k, 1\le j\le m} \rho_i \rho'_j
       $
       where  $\tr\left( \pat_1 \right) \df \rho_1 + \cdots +\rho_k$ and 
        $\tr\left( \pat_2 \right) \df \rho'_1 + \cdots +\rho'_m$ 
        \item When $m<\infty$ we define $\tr\left( \pat^{n..m}\right) \df \tr(\underbrace{\pat\cdots \pat}_{n\text{ times }}) +\cdots + \tr(\underbrace{\pat\cdots \pat}_{m\text{ times }}) $
        \item  When $m=\infty$ we define 
        $\tr\left( \pat^{n..m}\right) \df \pat^{n..m}$
        \item 
        $\tr \left(\pat_{\langle \theta \rangle} \right) \df \tr \left(\pat \right)_{\langle \theta \rangle}$
    \end{itemize}
    It can be shown that the output of $\tr$ is indeed in $+$NF and
    the equivalence of the semantics of $\pat$ and $\tr({\pat})$ can be  shown by induction based on Lemmas~\ref{lem:breakrepetition} and~\ref{lem:breakconcat}. 
\end{proof}

We are now ready to move to the main proof. 
\begin{proof}[Proof of Theorem~\ref{thm:Q<edge}]
   Let $\pat$ be a one-way path pattern in $+$NF of the form $\pat_1+\cdots+ \pat_m$.
By pigeonhole principle, if $\lang{\pat}$ is infinite then there is at least one $\pat_i$ with infinite $\lang{\pat_i}$. Since $\pat$ is in $+$NF we can denote $\pat_i$ as $\rho_1 \rho^{n..\infty} \rho_2$. 
By Lemmas~\ref{lem:breakconcat} and~\ref{lem:breakrepetition}, 
$\lang{\pat_i} = \lang{\rho_1} \lang{ \rho}^{n..\infty} \lang{\rho_2}$. Let $w$ be a word in $\lang{ \rho}^{n..\infty}$. As $w \in \lang{ \rho}^{n..\infty}$, then $ww \in \lang{ \rho}^{n..\infty}$ as well, i.e. there exists a pair $(p, \mu) \in \sem{\pat}$ such that $ww$ is a subword of $w_p$. However, as $ww$ repeats each value of $w$ twice, it cannot satisfy the condition. 
\end{proof}

\new{
\subsection{Generalizing Theorem~\ref{thm:Q<edge}}

We define generalized annotated paths as 
paths whose nodes and edges are annotated each with one data value attached to a key $k$. 
We define the language of these paths as sets of words, which we also refer to as \emph{path annotations}, that are obtained by concatenating the data values according to their position. 
In particular, each word in the language starts with a data value of the first node and ends with that of the last one. In between it alternates between data of edge and node.
 For example, the language $\lang{}$ of a generalized path is an ....
 
 We define the operator $\odot$ on path annotations $p_1,p_2$ whenever the last letter of $p_1$ equals to the first letter of $p_2$ as follows:
 \[p_1 \odot p_2 \df n_0.k\, e_0.k\, \cdots  n_{m-1}.k\, e_{m-1}.k\, n_m.k\, e'_0.k\, \cdots  n'_{\ell-1}.k\, e'_{\ell-1}.k\, n_{\ell}.k
\]
We generalize this definition to sets of path annotations in the standard way:
\[
P_1 \odot P_2 \df \{ p_1\cdot p_2 \,\vline \, p_1 \in P_1, p_2 \in P_2, p_2\text{ concatenates to } p_1 \}
\] 

\begin{lemma}\label{lem:pathlang}
	For every one way path patterns $\pat, \pat_1,\pat_2$ the following holds:
	\begin{itemize}
		\item $\lang{\pat_1\pat_2} = \lang{\pat_1} \odot \lang{\pat_2}$
		\item $\lang{\pat^n} = \underbrace{\lang{\pat} \odot \cdots \odot \lang{\pat}}_{n \text{ times}}$
		\item $\lang{\pat^{n..\infty}}= \bigcup_{k=n}^\infty \lang{\pat^k}$
	\end{itemize} 
\end{lemma}
\begin{proof}
	.....
\end{proof}

With this, we can move to prove a pumping-lemma for one-way path patterns. This lemma shows that whenever we have a path pattern that accepts paths of unbounded length, we can obtain infinitely many paths that are also accepted by the pattern. 
\begin{theorem}\label{thm:pump}
    Let $\pat$ be a one-way path pattern such that for every $n\in \mathbb{N}$ there is an annotated path $p$ of length $n$ accepted by $\psi$. Then there exists $n_0\in \mathbb{N}$ such that if $|p|>n_0$ then $p=p_1 p_2 p_3$ with $|p_2|>1$ and for every $n\in \mathbb{N}$, the annotated path $p_1 p_2^{n+1}p_3 $ is accepted by $\psi$.
\end{theorem}
\begin{proof}
   Let $\pat$ be a one-way path pattern in $+$NF of the form $\pat_1+\cdots+ \pat_m$.
Assume that for every $n\in \mathbb{N}$ there is an annotated path $p$ of length $n$ in $\lang{\psi}$.  This implies that there is at least one $\pat_i$ for which the annotated paths in $\lang{\pat_i}$ are not of bounded length. That implies also that   $\lang{\pat_i}$ is infinite. 
Since $\pat$ is in $+$NF we can
assume
by Lemma~\ref{lem:pathlang} that there are $\rho_1,\rho,\rho_2$ such that 
$\lang{\pat_i} = \lang{\rho_1} \lang{ \rho}^{n..\infty} \lang{\rho_2}$ with $ \lang{ \rho}^{n..\infty}$ of unbounded length \footnote{ Notice that it can be the case that $\lang{\pat_i} = \lang{\rho_1} \lang{ \rho}^{n..\infty} \lang{\rho_2} \langle \theta \rangle$ in which case the sequel of the proof still holds since the condition $\theta$ can only refer to variables in $\rho_1,\rho_2$.}.
This implies that not only  $\lang{ \rho}$ contains at least one path annotation with at least three symbols (which means at least one edge data value) but also that there are path annotations $p_1,\ldots ,p_k \in \lang{\rho}$ such that $p_1\odot \cdots \odot p_k$ is well defined and $p_1$ concatenates to it.
As $p_1,\ldots ,p_k  \in \lang{ \rho}^{n..\infty}$, then by definition of $\lang{\,}$ so does $\underbrace{p_1\odot\cdots \odot p_k  \odot \cdots \odot p_1 \odot \cdots \odot p_k }_{\ell+1\text{ times }} \in \lang{ \rho}^{n..\infty}$ for every $\ell \in \mathbb{N}$.
Using  Lemma~\ref{lem:pathlang}  we can now compose new path annotations in  $\lang{ \rho}^{n..\infty}$ which completes the proof of the claim.
\end{proof}

Based on this pumping argument we are now ready to prove the main inexpressibility result.
\begin{proof}[proof of Theorem~\ref{thm:Q<edge}]
 Assume by contradiction that there is an order motif $w$ for which $Q^E_w$ is expressible by a one-way path pattern $\pat$.
Since $\pat$ accepts paths of unbounded length, we can apply Therorem~\ref{thm:pump} and obtain that if $p$ is accepted by $\pat$ then so does its pumped version. However, by pumping $p$ we obtain a path for which the edges do not conform the motif $w$. \liat{Can elaborate more depending on which part of $w$ the pumped part conforms to.... }
\end{proof}

\section{Proof of Corollary~\ref{thm:QNdiff}}
The claim on $\qdisjedges$ is proved similarly to Theorem~\ref{thm:Q<edge}.
	Assume by contradiction that there is a one-way path pattern $\pat$ that expresses $\qdisjnodes$.  Using the same notation as in the proof of Theorem~\ref{thm:Q<edge} we get a contradiction since if  $p_1,\ldots,p_k$ consists of three symbols it has to be the case that the first and last are the same, and otherwise the pumping results in a path annotation that does not satisfy the condition. 
}

\section{Proof of Theorem~\ref{thm:QNdiff}}

First, notice that if a path contains a cycle, it cannot be an answer to $\qdisjnodes$ as the value in the node the path comes back to would be repeated (at least) twice, so we can safely consider only "path shaped" graphs with values on the nodes. Formally, we look at node-annotated paths of the form 
\begin{center}
    \begin{tikzpicture}
    \node at (0,0) [shape=circle, draw] (node1) {$v_0$};
    \node at (1.5,0) [shape=circle] (invisiNode) {\ldots};
    \node at (3,0) [shape=circle, draw] (node2) {$v_n$};
    \draw[->] (node1.east) -- (invisiNode.west) node [above,align=center,midway]{$e_0$};
    \draw[->] (invisiNode.east) -- (node2.west) node [above,align=center,midway]{$e_{n-1}$};
\end{tikzpicture}
\end{center}
where $v_0, \ldots, v_n$ are distinct nodes, $e_0, \ldots, e_{n-1}$ are distinct edges (for $n > 0$) and each $v_i$ has the property $v_i.k$ defined. The node-value word of a node-annotated path is defined as $w^N_p = v_1.k \cdot v_2.k \cdots v_n.k$. 

As in Section~\ref{:annex proof:Q<edge}, we define the node-value language of a one-way path pattern $\pat$ as $\mathcal{L}^N(\pat) = \{w \mid \exists (p, \mu) \in \sem{\pat}_p, w^N_p = w\}$.

\begin{lemma}\label{lem:node-value-pumping}
    For every one-way infinitely-repeated path pattern $\pat^{n..\infty}$ , if a word $w \in \mathcal{L}^N(\pat^{n..\infty})$ then the word $ww$ also belongs to $\mathcal{L}^N(\pat^{n..\infty})$.
\end{lemma}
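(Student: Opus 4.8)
The plan is to transport the invariance and trimming machinery from edge-value words $w_p$ to node-value words $w^N_p$, and then to pump by \emph{doubling the number of repetitions}. Suppose $w\in\mathcal{L}^N(\pat^{n..\infty})$. By definition there is a node-annotated path $p$ with $w^N_p=w$ and a match $(p,\mu_\emptyset)\in\sem{\pat^{n..\infty}}_p$, so $(p,\mu_\emptyset)\in\sem{\pat}^{\,i}_p$ for some $i\ge n$; write $p=q_1\cdots q_i$ with each block $(q_j,\nu_j)\in\sem{\pat}_p$. It then suffices to exhibit a path $p'$ with $w^N_{p'}=ww$ and $(p',\mu_\emptyset)\in\sem{\pat}^{\,2i}_{p'}$: since $2i\ge n$ and the repetition is \emph{unbounded}, this places $ww$ in $\mathcal{L}^N(\pat^{n..\infty})$. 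The hypothesis $m=\infty$ is used precisely here, to license the jump from $i$ to $2i$ blocks.

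I would build $p'$ by appending to $p$ a fresh isomorphic copy $\bar p=\bar q_1\cdots\bar q_i$ of $p$: identify the first node of $\bar p$ with the last node of $p$ so that $p\concatto\bar p$, take fresh ids for all remaining nodes and edges of $\bar p$, and copy the property values along $\bar p$ from those of $p$. Because $w^N$ drops the first node of a path, one obtains the homomorphism $w^N_{p'}=w^N_p\cdot w^N_{\bar p}=w\cdot w$. To see that $p'$ is matched, I would transport each block match to its own subpath via the node analogue of Lemma~\ref{lem:pathtrim}(1), re-expand to $p'$ via Lemma~\ref{lem:pathtrim}(2) using $q_j,\bar q_j\sqsubseteq p'$, and use the node analogue of Lemma~\ref{lem:pathinvariance} to move the matches of $q_1,\dots,q_i$ onto $\bar q_1,\dots,\bar q_i$; the blocks $\bar q_2,\dots,\bar q_i$ are perfect duplicates, so invariance there is trivial. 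These node analogues of Lemmas~\ref{lem:pathinvariance}--\ref{lem:pathtrim} are proved by the same inductions, now tracking $w^N_p$ in place of $w_p$. Once every block matches $\sem{\pat}_{p'}$ and consecutive blocks concatenate, the $2i$ blocks tile $p'$ and give $(p',\mu_\emptyset)\in\sem{\pat}^{\,2i}_{p'}$.

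The step I expect to be the main obstacle is the interface at the single shared node where the copy is glued on, namely the first node of $\bar q_1$, which coincides with the last node of $q_i$. Its property value is recorded by the left part of $p'$ but is invisible in the node-value word of the copied part, so $\bar q_1$ is \emph{not} a perfect duplicate of $q_1$ at that one node: its first node carries the value of $p$'s \emph{last} node rather than that of $p$'s first node. Making the match of $\bar q_1$ go through therefore relies on the invariance lemma being insensitive to the first node's property value, which is exactly the delicate point with no counterpart in the edge-value proof, where edges are never shared between consecutive blocks. I would accordingly concentrate the verification on confirming that invariance survives, for whatever conditions $\langle\theta\rangle$ the block pattern $\pat$ may carry, under a change of the first node's value.
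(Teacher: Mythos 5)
Your construction is the same one the paper uses: append to $p$ a fresh copy glued at $p$'s last node, note that $w^N$ omits the first node of a path so the node-value word doubles, and transport the match block by block (the paper phrases the last step as passing from $\sem{\pat^{n..\infty}}_p$ and $\sem{\pat^{n..\infty}}_{p'}$ to $\sem{\pat^{n..\infty}}_{p\concatto p'}$ rather than counting $i\mapsto 2i$ blocks, but it is the same argument). The difference is that you explicitly flag, and defer, the one step the paper asserts silently: that the copied path still matches even though its first node carries the value of $p$'s \emph{last} node rather than of $p$'s first node. That deferred verification is the entire content of the lemma, and it cannot be carried out: the invariance you are hoping for is false, and so, as stated, is the lemma. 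Take $\pat \df \big((x)\rightarrow(y)\big)\langle x.k<y.k\rangle$ and consider $\pat^{1..\infty}$, a one-way infinitely-repeated pattern. The two-node annotated path $p$ with $v_0.k=1$ and $v_1.k=2$ satisfies $(p,\mu_{\emptyset})\in\sem{\pat^{1..\infty}}_p$, and $w^N_p=2$ because $w^N$ drops $v_0$; hence $w=2\in\mathcal{L}^N(\pat^{1..\infty})$. A witness for $ww=2\cdot 2$ would be a three-node path with values $a,2,2$ whose \emph{entire} underlying path lies in $\sem{\pat^{1..\infty}}$; since every match of $\pat$ covers exactly one forward edge, this forces two iterations and in particular the condition $2<2$ on the second edge. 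So $ww\notin\mathcal{L}^N(\pat^{1..\infty})$.

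The reason this has no counterpart in the edge-value argument (Lemmas~\ref{lem:pathinvariance}--\ref{lem:rep}) is the one you put your finger on: an edge-annotated path is determined up to isomorphism by $w_p$ alone, since there nodes carry no properties and edges are never shared between concatenated blocks, whereas a node-annotated path is determined only by the pair consisting of the first node's value and $w^N_p$. Conditions inside $\pat$ may compare the first node's value with later ones, so two paths with equal $w^N$ need not be interchangeable, and the glued copy is not an isomorphic copy at its first node. Any correct pumping statement here has to carry the glue value through --- for instance by recording the first node's value alongside $w^N_p$, or by restricting to witnesses whose first and last nodes agree on $k$ so that the appended copy really is isomorphic. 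As written, neither your proposal nor the paper's own proof establishes the lemma; your diagnosis of where the difficulty sits is exactly right, but the verification you propose to carry out at that point cannot succeed.
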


\begin{proof}
    Let $w$ be a word in $\mathcal{L}^N(\pat^{n..\infty})$. By definition of $\mathcal{L}^N$, there exists a path $\pathval(v_1, e_1, \ldots e_{k-1}, v_k)$ and a valuation $\mu$ such that $(p, \mu) \in \sem{\pat^{n..\infty}}_p$ and $w^N_p = w$. Let $p'=\pathval(v'_1, e'_1, \ldots e'_{k-1}, v'_k)$ be a copy of $p$ with fresh ids for all elements (nodes and edges) except for $v'_1$ for which $id(v'_1) = id(v_k)$. By definition of path concatenation, $p$ and $p'$ concatenate and, since $(p, \mu) \in \sem{\pat^{n..\infty}}_p$, we can construct a valuation $\mu'$ such that $(p', \mu') \in \sem{\pat^{n..\infty}}_{p'}$ as follows: $\mu'(x)=v_i'$ whenever $\mu(x)=v_i$ for all $0 \leq i \leq k$ and $\mu'(y)=e'_j$ whenever $\mu(y)=e_j$ for all $0 \leq j < k$. As $(p, \mu) \in \sem{\pat^{n..\infty}}_p$ (resp. $(p', \mu') \in \sem{\pat^{n..\infty}}_{p'}$), it can easily be shown by induction that $(p, \mu) \in \sem{\pat^{n..\infty}}_{p\concatto p'}$ (resp. $(p', \mu') \in \sem{\pat^{n..\infty}}_{p\concatto p'}$) and, by definition of the semantics of repetition, we can conclude that $(p \concatto p',\mu_\emptyset) \in \sem{\pat^{n..\infty}}_{p\concatto p'}$ and so we get that $w^N_p \cdot w^N_{p'} = w \cdot w$ belongs to $\mathcal{L}^N(\pat^{n..\infty})$. 
\end{proof}

\begin{lemma}\label{lem:node-value-subword}
    For any one-way path  patterns $\pat, \pat', \pat''$ such that $\mathcal{L}^N(\pat' \pat \pat'') \neq \emptyset$ and any node-value word $w$, if $w \in \mathcal{L}^N(\pat)$ then there exists a node-annotated path $p$ such that $w^N_p \in \mathcal{L}^N(\pat' \pat \pat'')$ and $w$ is a subword of $w^N_p$.
\end{lemma}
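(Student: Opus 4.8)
The plan is to prove this by a \emph{surgery} argument: I take a witness for the non-emptiness of $\mathcal{L}^N(\pat'\pat\pat'')$, cut out the portion that matches $\pat$, and splice in a path realizing $w$, while keeping the two flanking matches intact. First I would unfold the two hypotheses. From $\mathcal{L}^N(\pat'\pat\pat'')\neq\emptyset$ I obtain a node-annotated path $q$ and a mapping $\nu$ with $(q,\nu)\in\sem{\pat'\pat\pat''}_q$; by the concatenation clause of the semantics this decomposes as $q=q_1q_2q_3$ (pairwise concatenating sub-paths) together with matches $(q_1,\nu_1)\in\sem{\pat'}_q$, $(q_2,\nu_2)\in\sem{\pat}_q$ and $(q_3,\nu_3)\in\sem{\pat''}_q$. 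This witness is exactly what pins down a \emph{compatible context}: the two shared boundary nodes of $q_1q_2$ and of $q_2q_3$ carry values for which all three matches succeed simultaneously. From $w\in\mathcal{L}^N(\pat)$ I obtain a node-annotated path $r$ and $\mu$ with $(r,\mu)\in\sem{\pat}_r$ and $w^N_r=w$.

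Next I would build $p$ by replacing $q_2$ with a fresh isomorphic copy $\hat r$ of $r$, identifying $\hat r$'s first node with the last node of $q_1$ and $\hat r$'s last node with the first node of $q_3$, so that $p=q_1\,\hat r\,q_3$ is again a single node-annotated path. Its node-value word then factors as $w^N_p=w^N_{q_1}\cdot w^N_{\hat r}\cdot w^N_{q_3}$ with $w^N_{\hat r}=w$, so $w$ occurs as a subword of $w^N_p$. It remains to certify that $p$ is accepted, i.e. $w^N_p\in\mathcal{L}^N(\pat'\pat\pat'')$. For the two flanks I invoke the node-annotated counterpart of the trimming Lemma~\ref{lem:pathtrim} (purely structural, hence carried over verbatim): since $q_1,q_3\sqsubseteq p$ and one-way patterns read only forward, the matches $(q_1,\nu_1)$ and $(q_3,\nu_3)$ first restrict to the stand-alone sub-paths (Lemma~\ref{lem:pathtrim}$(1)$) and then re-embed into the larger graph $p$ (Lemma~\ref{lem:pathtrim}$(2)$); for the spliced middle, the node-annotated analog of the invariance Lemma~\ref{lem:pathinvariance} transfers $(r,\mu)$ to a match $(\hat r,\hat\mu)$ of $\pat$ inside $p$. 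Compatibility of the three mappings is automatic, because for one-way concatenation the free-variable sets $\sch{\pat'}$, $\sch{\pat}$, $\sch{\pat''}$ are pairwise disjoint, exactly as used in the $\supseteq$ direction of Lemma~\ref{lem:breakconcat}.

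The hard part is the gluing at the two shared boundary nodes. Identifying $\hat r$'s endpoints with those of $q_1$ and $q_3$ forces a single value on each shared node, and that value is ``visible'' to the match on one side even when it is dropped from the node-value word on the other side; a careless identification can therefore break $\pat$'s match on $\hat r$ or $\pat''$'s match on $q_3$, since the node-annotated invariance statement is sensitive to the configuration of $=$ and $<$ constraints among the nodes referenced by each conditional sub-pattern (unlike the edge case, where \emph{all} values are visible in $w_p$ and invariance is unconditional). The two features that rescue the argument are precisely the asymmetries built into the setup: (i) $w^N_p$ omits the value of the \emph{first} node of each segment, which gives a free choice at the left glue point (the first node of $\hat r$); and (ii) ``subword'' means a scattered \emph{subsequence} rather than a contiguous factor, so any boundary symbol that must be perturbed at the right glue point can be absorbed without destroying $w$ as a subsequence. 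Making this precise — using the values supplied by the context witness $q$ to fix boundary values that preserve the relevant order-type, and then reading off $w$ as a subsequence of the resulting $w^N_p$ — is the technical core, and it is exactly where the hypothesis $\mathcal{L}^N(\pat'\pat\pat'')\neq\emptyset$ is indispensable.
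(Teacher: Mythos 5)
Your surgery plan is, at bottom, the same construction the paper uses: the paper also fixes a realizing path $p$ of $w$, produces flanking matches $p'$ of $\pat'$ and $p''$ of $\pat''$ with $p'\concatto p$ and $p\concatto p''$, and returns $p'\concat p\concat p''$, whose node-value word contains $w$ as a contiguous factor; compatibility of the three mappings is, as you say, automatic from the disjointness of free variables in one-way concatenations. Where you part ways is that the paper does not treat the boundary gluing as a problem: it simply asserts, ``by the semantics of concatenation and since $\mathcal{L}^N(\pat'\pat\pat'')\neq\emptyset$,'' that such $p'$ and $p''$ exist \emph{for the given} $p$. Your diagnosis that something must be argued there is fair — but you then leave the step open, and of the two rescues you sketch, only one is sound.

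Point (i) is fine: $w^N$ omits each segment's first node, so the left glue point has slack (modulo $\pat$'s and $\pat'$'s own constraints on that node). Point (ii) fails: the right glue point is the \emph{last} node of the spliced copy of $r$, and its value is the last letter of $w=w^N_{r}$. Perturbing it changes that letter in the middle segment's contribution to $w^N_p$, so $w$ survives as a subsequence only if its last letter happens to recur inside the $q_3$ part, which is not guaranteed; the perturbation can also destroy the match of $\pat$ whenever $\pat$ constrains its last node's value. Concretely, take $\pat''=(x)\to(y)\langle y.k<x.k\rangle$ and let the last letter of $w$ be minimal in the value order: no match of $\pat''$ can then start at the right glue point, even though the lemma's conclusion still holds via a witness in which the occurrence of $w$ is not aligned with the match of $\pat$. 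Closing the step requires either an assumption on the value order (dense and unbounded, so an order-automorphism transports the witness's flanking matches to the new boundary values) or restricting to the way the lemma is actually invoked in Theorem~\ref{thm:QNdiff}, where the middle pattern is a variable-free repetition and $w$ can be taken to be the node-value word of the witness's \emph{own} middle segment, so no boundary value ever changes. Neither your write-up nor, admittedly, the paper's proof makes this explicit; but since you isolated the difficulty and then did not resolve it, your argument as submitted is incomplete precisely there.
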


\begin{proof}
    Let $w$ be a word in $\mathcal{L}^N(\pat)$. By definition of $\mathcal{L}^N$, there exists a path $p=\pathval(v_1, e_1, \ldots, v_k)$ and an assignment $\mu$ such that $(p, \mu) \in \sem{\pat}_G$ and $w = w^N_p$ for some graph $G$. By the semantics of concatenation and since $\mathcal{L}^N(\pat_1 \pat \pat_2) \neq \emptyset$, there are $p'=\pathval(v'_1, e'_1, \ldots, v'_{k'})$, $p''=\pathval(v''_1, e''_1, \ldots, v''_{k''})$, $\mu'$ and $\mu''$ such that $(p', \mu') \in \sem{\pat'}_G$, $(p'', \mu'') \in \sem{\pat''}_G$, $p' \concatto p$, $p \concatto p''$ and $\mu, \mu'$ and $\mu''$ are compatible. By definition of path concatenation, $p' \concatto p \concatto p'' = \pathval(v'_1, e'_1, \ldots, v'_{k'}, e_1, \ldots, v_k, e''_1, \ldots v''_{k''})$ forms a path whose node-value word is $w^N_{p' \concatto p \concatto p''} = v'_1.k \cdots v'_{k'}.k \cdot v_2.k \cdots v_k.k \cdot v''_2.k \cdot v''_{k''}.k$. Since the paths concatenate, we know that $v'_{k'} = v_1$ and $v''_1 = v_k$ and so $w$ is a subword of $w^N_{p' \concatto p \concatto p''}$. Once again by definition of concatenation, we get that $(p' \concatto p \concatto p'', \mu' \join \mu \join \mu'') \in \sem{\pat' \pat \pat''}_G$ and so $w^N_{p' \concatto p \concatto p''} \in \mathcal{L}^N(\pat' \pat \pat'')$. 
\end{proof}

The above lemma can be easily extended to concatenations of arbitrary size. 

We can now prove Theorem~\ref{thm:QNdiff}.

Assume, by contradiction, that there is a one-way path pattern $\pat$ equivalent to $\qdisjnodes$. By lemma~\ref{lem:plusNF}, we can assume that $\pat$ is in +NF, and so of the shape $\pat_1+\cdots+\pat_m$, and since the language of $\qdisjnodes$ is infinite, we can further assume that at least one of the $\pat_i$ is of the shape $\rho_1 \ldots \rho_j^{n..\infty} \ldots \rho_k$. Let $w$ be a word in $\mathcal{L}^N(\rho_j^{n..\infty})$. By lemma~\ref{lem:node-value-pumping}, we have that $ww$ also belongs to $\mathcal{L}^N(\rho_j^{n..\infty})$ and, by lemma~\ref{lem:node-value-subword} we can conclude that there exists a path $p$ and an assignment $\mu$ such that $ww$ is a subword of $w^N_p$ and $(p, \mu) \in \mathcal{L}^N(\rho_1 \ldots \rho_j^{n..\infty} \ldots \rho_k)$. Since all values in $ww$ are repeated twice, $\pat$ cannot be equivalent to $\qdisjnodes$.

\section{Proof of Theorem~\ref{main-gql-thm}}

\newcommand{\pos}{\mathsf{pos}}
\newcommand{\dlpath}{\mathbf{p}_\epsilon}

In what follows, we focus on \emph{data-less paths} which are, as before, graphs whose underlying structures are paths, without properties and labels. 
{\begin{center}
    \begin{tikzpicture}
    \node at (0,0) [shape=circle, draw] (node1) {$v_0:\mathsf{first}$};
    \node at (2,0) [shape=circle] (invisiNode) {\ldots};
    \node at (4,0) [shape=circle, draw] (node2) {$v_n:\mathsf{last}$};
    \draw[->] (node1.east) -- (invisiNode.west) node [above,align=center,midway]{$e_0$};
    \draw[->] (invisiNode.east) -- (node2.west) node [above,align=center,midway]{$e_{n-1}$};
\end{tikzpicture}
\end{center}
\noindent where $n\ge 0$, and $v_i \ne v_j\in\Nodeset, e_i \ne e_j\in \Edgeset$ for every $i \ne j$, and $\lbl(v_1) = \mathsf{first},\,\lbl(v_n)= \mathsf{last}$. 
}

We then show that no Boolean Core GQL query returns true on
 $\gdb_n$ iff $n$ is a power of 2.

\begin{theorem}\label{thm:2_pow_n}
There is no Core GQL query $\query$ such that for every data-less path $\mathbf{p}$ the following holds:
\[
\sem{\query}_\mathbf{p} = \true \text{ if and only if }\, 
\len(\mathbf{p})=2^n,n\in \mathbb{N}
\]
\end{theorem}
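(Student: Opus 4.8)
The plan is to show that, on data-less paths, every Boolean Core GQL query can define only a \emph{semilinear} (equivalently, ultimately periodic) set of lengths, and then to invoke the elementary fact that $\{2^n : n\in\mathbb N\}$ has unbounded gaps and so is not ultimately periodic. By Corollary~\ref{cor:pgqvsgql} it suffices to argue about Core PGQ, i.e.\ about $\RA(\globalschpat)$, whose tree-shaped syntax makes a structural induction cleaner than threading a driving table through $\LRA$. I would first fix coordinates: on a data-less path $\mathbf p$ with $n=\len(\mathbf p)$, each node $v_i$ and edge $e_i$ is determined by its position $\pos(v_i)=\pos(e_i)=i$, and conversely each position names a unique element, so two matches are compatible (in the sense of $\sim$) exactly when they agree on the positions of their shared variables. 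Since $\mathbf p$ carries no properties, every atomic condition $x.k=x'.k'$ and $x.k<x'.k'$ is false, and the only conditions with content are the label checks $\mathsf{first}(x)$ and $\mathsf{last}(x)$, which hold iff $\pos(\mu(x))=0$ and $\pos(\mu(x))=n$. After encoding values by positions, everything pattern matching can do on $\mathbf p$ is arithmetic on $\{0,\dots,n\}$ with the distinguished constant $n$.

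The technical core is a definability lemma, proved by induction on patterns: for every pattern $\pat$ there is a Presburger formula $\varphi_\pat(s,t,\bar p,N)$ such that $(p,\mu)\in\sem{\pat}_{\mathbf p}$ for some $p$ iff $\varphi_\pat$ holds when $s,t$ are the positions of the first and last node of $p$, $\bar p$ lists the positions $\pos(\mu(x))$ for $x\in\sch{\pat}$, and $N=n$. The base cases are single linear constraints: $(x)$ gives $s=t=p_x$; $\overset{x}{\rightarrow}$ gives $t=s+1,\ p_x=s$; and $\overset{x}{\leftarrow}$ gives $t=s-1,\ p_x=t$. Conditions $\pat\langle\theta\rangle$ intersect with a quantifier-free formula, union $\pat_1+\pat_2$ is disjunction, and concatenation $\pat_1\,\pat_2$ is handled by existentially quantifying the shared midpoint (the last node of $p_1$ = first node of $p_2$) and adding equalities for shared variables. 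All of these are operations under which Presburger formulas are closed.

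The hard part, which I expect to be the main obstacle, is unbounded repetition (the case $m=\infty$ of $\pat^{n..m}$). Here the free variables are discarded, so only the relation $R_\pat(s,t,N)=\exists\bar p\,\varphi_\pat(s,t,\bar p,N)$ is relevant, and one must show that its transitive-closure-like iteration along the path (taking at least the prescribed number of steps) is still Presburger. The key observation is that a match of $\pat$ induces a walk on the line graph $\mathbf p$ whose \emph{net displacement} $t-s$ ranges over a fixed semilinear set, together with optional anchoring of an endpoint to position $0$ or $N$ coming from $\mathsf{first}/\mathsf{last}$; iterating then yields walks whose reachability, subject only to staying inside $[0,N]$, is a one-dimensional reachability relation and hence uniformly (in $N$) Presburger-definable. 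This is exactly where data-lessness is essential: with no values to compare, a pattern's displacement cannot depend multiplicatively on the current position, so no super-semilinear behaviour (such as a relation $t=2s$, whose iteration would produce the powers of $2$) can arise inside the induction.

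Finally I would lift the lemma through $\RA(\globalschpat)$: projection is existential quantification over position variables, selection is conjunction with a quantifier-free condition, union, intersection and difference are the corresponding Boolean connectives, and join is conjunction with identification of the shared position variables — all preserving Presburger-definability uniformly in $N$. A Boolean query therefore yields a Presburger sentence $\Phi(N)$, so $\{n:\sem{\query}_{\mathbf p}=\true,\ \len(\mathbf p)=n\}=\{n:\Phi(n)\}$ is a Presburger-definable, hence semilinear, hence ultimately periodic subset of $\mathbb N$. Because $\{2^n:n\in\mathbb N\}$ is not ultimately periodic, no Core GQL query can have exactly this length set, which is the desired contradiction and proves Theorem~\ref{thm:2_pow_n}.
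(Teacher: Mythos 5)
Your proposal is correct and follows essentially the same route as the paper: encode graph elements of a data-less path by their positions, translate patterns (and then the relational operations on top of them) into Presburger Arithmetic by structural induction, handle unbounded repetition via the semilinearity of the single-iteration displacement set (the paper does this explicitly through a finite-automaton construction and Parikh's theorem), and conclude that the definable length sets are semilinear and hence ultimately periodic, which $\{2^n : n\in\mathbb N\}$ is not. The only cosmetic difference is that you route the argument through Core PGQ via Corollary~\ref{cor:pgqvsgql}, whereas the paper's final lemma performs the same induction directly over the relational-algebra cases.
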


\subsection{Proof of Theorem~\ref{thm:2_pow_n}}

For a data-less path $\dlpath$, we define the function $\pos$ that maps each node in $\dlpath$ to its position (where the position of the first node in $\dlpath$ is $1$) and each edge in $\dlpath$ to the tagged position of its source node. We do so to 
distingush whether a position correspond to a node or an edge.
For example, if $\dlpath$ is \begin{tikzpicture}
    \node at (0,0) [shape=circle, draw] (node1) {$n_1$};
    \node at (1.5,0) [shape=circle] (invisiNode) {\ldots};
    \node at (3,0) [shape=circle, draw] (node2) {$n_k$};
    \draw[->] (node1.east) -- (invisiNode.west) node [above,align=center,midway]{$e_1$};
    \draw[->] (invisiNode.east) -- (node2.west) node [above,align=center,midway]{$e_{k-1}$};
\end{tikzpicture} 
then  $\pos(n_j) = j$ and $\pos(e_j) = j'$.
We treat tagged integers as integers wrt to arithmetics and equality.

Before proving the translation from path patterns to PA, we first show that path patterns without variables can be translated to automata. 

\begin{lemma}\label{lem:pat_to_aut}
    For every path pattern $\pat$ with no variables,  there exists a finite automaton 
    $A_{\pat}$  such that
    \[     o_1 \cdots o_{m-1}  \in \lang{A_{\pat}}\, \text{iff}\,
         ((n_1,e_1,\ldots,n_{m-1},e_{m-1},n_m),\emptyset)\in \sem{\pat}_{\dlpath}  
         \]
         where $o_j \df a$ if $e_j$ is a forward edge (from $n_{j-1}$ to $n_j$) and  $o_j \df b$ if $e_j$ is a backward edge (from $n_{j}$ to $n_{j-1}$).  
\end{lemma}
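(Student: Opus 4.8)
The plan is to prove the statement by structural induction on $\pat$, assembling $A_{\pat}$ from the automata of its subpatterns using the closure of regular languages under union, concatenation, and bounded and unbounded powers. For a path $p=(n_1,e_1,\ldots,e_{m-1},n_m)$ of $\dlpath$ I write $d(p)\df o_1\cdots o_{m-1}\in\{a,b\}^{*}$ for its direction word, as defined in the statement. Since $\pat$ contains no variables, neither does any of its subpatterns, and in particular no conditional subpattern $\pat'\langle\theta\rangle$ can occur, because every atom of a condition $\theta$ mentions a variable; hence $\pat$ is built solely from $(x)$, $\overset{x}{\rightarrow}$, $\overset{x}{\leftarrow}$ (with $x$ omitted), $\pat_1\pat_2$, $\pat_1+\pat_2$, and $\pat^{n..m}$, and every match has the form $(p,\mu_{\emptyset})$. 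The induction hypothesis I maintain is exactly the lemma: for every path $p$ of $\dlpath$, $(p,\mu_{\emptyset})\in\sem{\pat}_{\dlpath}$ iff $d(p)\in\lang{A_{\pat}}$.

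First I would dispatch the base cases. The automaton for $(x)$ accepts $\{\varepsilon\}$, the one for $\overset{x}{\rightarrow}$ accepts $\{a\}$, and the one for $\overset{x}{\leftarrow}$ accepts $\{b\}$. Each matches the semantics of Figure~\ref{fig:sempathpattern} directly: a single-node path has empty direction word, a forward-edge match is precisely one forward traversal ($o=a$), and a backward-edge match one backward traversal ($o=b$). The key point, already visible here, is that none of these matching conditions refers to where in $\dlpath$ the match occurs.

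For the inductive cases I would invoke the closure properties. For $\pat_1+\pat_2$ I take $\lang{A_{\pat_1+\pat_2}}\df\lang{A_{\pat_1}}\cup\lang{A_{\pat_2}}$, with correctness immediate from $\sem{\pat_1+\pat_2}_{\dlpath}=\sem{\pat_1}_{\dlpath}\cup\sem{\pat_2}_{\dlpath}$ and the two hypotheses. For $\pat_1\pat_2$ I take the concatenation $\lang{A_{\pat_1}}\cdot\lang{A_{\pat_2}}$: a path matches $\pat_1\pat_2$ iff it splits as $p=p_1\concat p_2$ with $(p_i,\mu_{\emptyset})\in\sem{\pat_i}_{\dlpath}$ and $p_1\concatto p_2$, which corresponds exactly to factoring $d(p)=d(p_1)\,d(p_2)$; both directions use the hypotheses together with the facts that every contiguous sub-walk of a path of $\dlpath$ is again a path and that two consecutive sub-walks automatically satisfy $p_1\concatto p_2$ because they share the split node. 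For $\pat^{n..m}$ I set $\lang{A_{\pat^{n..m}}}\df\bigcup_{i=n}^{m}\lang{A_{\pat}}^{\,i}$, a finite union when $m<\infty$ and $\lang{A_{\pat}}^{\,n}\cdot\lang{A_{\pat}}^{*}$ when $m=\infty$, which is regular; by the semantics of repetition a path matches iff it is a contiguous concatenation of $i$ pieces ($n\le i\le m$) each matching $\pat$, and, exactly as in the concatenation case, the required concatenations hold for free, so this matches factoring $d(p)$ into $i$ factors each in $\lang{A_{\pat}}$.

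The step I expect to be the main obstacle is making the appeal to \emph{position-independence} fully rigorous in the concatenation and repetition cases: I must argue that whether a contiguous sub-walk matches a given subpattern is determined solely by its direction word, independently of its absolute location inside $\dlpath$, so that the factorization of $d(p)$ can be performed locally in both directions of the biconditional and the induction hypothesis applied to each piece. This is precisely where the absence of variables—and hence of the label and property conditions that are the only constructs capable of tying a match to an absolute position—is essential. The rest is bookkeeping: the degenerate cases $m=1$ (empty direction word, a single-node path) and the $0$-th power (contributing $\{\varepsilon\}$), and the observation that $A_{\pat}$ may accept direction words not realized by any path of a particular $\dlpath$, which is harmless since the biconditional is only asserted for paths $p$ that actually occur.
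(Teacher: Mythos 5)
Your proposal is correct and follows essentially the same route as the paper: a structural induction that builds $A_{\pat}$ from the base automata for $\{\varepsilon\}$, $\{a\}$, $\{b\}$ and closes under union, concatenation, and iteration (the paper phrases this as a Thompson-style construction and leaves the correctness verification implicit). Your additional care about why conditional subpatterns cannot occur in a variable-free pattern and why matching is position-independent only makes explicit what the paper's ``straightforward to show'' elides.
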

\newcommand{\rtrans}{\mathsf{right}}
\newcommand{\ltrans}{\mathsf{left}}

\begin{proof}
We use the standard definition of finite automata.
We fix the input alphabet to be $ \{a,b\}$, 
    and prove the claim by induction on $\pat$'s structure. 
    
    The intuition behind the construction is that the 
    $a$'s represent the forward edges and the 
    $b$'s the backward edges in a path that matches 
    $\pat$.
    
    As a convention, $q_0$ and $ q_f$ denote the initial and final states, respectively.
    \paragraph{Base cases.}
    \begin{itemize}
        \item $\pat = ()$ then the automaton uses the following transitions $\delta(q_0,\epsilon)= q_f$.
        \item  $\pat = () \rightarrow ()$ then 
      $\delta(q_0,a)= q_f$.
        \item  $\pat = () \leftarrow ()$ then   $\delta(q_0,b)= q_f$.
        \end{itemize}
    \paragraph{Induction step}
    For the cases
        $\pat = \pat_1 + \pat_2$, 
        $\pat = \pat_1 \pat_2$, and
        $\pat = \pat'^{*}$ 
    we use similar construction to Thompson construction by adding epsilon transitions and redefining the initial and accepting states. 
    It is straightforward to show that this construction satisfies the connection stated in the lemma. 
\end{proof}
\begin{corollary}\label{cor:sl}
    For every path pattern $\pat$ without variables, and every data-less path $\dlpath$ the set 
    $\{ 
    \pos(n_m) -\pos(n_1) \mid (p,\emptyset)\in \sem{\pat}_{\dlpath}, p \df(n_1, e_1,\ldots, n_{m-1},e_{m-1},n_m) \}$ is semi-linear.   
\end{corollary}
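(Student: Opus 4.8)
The plan is to reduce the statement to a Parikh-image computation on the automaton $A_{\pat}$ provided by Lemma~\ref{lem:pat_to_aut}. First I would record the key additive behaviour of $\pos$ on a data-less path: since the underlying graph is a directed path, traversing a forward edge (a letter $a$ in the encoding of Lemma~\ref{lem:pat_to_aut}) moves from a node to its successor and hence increases $\pos$ by exactly $1$, whereas traversing a backward edge (a letter $b$) moves to the predecessor and decreases $\pos$ by $1$. Writing $|w|_a$ and $|w|_b$ for the numbers of $a$'s and $b$'s in a word $w$, it follows that for any matched path $p=(n_1,e_1,\ldots,e_{m-1},n_m)$ with encoding word $w=o_1\cdots o_{m-1}$ we have the telescoping identity $\pos(n_m)-\pos(n_1)=|w|_a-|w|_b$. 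The tagged positions assigned to edges versus nodes are irrelevant here, since only node positions enter the displacement; this is purely a bookkeeping point.

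Combining this identity with the characterization in Lemma~\ref{lem:pat_to_aut}, the set in question is exactly $\{\,|w|_a-|w|_b \mid w\in\lang{A_{\pat}}\,\}$. I would then invoke Parikh's theorem: as $\lang{A_{\pat}}$ is regular, its Parikh image $\{(|w|_a,|w|_b)\mid w\in\lang{A_{\pat}}\}\subseteq\mathbb{N}^2$ is semi-linear. Semi-linear sets are closed under images of integer-linear maps, so applying $(p,q)\mapsto p-q$ produces a semi-linear subset of $\mathbb{Z}$, which is precisely the displacement set, giving the claim.

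The hard part is the tension between membership in $\lang{A_{\pat}}$ and realizability of the encoding word as a genuine walk in the \emph{fixed finite} path $\dlpath$: not every word accepted by $A_{\pat}$ corresponds to a walk that stays within $\dlpath$, since a walk cannot run off either endpoint. I would address this in two complementary ways. For the literal statement about a single $\dlpath$, the displacements are confined to $\{-\len(\dlpath),\ldots,\len(\dlpath)\}$, so the set is finite and hence trivially semi-linear; this is a useful sanity check. To keep the automaton argument intact one instead notes that the realizability constraint---the running position stays in $\{1,\ldots,\len(\dlpath)+1\}$ along every prefix---is itself a regular condition, so intersecting $\lang{A_{\pat}}$ with the regular language of realizable encodings preserves regularity and leaves the Parikh-plus-closure argument valid. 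The latter, uniform formulation is the one that matters downstream, since it is what lets the semi-linear description of lengths be assembled into a single Presburger formula independent of $n$.
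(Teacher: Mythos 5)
Your proposal is correct and follows essentially the same route as the paper: invoke Lemma~\ref{lem:pat_to_aut} to obtain the automaton $A_{\pat}$, observe that the displacement $\pos(n_m)-\pos(n_1)$ equals $|w|_a-|w|_b$ for the encoding word $w$, and conclude by Parikh's theorem together with closure of semi-linear sets under subtraction (the paper's proof is exactly this one-line combination). Your added discussion of realizability of accepted words as walks inside the fixed finite path is a genuine subtlety the paper glosses over, and both of your resolutions --- finiteness of the displacement set for the literal per-path statement, and regularity of the realizability constraint for the uniform version actually used in Lemma~\ref{lem:dlp_to_pa} --- are sound.
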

\begin{proof}
    The proof follows from Lemma~\ref{lem:pat_to_aut},  the fact that the Parikh image of a finite automaton is semilinear, and the closure of semilinear sets under subtraction.
\end{proof}

We obtain the following connection: 
\begin{lemma}\label{lem:dlp_to_pa}
    For every data-less path $\dlpath$, and path pattern $\pat$
    there is a PA formula $\phi_{\pat}(x_s,x_1,\ldots, x_m, x_t)$ where   $\sch{\pat} \df\{x_1,\ldots, x_m\}$ such that 
    \[
\phi_{\pat}(i_s,i_1,\ldots, i_m, i_t) \,\text{ if and only if }\,
  \exists p:\,(p,\mu) \in\sem{\pat}_{\dlpath} 
    \]
    where  $i_j = \pos(\mu(x_j))$ for every $j$, and $i_s,i_t$ are the positions of the first and last nodes of $p$ in $\dlpath$, respectively.
\end{lemma}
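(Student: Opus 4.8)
The plan is to proceed by structural induction on the pattern $\pat$, building for each subpattern a Presburger formula whose free variables are the position variables $x_s, x_1, \ldots, x_m, x_t$, together with a parameter $N$ standing for the length of $\dlpath$ (needed both to bound positions and to interpret the $\mathsf{first}/\mathsf{last}$ labels). The guiding observation is that on a data-less path every node and every edge is completely determined by its position, so a match $(p,\mu)$ is nothing but a walk along the line, fully recorded by the positions it visits. The base cases are then immediate: for $(x)$ we take $\phi_{(x)} := x_s = x_1 \wedge x_1 = x_t$; for $\overset{x}{\rightarrow}$ we relate source and target by $x_t = x_s + 1$ and bind $x$ to the tagged source position; for $\overset{x}{\leftarrow}$ symmetrically with $x_t = x_s - 1$. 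In each case we conjoin the legality constraint that the positions lie between the first position $1$ and the last position $N$.

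For the compositional cases the induction is routine. For concatenation we existentially quantify the join position, setting $\phi_{\pat_1\pat_2} := \exists y\,\big(\phi_{\pat_1}(x_s,\ldots,y) \wedge \phi_{\pat_2}(y,\ldots,x_t)\big)$ and identifying any shared free variables (which enforces $\mu_1 \sim \mu_2$). For $\pat_1 + \pat_2$ we take the disjunction $\phi_{\pat_1} \vee \phi_{\pat_2}$, which is well typed because the two disjuncts share their free variables. For a conditional pattern $\pat\langle\theta\rangle$ we conjoin $\phi_\pat$ with a quantifier-free translation of $\theta$; here the data-less setting makes matters especially clean, since $\prop$ is nowhere defined, so every atom $x.k = x'.k'$ or $x.k < x'.k'$ is false, while $\mathsf{first}(x)$ and $\mathsf{last}(x)$ become the linear equalities $x = 1$ and $x = N$. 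Thus the translation of $\theta$ is just a Boolean combination of linear (in)equalities over positions.

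The delicate case, which I expect to be the main obstacle, is repetition $\pat^{n..m}$. A naive induction would try to iterate the binary relation defined by $\phi_\pat$, but Presburger arithmetic is \emph{not} closed under transitive closure of a definable relation, so this route is blocked. The saving grace is that $\sch{\pat^{n..m}} = \emptyset$, so $\pat^{n..m}$ is a variable-free pattern and we may sidestep the body's induction entirely: by Lemma~\ref{lem:pat_to_aut} it is captured by a finite automaton over forward/backward moves, and by Corollary~\ref{cor:sl} the set of realizable position displacements $x_t - x_s$ is semilinear, hence Presburger-definable. We therefore let $\phi_{\pat^{n..m}}(x_s,x_t)$ assert that $x_s,x_t$ are legal positions and that $x_t - x_s$ belongs to this semilinear displacement set.

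The one point that genuinely requires care, and where the argument must be made precise, is that a displacement being realizable abstractly (as a Parikh difference of an accepted word) must coincide with its being realizable by an \emph{in-bounds} walk for the given $x_s$ and length $N$: we must capture the constraint of staying on the line throughout, not merely the net displacement. This is exactly where the linear shape of $\dlpath$ is used, together with the fact that the excursion data of the automaton is itself semilinear, so the boundary constraints can be folded into the Presburger formula. Once all cases are assembled, the semantics of patterns with output and of the surrounding $\RA$/$\LRA$ operators translate by the usual projection, Boolean, and join combinators of Presburger arithmetic, which is the content needed for Theorem~\ref{thm:2_pow_n}.
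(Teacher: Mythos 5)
Your proposal follows essentially the same route as the paper's proof: a structural induction with the same base cases, conjunction/disjunction for concatenation/union, and the unbounded-repetition case discharged via the automaton of Lemma~\ref{lem:pat_to_aut} and the semilinearity of the displacement set (Corollary~\ref{cor:sl}). If anything you are slightly more careful than the paper, which silently rewrites every $\pat\langle\theta\rangle$ to $\pat$ (glossing over the $\mathsf{first}/\mathsf{last}$ label atoms you translate as $x=1$ and $x=N$) and does not explicitly address the in-bounds/boundary issue you flag for the Parikh-image step.
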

\begin{proof}
We simplify $\pat$:
Since we deal with data-less paths, it is straightforward that any $\pat_{\langle \theta \rangle}$ can be rewritten to $\pat$ while preserving the semantics (on data-less paths). Formally, for every data-less path $P$ and for every condition $\theta$, it holds that $\sem{\pat}_P = \sem{\pat_{\langle \theta \rangle}}_P$.
In addition, we can assume that every repetition $\pat^{n..m}$ where $m<\infty$ can be rewritten to 
$\cup_{i=n}^m \pat_i$. This is true for any input (no need to restrict to data-less paths) due to the semantics definition. 
For repetition $\pat^{n..m}$ with $m=\infty$, we can rewrite the pattern and obtain  $\pat^{n} \pat^{0..\infty}$. Also here, equivalence holds for any input (no need to restrict to data-less paths) due to the semantics definition. 
Recall that now we focus only on path patterns in which we disallow having variables on the edges.

    Under the above assumptions on the form of $\pat$, we define $\phi_{\pat}$ inductively as follows:
\begin{itemize}
    \item If $\pat\df (x)$ then $\phi_{\pat}(x_s,x,x_{t}) \df x_s= x \wedge x=x_{t}$;
   \item If $\pat\df ()$ then $\phi_{\pat}(x_s,x_{t}) \df x_s=x_{t}$;
    \item If $\pat\df (x)\rightarrow(y)$ then $\phi_{\pat}(x_s,x,y,x_t) \df y=x+1 \wedge x=x_s \wedge y=y_s$
  \item   If $\pat\df (x)\leftarrow(y)$ then $\phi_{\pat}(x_s,x,y,x_t) \df x=y+1 \wedge x=x_s \wedge y=y_s$
   \item If $\pat\df \overset{x}{\rightarrow}$ then $\phi_{\pat}(x_s,x,x_t) \df x_t = x_s+1   \wedge x=x_s$
    \item  
    If $\pat\df \pat_1 \pat_2$ then 
    $\phi_{\pat}(x_s,\bar{z}, y_t) \df  \phi_{\pat_1}(x_{s},\bar{x},x_t)
    \wedge \phi_{\pat_2}(y_s,\bar{y},y_t) \wedge x_t = y_s 
   $ where $\bar{z}$ is the union  $\bar{x}\cup \bar{y}$.
\item If $\pat\df \pat_1 + \pat_2$
then 
  $\phi_{\pat}(x_s,\bar{x}, x_t) \df \phi_{\pat_1}(x_s,\bar{x}, x_t) \vee 
  \phi_{\pat_2}(x_s,\bar{x}, x_t)$.
  \item  If $\pat\df \pat_1^{n}$ then
$\phi_{\pat}(x_s,x_t) = \exists x_1,y_1,\bar{z}_1\ldots,x_n, y_{n}, \bar{z}_n:\ 
      \bigwedge_{i=1}^n\phi_{\pat_1}(x_i,\bar{z}_i,y_i)
      \wedge 
      \bigwedge_{i=1}^{n-1} y_{i}=x_{i+1}
      \wedge
      x_1 = x_s \wedge y_n = x_t
      $
  \item  If $\pat\df \pat_1^{*}$ then,
due to Corollary~\ref{cor:sl}, there is a PA formula 
$\psi_{\pat_1^{*}}(m)$ for $\{ |p| \mid (p,\emptyset) \in \sem{\pat_1^{*}}_{\pm} \}$. Therefore,
$\phi_{\pat}(x_s,x_t) \df \exists m: x_t = x_s+ \psi_{\pat_1^{*}}(m)$ 
\end{itemize}
Showing that the condition in the Lemma holds for $\phi_{\pat}$ is straightforward from the definition. 
\OMIT{
Recall that our first assumption was that $\pat$ does not have variabeles on the edges. We show that this assumption is justified by showing that if there are variables on the edges, we can transform $\pat$ into $\pat'$ and change accordingly $\dlpath$ into $\dlpath'$.
To transform $\pat$ into $\pat'$ we replace $(i)$ each $\rightarrow$ with $()\rightarrow()\rightarrow()$  and $(ii)$ each $\overset{x}{\rightarrow}$ with $()\rightarrow(x)\rightarrow()$. The dataless path $\dlpath'$ is obtained from $\dlpath$ by inserteing a fresh node in between each two adjaecent nodes, e.g., if $\dlpath$ is $n_1 \rightarrow n_2 \rightarrow n_3$ then in $\dlpath'$ is   $n_1 \rightarrow n'_1 \rightarrow n_2 \rightarrow n'_2 \rightarrow n_3$ .
We can then obatain a formula $\phi_{\pat'}$  such that 
\[
\phi_{\pat'}(i'_s, i'_1, \ldots, i'_m, i'_t) \,\text{ if and only if }\, \exists p:\, (p,\mu')\in\sem{\pat'}_{\dlpath'}
\] 
for which the conditions of the lemma holds. 
We can reconstruct $\phi_{\pat}$ 
 if $\mu(x_j)$ is odd then $i'_j = \pos(\mu(x_j))$, and otherwise ($\mu(x_j) \in \Edgeset$) then $i_j=2k+1$ where $\mu(x_j)$ is the edge between node in position $k$ and $k+1$.  We can then construct $\phi_{\pat}$ from $\phi_{\pat'}$.
}
\end{proof}
\newcommand{\lmarker}{\triangleleft}
\newcommand{\rmarker}{\triangleright}
\OMIT{\begin{lemma}\label{lem:patrep}
    For every path pattern of the form $\pat^{*}$ without variables, there exists a two-way finite automaton 
    $A_{\pat}$  such that
    \[\{|w|\mid  \lmarker w \rmarker \in \lang{A_{\pat}}\} = \left\{\pos(n_t) - \pos(n_s) \, \middle| \begin{array}{l}
         (p,\emptyset)\in \sem{\pat^{*}}_{\dlpath}  \\
         n_s,n_s\,\text{ are first and last nodes in $p$, respectively}
    \end{array}  \right\}\]
\end{lemma}
}

\begin{lemma}
    For every data-less path $\dlpath$, and path pattern with output $\pat_{\return}$ with variables only on nodes, 
    there is a PA formula $\phi_{\pat}(x_1,\ldots, x_m)$ where   $\return \df\{x_1,\ldots, x_m\}$ such that 
    \[
\phi_{\pat_{\return}}(i_1,\ldots, i_m) \,\text{ if and only if }\,
  \mu \in\sem{\pat}_{\dlpath} 
    \]
    where  $i_j = \pos(\mu(x_j))$ for every $j$.
\end{lemma}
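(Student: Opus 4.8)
The plan is to obtain this lemma as an essentially immediate corollary of Lemma~\ref{lem:dlp_to_pa}, which already translates the underlying pattern $\pat$ (without output) into a PA formula. By that lemma, writing $\sch{\pat} \df \{x_1, \ldots, x_k\}$ — all node variables, since by hypothesis $\pat$ carries variables only on nodes — there is a PA formula $\phi_{\pat}(x_s, x_1, \ldots, x_k, x_t)$ that holds exactly when some path $p$ and mapping $\mu$ satisfy $(p, \mu) \in \sem{\pat}_{\dlpath}$, with $x_s$ and $x_t$ recording the positions of the first and last nodes of $p$ and each $x_j$ recording $\pos(\mu(x_j))$. Since every variable occurring in $\return$ lies in $\sch{\pat}$, after reordering I may assume $\return = \{x_1, \ldots, x_m\}$ with $m \le k$.

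Recall that the output semantics is $\sem{\pat_{\return}}_{\dlpath} = \{\mu_{\return} \mid \exists p : (p, \mu) \in \sem{\pat}_{\dlpath}\}$, so a projected tuple is in the output precisely when some full match exists whose restriction to $\return$ yields it. On data-less paths there are no properties to return, hence $\return$ lists only node variables and $\mu_{\return}$ is completely determined by the positions $i_1, \ldots, i_m$ of the nodes $\mu(x_1), \ldots, \mu(x_m)$. I therefore define the desired formula as the existential closure of $\phi_{\pat}$ over the endpoint positions and the free variables of $\pat$ not exposed in the output:
\[
\phi_{\pat_{\return}}(x_1, \ldots, x_m) \ \df\ \exists\, x_s, x_t, x_{m+1}, \ldots, x_k : \ \phi_{\pat}(x_s, x_1, \ldots, x_k, x_t).
\]

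Correctness then follows by unwinding the definitions: $\phi_{\pat_{\return}}(i_1, \ldots, i_m)$ holds iff there are witness values for $x_s, x_t$ and for the non-output variables making $\phi_{\pat}$ true, which by Lemma~\ref{lem:dlp_to_pa} is equivalent to the existence of a path $p$ and a full mapping $\mu$ with $(p, \mu) \in \sem{\pat}_{\dlpath}$ whose restriction to $\return$ places each $x_j$ at position $i_j$ — exactly the statement that $\mu_{\return} \in \sem{\pat_{\return}}_{\dlpath}$. I expect the only delicate point to be bookkeeping: checking that the quantified block is precisely $\{x_s, x_t\} \cup (\sch{\pat} \setminus \return)$, and that the position-tagging convention of Lemma~\ref{lem:dlp_to_pa} (which distinguishes node positions from edge-source positions) causes no trouble here. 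Because all output variables are node variables, the positions $i_j = \pos(\mu(x_j))$ are ordinary untagged integers and no case split on the kind of returned element is required, which is what makes the reduction a one-line existential closure rather than a fresh induction.
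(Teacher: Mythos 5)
Your proposal is correct and matches the paper's own argument: the paper likewise observes that on data-less paths all output elements are node variables and then derives the result directly from Lemma~\ref{lem:dlp_to_pa} by existentially quantifying over the endpoint variables and the free variables of $\pat$ omitted from $\return$. The only difference is presentational — the paper illustrates the quantification with two worked examples rather than writing the general existential closure, but the content is the same.
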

\begin{proof}
    Notice that since we are dealing with dataless paths as input, all elements in $\return$ are variables. Hence, the proof is implied directly from Lemma~\ref{lem:dlp_to_pa}. In particular, by existentially quantifying over variables that are omitted in $\return$. For example,\begin{align*}
    &\text{if}\, \phi_{\pat}(x_s,\bar{x}, \bar{y} , x_t) \,\text{then}\,\phi_{\pat_{ \bar y}}(\bar{y} ) = \exists x_s,\bar x,x_t:\, \phi_{\pat}(x_s,\bar{x}, \bar{y} , x_t) \\
     &\text{if}\, \phi_{\pat}(x_s,\bar{x}, \bar{z},\bar{y} , x_t) \,\text{then}\,\phi_{\pat_{ \bar{y}\bar{x}}}( \bar{y},\bar{x} ) = \exists x_s,\bar{z},x_t:\, \phi_{\pat}(x_s,\bar{x}, \bar{z},\bar{y} , x_t)  
      \end{align*}
\end{proof}

\begin{lemma}
  For every data-less path $\dlpath$, and Core GQL query $\query$,
    there is a PA formula $\phi_{\query}(x_1,\ldots, x_m)$ such that $\attr{\query} = \{x_1,\ldots, x_m\}$ and 
    \[
\phi_{\query}(i_1,\ldots, i_m) \,\text{ if and only if }\,
  \mu \in\sem{\query}_{\dlpath} 
    \]
    where  $i_j = \pos(\mu(x_j))$ for every $j$.
\end{lemma}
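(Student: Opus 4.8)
The plan is to proceed by structural induction on the query, first reducing the linear-composition language to ordinary relational algebra so as to avoid reasoning about driving tables. Concretely, by Corollary~\ref{cor:pgqvsgql} (which rests on Theorem~\ref{thm:ravslra}) every Core GQL query $\query\in\LRA(\globalschpat)$ is equivalent to a Core PGQ query $\query^{\RA}\in\RA(\globalschpat)$, and equivalent queries have identical outputs, hence identical attribute sets, on every database, in particular on $\dlpath$. It therefore suffices to build $\phi_{\query^{\RA}}$ by induction on the tree-shaped structure of $\query^{\RA}$ and set $\phi_{\query}\df\phi_{\query^{\RA}}$; a direct induction on $\LRA$ would instead force us to thread the driving table through every clause, which this reduction sidesteps. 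The guiding idea is that $\pos$ is a bijection between the nodes and edges of $\dlpath$ and their tagged positions, so a relation over $\attr{\query}$ corresponds exactly to a set of integer tuples, and the goal is to represent each such set by a PA formula whose free variables are the position images of the attributes.

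For the base case $\query^{\RA}=R_{\pat,\return}$ we invoke the preceding lemma, which already produces the required PA formula $\phi_{\pat_{\return}}$ for a pattern with output. The inductive cases translate each RA operation into a Boolean or first-order operation available in Presburger arithmetic: projection $\pi_{\pmb A}$ becomes existential quantification over the position variables of the dropped attributes; renaming $\rho_{A\to A'}$ becomes a renaming of the corresponding variable; a selection $\sigma_\theta$ with $\theta$ built from equalities $A=A'$ becomes conjunction with the matching equalities of position variables (sound because $\pos$ is injective, so $\mu(A)=\mu(A')$ iff their position codes coincide); the natural join $\query_1\join\query_2$ becomes $\phi_{\query_1}\wedge\phi_{\query_2}$ with the shared attributes carried by the same variable (compatibility of tuples corresponds exactly to agreement on the shared position variables); and $\cup,\cap,-$ become $\vee,\wedge$, and $\wedge\neg$. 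Presburger arithmetic is closed under all of these, so $\phi_{\query}$ is again a PA formula with free variables $\attr{\query}$.

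The delicate point, and the one I expect to be the main obstacle, is keeping the position encoding \emph{faithful} under existential quantification. Since PA variables range over all of $\mathbb{N}$, a projection $\exists\,\bar x_{\mathrm{rest}}:\phi_{\query'}$ could invent a spurious witness unless $\phi_{\query'}$ is satisfiable only by tuples that are genuine $\pos$-images of mappings into $\dlpath$, i.e.\ positions that lie within the path and respect the node-versus-edge tagging. I would therefore carry, as part of the induction hypothesis, the invariant that every $\phi_{\query}$ \emph{implies} that its free variables denote valid positions of the correct sort. This holds in the base case, where a match against $\dlpath$ automatically bounds positions by the path length and fixes the tagging (cf.\ the constraints in Lemma~\ref{lem:dlp_to_pa}), and it is preserved by every operation above: the Boolean combinations only restrict membership without enlarging the range of admissible tuples, while projection and join inherit it from their subformulas. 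With this invariant in place the equivalence is verified by evaluating both sides at an arbitrary valid position tuple $(i_1,\dots,i_m)=(\pos(\mu(x_1)),\dots,\pos(\mu(x_m)))$ and unwinding the semantics of the corresponding RA operator, which matches the PA construction case by case.
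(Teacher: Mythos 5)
Your proof takes essentially the same route as the paper's: a structural induction that maps each relational operator to the corresponding Presburger connective or quantifier (projection to existential quantification, join to conjunction, difference to conjunction with negation), with the base case supplied by the preceding lemma on patterns with output. The only differences are matters of care rather than of approach — you make explicit the initial conversion from LCRA to tree-shaped RA via Theorem~\ref{thm:ravslra}, you translate selection conditions as equalities of position variables where the paper simply drops them, and you spell out the invariant that formulas are satisfied only by genuine position tuples, which is indeed needed for the backward direction of projection and for negation but is left implicit in the paper's ``straightforward from the definition.''
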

\begin{proof}
    The proof is by induction on the structure of $\query$.
    The base case of path pattern with output is covered by the previous lemma.
    For the induction step, we distinguish between the form of $\query$.
    \begin{itemize}
        \item
        If $\query \df\pat_{\mathbf A}(\query')$ then
        $\phi_{\query}(\bar z) \df \exists \bar y \, 
        \phi_{\query'}(x_1,\ldots, x_m)$ where $\bar y = \mathbf A \setminus \{x_1,\ldots, x_m\} $ and $\bar z = \mathbf A \cap \{x_1,\ldots, x_m\} $.
        \item 
        If $\query \df \sigma_{\theta}(\query')$ then $\phi_{\query}(\bar x) \df \phi_{\query'}(\bar x)$. Notice that the correctness here follows from the fact the input is a data-less path.
        \item 
        If $\query \df \rho_{A\rightarrow A'}(\query')$ then $\phi_{\query}(\bar{x}, A, \bar{y} ) \df 
        \phi_{\query'}(\bar{x}, A', \bar{y} ) 
        $.
        \item
        If $\query \df \query_1 \times \query_2$ then $\phi_{\query}(\bar{x}) \df \phi_{\query_1}(\bar{z}) \wedge \phi_{\query_2}(\bar{y})$ with $\bar{x}$ the union of $\bar{z}$ and $\bar{y}$.
        \item 
        If $\query \df \query_1 \cup \query_2$ then $\phi_{\query}(\bar{x}) \df \phi_{\query_1}(\bar{x}) \vee \phi_{\query_2}(\bar{x})$.
        \item 
          If $\query \df \query_1 \cap \query_2$ then $\phi_{\query}(\bar{x}) \df \phi_{\query_1}(\bar{x}) \wedge \phi_{\query_2}(\bar{x})$.
                \item 
          If $\query \df \query_1 \setminus \query_2$ then $\phi_{\query}(\bar{x}) \df \phi_{\query_1}(\bar{x}) \wedge \neg \phi_{\query_2}(\bar{x})$.
    \end{itemize}
    Showing that the condition holds is straightforward from the definition.
\end{proof}

\begin{proof}[Proof of Theorem~\ref{thm:2_pow_n}]

    Let us assume by contradiction that there is a Boolean GQL query $\query$ for which $\sem{\query}_{\dlpath} \ne \emptyset $ if and only if $\len(\dlpath)=2^n$, $n\in \mathbb{N}$.
\OMIT{
For dataless path, there is no way to distinguish when a path starts and where it ends. We can show that for dataless path $\dlpath$ ...

So we can assume that the dataless path is a bit extended by adding labels $\mathsf{first}$ and $\mathsf{last}$ to the first and last nodes of the path, respectively.  NO NEEDDDDD

All previous results can be extended, in particular, the previous lemma.
Now, it holds that  $\sem{\query}_\dlpath \ne \emptyset $ iff $\sem{ \pat_{x}(\query \times \query')}_\dlpath \ne \emptyset$ with 
$\query' = (:\mathsf{first}) \rightarrow^{*}(x:\mathsf{last}) $.
But then in the semantics $\sem{ \pat_{x}(\query \times \query')}_\dlpath$ there is a mapping $\mu$ with $\mu(x) = n_k$ where $n_k$ is the last node of $\dlpath$. This is true for every $\dlpath$ of length $2^n,n\in\mathbb{N}$ and since $\{ 2^n \mid n\in \mathbb N \}$ is not semi-linear, we reach the desired contradiction. 
}

Let us define a new query $\query'$ that binds $x$ to the last node of a data-less path. We set $\query' \df \pat_1 \setminus \pat_2$ where $\pat_1 \df  ()\rightarrow^{*}(x)$ and $\pat_2 \df ()\rightarrow^{*}(x)\rightarrow ()$.
It holds that $ \sem{\query'}_{\dlpath} \df \{\mu \mid \dom{\mu}= \{x\}, \mu(x) = n_k \}$ where $n_k$ is the last node in $\dlpath$.
Now let us consider $\query \times \query'$. It can be easily shown that $\sem{\pat_x(\query \times \query')}_{\dlpath} \ne \emptyset$ iff $\sem{\query }_{\dlpath} \ne \emptyset$. 
By previous lemma, $\sem{\pat_x(\query \times \query')}_{\dlpath} $ is expressible by a PA formula.
Nevertheless, by our assumption
$\sem{\pat_x(\query \times \query')}_{\dlpath} $ is exactly the set $\{2^n \mid n\in \mathbb{N} \}$ which leads to the desired contradiction. 
\end{proof}

\subsection{Datalog on Graphs}

\subsubsection*{Datalog}

Let $\mathcal S$ be a schema.
\emph{An atom} over 
$\mathcal{S}$ is an element of the form 
$S(x_1,\ldots, x_k)$ where $S\in\mathcal{S}$, $\arity{S}=k$ and $x_1,\ldots x_k$ is a sequence of variables with no repetitions.  
 \emph{A Datalog program} is a tuple $(\edb,\idb,\dlrules,\dloutrule)$ where $\edb$ and $ \idb$, namely \emph{the IDB and EDB signatures}, respectively, are disjoint schemas, $\dloutrule$ is a designated relation symbol in $\idb$, and $\dlrules$ is a set of \emph{rules} $\rho$ of the form 
\[
\phi\leftarrow \phi_1 , \ldots, \phi_m
\]
where $\phi$ is an atom over $\idb$, and each $\phi_i$ is an atom over $\idb \cup \edb$.
We restrict further the rules by allowing only cases in which each variable in the  \emph{head} $\phi$ of $\rho$ occurs also in its \emph{body}, namely $\phi_1, \ldots, \phi_m$.

\emph{The semantics $\sem{P}_{\db}$ of a Datalog program} $P \df  (\edb,\idb,\dlrules,\dloutrule)$ over a database $\db$ is defined in the standard way~\cite{} as we recall now:
Given a database 
$\db$ over a schema 
$\localsch$, a partial mapping 
$\mu:\Vars \rightarrow \Univ $ \emph{satisfies an atom $\phi(x_1,\ldots,x_k)$ (w.r.t. $\db$)} 
if $\phi \in \localsch$ and 
$(\mu(x_1),\ldots,\mu(x_k))\in \db({\phi})$.
A database $\db'$ satisfies a rule $\rho$ if 
every partial mapping $\mu$ that satisfies each of $\rho$'s body atoms w.r.t. $\db'$, also satisfies its head atom. 
A database $\db'$ satisfies a set $\Phi$ of rules if it satisfies every $\rho\in \Phi$.
We define the semantics $ \sem{P}_{\db} \df \db'$ such that (1) $\db'$ satisfies $\dlrules$ and (2)  every $\db''$ that satisfies $\dlrules$ has the following property: every relation symbol $R \in \idb \cup \edb $, the set $\db'(R) \subseteq \db''(R)$.
We say that $P$ is Boolean if $\arity{\dloutrule} = 0$, and in this case, we say it outputs $\true$ if $\dloutrule() \in\sem{P}_{\db}$.

\paragraph*{Datalog over property graphs}
We view property graphs $
    G \df \langle N, E, \lbl, \src, \tgt, \prop\rangle$ as relational structures $\db_G$  over the relational schema consisting of relation symbols $N, E, \lbl, \src, \tgt, \prop$ with the straightforward interpretation.

\newcommand{\idbname}[1]{\mathsf{#1}}
\begin{example}
The Datalog program defined by the rules: 
    \begin{align*}
        \idbname{eqLen}(x,y,z,w)  
        &\leftarrow 
        E(x,y),E(z,w)\\
        \idbname{eqLen}(x,y,z,w)  
        &\leftarrow
         \idbname{eqLen}(x,y',z,w'), 
        E(y',y),E(w'w)
    \end{align*} 
    extracts from a bounded data-less path bindings $\mu$ of $x,y,z,w$ such that the number of edges between $\mu(x)$ and $\mu(y)$ equals to that between $\mu(z)$ and $\mu(w)$.
\end{example}

\begin{example}
    The Datalog program given by the rules
\begin{align*}
    \idbname{len}_{2^n}(x,y) &\leftarrow 
   E(x,z),E(z,y)
    \\
    \idbname{len}_{2^n}(x,y) &\leftarrow \idbname{len}_{2^n}(x,z), \idbname{len}_{2^n}(w,y),
    \idbname{eqLen}(x,z,w,y)
    \\
    \dloutrule() &\leftarrow \idbname{len}_{2^n}(x,y), \lbl(x,\mathsf{first}),  \lbl(y,\mathsf{last})
\end{align*}
outputs $\true$ if and only if the length of the input bounded data-less path is $2^n$ for some $n\in \mathbb{N}$.
\end{example}

Using a straightforward induction, we can show that:
\begin{lemma}\label{lem:dlquotient}
For every Datalog program $P$  
and data-less paths $p,p$', if $p\sim p'$ then 
$\sem{P}_{\db_{p'}} = \sem{P}_{\db_p}$.
\end{lemma}
We denote $\db_p$ by the $\db_n$ where $n$ is the length of $p$.
We can view Boolean Datalog Programs as queries on the quotient space $P/\sim$ and 
define 
$$\len(P) \df \{n \mid \sem{P}_{\db_{n}} = \true \}. $$
We can conclude:
\begin{corollary}\label{cor:dl2pown}
Datalog over property graphs is strictly more expressive than GQL.
There is a Datalog over property graphs program such that for every data-less path $\mathbf{p}$ the following holds:
\[
\sem{\query}_\mathbf{p} = \true \text{ if and only if }\, 
\len(\mathbf{p})=2^n,n\in \mathbb{N}
\]
\end{corollary}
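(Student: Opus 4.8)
The plan is to obtain the corollary by assembling two ingredients that are already in place: the explicit Datalog program of the preceding example, and the inexpressibility result of Theorem~\ref{thm:2_pow_n}. First I would argue that the displayed program $P$, built from $\idbname{eqLen}$ and $\idbname{len}_{2^n}$, decides the power-of-two property on data-less paths; this is exactly the second assertion of the corollary. I would then invoke Theorem~\ref{thm:2_pow_n} to conclude that no Core GQL query satisfies $\sem{\query}_\mathbf{p} = \true$ precisely when $\len(\mathbf{p}) = 2^n$, hence that no Core GQL query is equivalent to $P$ as a Boolean query on data-less paths, and Corollary~\ref{cor:pgqvsgql} to transfer this to Core PGQ. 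Together these produce a Datalog query lying outside Core GQL, which is the heart of the separation.

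For the Datalog side I would verify $P$ by induction on derivations. The auxiliary predicate $\idbname{eqLen}$ collects tuples $(a,b,c,d)$ such that the sub-paths $a \rightarrow b$ and $c \rightarrow d$ have equal length, its base rule giving length $1$ and its recursive rule extending both sub-paths by one edge in lockstep. The predicate $\idbname{len}_{2^n}$ starts from segments of length $2 = 2^1$ and doubles: the recursive rule glues two power-of-two segments of equal length (the equality certified by $\idbname{eqLen}$) that meet at a common node, so a certified length $2^k$ yields a certified length $2^{k+1}$. A straightforward induction then shows $\idbname{len}_{2^n}(v_0,v_n)$ is derivable precisely when $n$ is a power of two, and the output rule fires exactly then because it keys on the $\mathsf{first}$ and $\mathsf{last}$ labels. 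By Lemma~\ref{lem:dlquotient} the value $\sem{P}_{\db_p}$ depends only on the length of $p$, so $\len(P) = \{2^n \mid n \in \mathbb{N}\}$, as required.

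Finally, to promote the mere existence of a separating query to the formal statement that Datalog over property graphs is \emph{strictly more expressive} than Core GQL, I would also supply the containment direction, namely that every Core GQL query is expressible in Datalog. By Corollary~\ref{cor:pgqvsgql} it suffices to translate $\RA$ over the pattern-matching relations $R_{\pat,\return}$, where unbounded repetition $\pat^{n..\infty}$ becomes a recursive rule and the operators $\pi, \sigma, \times, \cup$ become positive rules. The main obstacle is the set-difference operator $-$: being non-monotone it cannot be captured by the positive Datalog used here, so the translation requires stratified negation and one must check the class is closed accordingly (or else read the claim with respect to the positive fragment, or extend the Datalog definition to allow negation). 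The genuinely hard, inexpressibility half of the argument is already discharged by Theorem~\ref{thm:2_pow_n}, so only this containment bookkeeping remains.
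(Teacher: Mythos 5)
Your route is the same as the paper's: the paper offers no argument for this corollary beyond juxtaposing the two example programs ($\idbname{eqLen}$ and $\idbname{len}_{2^n}$) with Lemma~\ref{lem:dlquotient} and Theorem~\ref{thm:2_pow_n}, which is exactly the assembly you describe. Two remarks. First, your point about the containment direction is well taken: under the paper's own definition of ``strictly less expressive'', the first sentence of the corollary would also require every Core GQL query to have a Datalog equivalent, and the set-difference operator does block this for the positive Datalog defined in the appendix. The paper never proves (or uses) that direction --- Theorem~\ref{main-gql-thm} needs only the separating query --- so the corollary should be read in that weaker sense, or the Datalog dialect enlarged with stratified negation as you suggest; your extra bookkeeping is a legitimate repair of loose wording rather than a divergence in method.

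Second, a caution about the step you call a ``straightforward induction'': the recursive rule $\idbname{len}_{2^n}(x,y) \leftarrow \idbname{len}_{2^n}(x,z), \idbname{len}_{2^n}(w,y), \idbname{eqLen}(x,z,w,y)$ contains no atom forcing $z=w$ (the paper's syntax forbids repeating a variable inside a single atom, which is presumably why $w$ was introduced in the first place). Your verification asserts that the two certified segments ``meet at a common node'', but nothing in the rule guarantees this, and the program as printed over-derives: on the data-less path $v_0\cdots v_5$ one derives $\idbname{len}_{2^n}(v_0,v_2)$, $\idbname{len}_{2^n}(v_3,v_5)$ and $\idbname{eqLen}(v_0,v_2,v_3,v_5)$, hence $\idbname{len}_{2^n}(v_0,v_5)$ and then $\dloutrule()$, although $5$ is not a power of two. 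This is a defect of the paper's example rather than of your strategy, and it is easily repaired (e.g., with an auxiliary midpoint predicate $\idbname{mid}(x,z,y)$ given by $\idbname{mid}(x,z,y)\leftarrow E(x,z),E(z,y)$ and $\idbname{mid}(x,z,y)\leftarrow \idbname{mid}(x',z,y'),E(x,x'),E(y',y)$, used in place of the $\idbname{eqLen}$ atom), but the induction you propose would have to be carried out on the corrected program.
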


\subsection{Proof of Theorem~\ref{main-gql-thm}}

The proof follows directly from Theorem~\ref{thm:2_pow_n} and Corollary~\ref{cor:dl2pown}.

\OMIT{

\subsection{Proof of Theorem~\ref{thm:semilinear}}


To prove Theorem~\ref{thm:semilinear}, we show that every $\query$ can be described by a Presburger Arithmetic (PA) formula. 
Recall that PA formulas are first-order formulas over the relational schema $\langle < ,+ \rangle $. 
For data-less paths $
\mathbf p \df (n_1) \overset{e_1}{\rightarrow} \cdots  \overset{e_{m-1}} {\rightarrow}(n_m)
$ we define the relational structure $\sigma_{\mathbf p}$ over the schema $\langle < ,+ \rangle $
by setting the domain to be $\{1,\ldots , m\}$ and interpreting $<$ and $+$ naturally.

\begin{theorem}
    For every GQL query $\query$ with $\sch{\query} = \bar{x}$, we can construct a PA formula $\phi_{\query}$ whose free variables are $\bar{x}, s, t$ such that 
\[
\sem{\phi_{\query}}^{\sigma_{\mathbf p}}_{s ,t, x_1,\ldots x_k \mapsto 1, m, i_1,\ldots, i_k } = \true
\text{ if and only if } (p,\mu) \in \sem{\query}_{\mathbf p}
\]
where $\mu(x_j)\df i_j$.
\end{theorem}

start with path patterns. 
Using a simple induction on the structure of $\pat$ we can show that conditions are redundant. Formally, we can show the following:
\begin{proposition}\label{prop:cond}
       Let $\pat$ be a path pattern with variables appearing only on nodes then there is $\pat'$ with variables appearing only on nodes that does not contain conditions and 
       $\sem{\pat}_{p}= \sem{\pat'}_{p} $
       for every data-less path $p$.
\end{proposition}

\alex{Why do repeated patterns start at 1? "0..m" is probably equivalent to "0+1..m" but we should be consistent with our grammar, no? }

For multipath patterns $\multipat$ we show the following.

\begin{theorem}
    Let $\multipat$ be a multipath path pattern with variables appearing only on nodes and $\sch{\pat}=\bar{x}$. 
    Then there exists a PA formula $\phi_{\pat}(\bar{x})$ such that for every  $k\ge 1$ 
    \[(\bar{p},\mu) \in \sem{\multipat}_{P_k} \text{ iff } \sem{\phi_{\multipat}(\bar{x})}^{P_k}_{s,\bar{x},t\mapsto \first{\bar{p}} \mu(\bar{x})\last{\bar{p}} } = \top\]
where if $\bar{p} = (p_1,\ldots, p_m)$ then $\first{\bar{p}}$
is the node $n_i$ with minimal $i$ that appears in $\bar{p}$ and $\last{p}$, with maximal.
\end{theorem}
It suffices to show the inductive definition for the following two cases:
\begin{itemize}
    \item If $\multipat \df \pat_G$ then $\phi_{\multipat}(s,\bar{x},t) = \phi_{\pat}(s,\bar{x},t)$;
    \item If $\multipat \df \multipat_1, \multipat_2$ then $$\phi_{\multipat}(s,\bar{z},t) =
    \bigvee_{i=1,\ldots,k}\left(
    \phi_{\multipat_1}(s,\bar{x},t) \wedge \phi_{\multipat_2}(s,\bar{y},t) \wedge \omega_i(\bar z) \right)$$ where $\bar{z} = \bar{x} \cup \bar{y}$ and $\omega_i(\bar{z}), i=1,\ldots,k $ are all possible formulas for linear orders on $\bar z$.
\end{itemize}

\begin{theorem}
    Let $\query$ be a PGQ query with variables appearing only on nodes, and $\attr{\query}= \bar{x}$. 
    Then there is a PA formula $\phi_{\query}(\bar{x})$
     such that 
    \[\mu \in \sem{\query}_{P_n} \text{ iff } \sem{\phi_{\query}(\bar{x})}^{P_n}_{\bar x\mapsto \mu'(\bar{x})} = \top\]
\end{theorem}
\begin{proof}
    It suffices to show that applying RA operators can be reflected in changes to the formula. 
    \begin{itemize}
        \item 
     $\query \df \pat_{\bar{x}}(\query')$ then 
        $\phi_{\query}(\bar{y}) = \exists \bar{x}: \phi_{\query'}(\bar{x}\bar{y})$.
        \item
        $\query = \sigma_{\theta}(\query')$ then we can use similar proposition as \ref{prop:cond} to show that  $\query = \query'$ so we can dismiss this case.
        \item 
        $\query\df \rho_{\bar{x}\rightarrow\bar{y}}(\query')$ then $\phi_{\query}(\bar{y}, \bar{z})$ is obtained by replacing $\bar{x}$ with $\bar{y}$ in $\phi_{\query'}(\bar{x},\bar{z})$
        \item $\query \df \query_1 \times \query_2$ then $\phi_{\query}(\bar{z}) = \bigvee_{i=1}^m \left(\phi_{\query_1}(\bar{x}) \wedge \phi_{\query_2}(\bar{y})
        \wedge \omega_i(\bar{z})\right)$
        where $ \bar{z} = \bar{x} \cup \bar{y}$ and $\omega_i(\bar{z}), i=1,\ldots,m$ are all possible linear orders on $\bar{z}$
        \item $\query \df \query_1 \cup \query_2$ then $\phi_{\query}(\bar{x}) = \phi_{\query_1}(\bar{x}) \vee \phi_{\query_2}(\bar{x})
       $
         \item $\query \df \query_1 \setminus \query_2$ then $\phi_{\query}(\bar{x}) = \phi_{\query_1}(\bar{x}) \wedge \neg \phi_{\query_2}(\bar{x})
       $
    \end{itemize}
\end{proof}

In the most general case, variables can appear also on the edges. This requires a bit more work but is still possible.
\begin{theorem}
     Let $\query$ be a PGQ query with $\attr{\query}= \bar{x}$. 
    Then there is a PA formula $\phi_{\query}(\bar{x})$
     such that 
    \[\mu \in \sem{\query}_{P_n} \text{ iff } \sem{\phi_{\query}(\bar{x})}^{P_n}_{\bar x\mapsto \mu'(\bar{x})} = \top\]
\end{theorem}
\begin{proof}
 The intuition is that we can rewrite the queries so by adding a node in between each two adjacent nodes. Since we are only interested in Boolean queries it is sufficient.
\end{proof}

\OMIT {
\subsection{Proof of theorem \ref{thm:GQLComplexity}}

\begin{proof}

Wlog. we assume data values appear only on nodes, equality conditions refer only to node variables and conditions only refer to variables which appear earlier in the pattern. We also assume patterns are well formed, namely that node variables and edge variables are disjoint. 

Let $G = \langle N_G, E_G, \lbl_G, \src_G, \tgt_G, \prop_G\rangle$ be a property graph and $\rho$ a path pattern. We build the automaton $\mathcal{A}_{\rho,G}$ from the graph $G$ and the path pattern $\rho$, by induction on the shape of $\rho$. The states of the resulting automaton are identified by pairs $(\bar{v}, \bar{n})$ where $\bar{n}$ is a sequence of node ids from $\Nodeset^{*}$ and $\bar{v}$ is a sequence of variable names from $\Vars^{*}$. For a state $s=(\bar{v},\bar{n})$, we refer to the first element of such pair, i.e. $\bar{v}$ as $vars(s)$, and the second element, i.e. $\bar{n}$, as $nodes(s)$. We denote the element at position $i$ is a sequence $\bar{x}$ as $\bar{x}[i]$, the last element by $last(\bar{x})$ and elements starting at position $i$ (included) by $\bar{x}[i:]$. For a state $q =(\bar{v}, \bar{n}) \in Q'$, we denote by $\mu_q$ the assignment encoded in $q$, that is for $0 \leq i < |\bar{v}|$, $\mu(\bar{v}[i]) = \bar{n}[i]$.  

The transitions of the automaton are identified by pairs $(v, e)$ where $v$ is a variable name from $\Vars$ and $e$ is an edge id from $\Edgeset$. As for states, given a transition $t=(v,e)$, we refer to the first element, i.e $v$, as $var(t)$ and the second element, i.e. $e$, as $edge(t)$.  

A word $w = w_0 w_1 \ldots w_n$ is recognized by an automaton $(\mathcal{A})$ if there exists a sequence of states $r_0 r_1 \ldots r_n$ such that 
\begin{itemize}
    \item $r_0 = q_0$,
    \item $r_n \in F$ and
    \item there exists a transition $(r_i, \lambda, r_{i+1}) \in \delta$ such that one of the three following conditions hold for each $i$ between $0$ and $\lfloor n/2 \rfloor +1$
    \begin{itemize}
        \item $i=0$ and $edge(\lambda) = \epsilon$ and $last(nodes(r_{i+1})) = w_{2i+1}$
        \item $0 < i < \lfloor n/2 \rfloor +1$ and $edge(\lambda) = w_{2i}$ and $last(nodes(r_{i+1})) = w_{2i+1}$ 
        \item $i = \lfloor n/2 \rfloor +1$ and $edge(\lambda) = \epsilon$ and $r_{i+1} \in F$
    \end{itemize}
\end{itemize}

As usual, the set of words recognized by an automaton $(\mathcal{A})$ is called its language and denoted by $L(\mathcal{A})$. 

Given a path $p=path(n_0, e_1, n_1, \ldots, e_k, n_k)$, its \textit{path word} $w_p$ is defined as the sequence of node and edge ids that its visits in order, i.e. $w_p = n_0 e_1 n_1 \ldots e_k n_k$. We say that an automaton $\mathcal{A}$ \textit{recognizes} a path pattern $p$ if for each $p \in \sem{\pat}_G$, the path word $w_p$ is in $L(\mathcal{A})$ and for each word $w_p \in L(\mathcal{A})$, there exists a path $p \in \sem{\pat}_G$ such that $w_p$ is its word path. 

Induction hypothesis: for each pattern $\pat$, the constructed automaton $\mathcal{A}_{\pat,G}$ recognizes $\pat$.

\begin{itemize}
    \item (base case) For $\pat=(x)$, the corresponding automaton $\mathcal{A}_{\pat,G} = (Q, \Sigma, \delta, q_0, F)$ is constructed as follows.
    \begin{itemize}
        \item $Q = \{(n,x) \mid n \in N_G\} \cup \{(\bot, x_I), (\bot, x_F)\}$ where $(\bot, x_I)$ and $(\bot, x_F)$ are special initial and final states
        \item $\Sigma = \emptyset$
        \item $\delta = \{(x_I, \epsilon, n) \mid n \in Q\} \cup \{(n, \epsilon, x_F) \mid n \in Q\}$
        \item $q_0 = (\bot, x_I)$
        \item $F = \{(\bot, x_F)\}$
    \end{itemize}
    
    \item (base case) For $\pat=\overset{x}{\rightarrow}$, we first define $Q_{src}$, the set of states representing the source nodes, as $\{(n,x) \mid n \in N_G \wedge \exists e \in E_G, \src_G(e)=n\}$ and $Q_{tgt}$, the set of states representing the target nodes, as $\{(n,x') \mid n \in N_G \wedge \exists e \in E_G, \tgt_G(e)=n\}$. 
    The automaton $\mathcal{A}_{\pat,G} = (Q, \Sigma, \delta, q_0, F)$ is then constructed as follows.
    \begin{itemize}
        \item $Q = Q_{src} \cup Q_{tgt} \cup \{(\bot, x_I), (\bot, x_F)\}$ where $(\bot, x_I)$ and $(\bot, x_F)$ are special initial and final states
        \item $\delta = \{(s_s, (x, e), s_t) \mid s_s \in Q_{src} \wedge s_t \in Q_{tgt} \wedge e \in E_G \wedge src(e) = node(s_s) \wedge tgt(e) = node(s_t)\}$
        \item $\Sigma = \{(x,e) \mid \exists (s_s, (x,e), s_t) \in \delta\}$
        \item $q_0 = (\bot, x_I)$
        \item $F = \{(\bot, x_F)\}$
    \end{itemize} 
    
    \item (base case) For $\pat=\overset{x}{\leftarrow}$, with $Q_{src}$ and $Q_{tgt}$ defined as above, the automaton $\mathcal{A}_{\pat,G} = (Q, \Sigma, \delta, q_0, F)$ is constructed as follows.
    \begin{itemize}
        \item $Q = Q_{src} \cup Q_{tgt} \cup \{(\bot, x_I), (\bot, x_F)\}$ where $(\bot, x_I)$ and $(\bot, x_F)$ are special initial and final states
        \item $\delta = \{(s_s, (x,e), s_t) \mid s_s \in Q_{tgt} \wedge s_t \in Q_{src} \wedge e \in E_G \wedge tgt(e) = node(s_s) \wedge src(e) = node(s_t)\}$
        \item $\Sigma = \{(x,e) \mid \exists (s_s, (x,e), s_t) \in \delta\}$
        \item $q_0 = (\bot, x_I)$
        \item $F = \{(\bot, x_F)\}$
    \end{itemize} 
    
    \item For $\pat = \pat' \pat''$, by induction hypotheses we know that the automata $\mathcal{A}_{\pat', G} = (Q', \Sigma', \delta', q'_0, F')$ and $\mathcal{A}_{\pat'', G} = (Q'', \Sigma'', \delta'', q''_0, F'')$ recognize $\pat'$ and $\pat''$ respectively. 
    
    We first define $Q'_{\text{ReachFin}}$, the set of states that can reach a state in $F'$ in one step, as $\{q \in Q' \mid \exists (q, e, r) \in \delta', r \in F'\}$, and $Q''_{\text{InitReach}}$, the set of states reachable from $q''_0$ in one step, as $\{q \in Q'' \mid \exists (q''_0, e, q) \in \delta''\}$. Intuitively, these sets contain the states to be merged. 

    To keep track of the already visited path, we annotate each state in the resulting automaton with the full assignment leading there. For this, we build the sets $Q''_{\text{Reach}(i)}$ of states reachable from $q''_0$ in $i$ steps inductively as follows:
    \begin{itemize}
        \item $Q''_{\text{Reach}(0)} = q''_0$
        \item $Q''_{\text{Reach}(1)} = Q''_{\text{InitReach}}$
        \item $Q''_{\text{Reach}(n)} = \{q \in Q'' \mid \exists q' \in Q''_{\text{Reach}(n-1)}, (s, \lambda, s') \in \delta'' \text{such that} s'=q\}$
    \end{itemize}
    We denote by $len(\mathcal{A})$ the $n$ such that $Q''_{\text{Reach}(n)} = Q''_{\text{ReachFin}}$. 

    We can now create the new states and transitions from those of $\mathcal{A}_{\pat'', G}$ with the full path information from $\mathcal{A}_{\pat', G}$. We start with the states to be merged : for each pair of states $s' \in Q'_{\text{ReachFin}}, s'' \in Q''_{\text{InitReach}}$, if $last(nodes(s') = last(nodes(s''))$ and $\mu_{s'} \sim \mu_{s''}$, we add a new state $s_n = (vars(s')\cdot vars(s''), nodes(s') \cdot nodes(s'')[1:])$ and, for each transition $(t, e, s') \in \delta'$, add a new transition $(t, e, s_n)$. We call these new states $Q_{\text{Merge}}$ and new transitions $\delta_{\text{Merge}}$. 

    For all other states and transitions of $\mathcal{A}_{\pat'', G}$, we build the sets $Q_{\text{New}(i)}$ and $\delta_{\text{New}(i)}$ inductively as follows: \footnote{Notice that, for each $i$, this construction only depends on $Q_{\text{New}(i-1)}, Q_{\text{Reach}(i-1)}$ and $Q_{\text{Reach}(i+1)}$, so both sets can be computed simultaneously and the results for the previous iterations can be discarded}. 
    \begin{itemize}
        \item For $i=0$, $Q_{\text{New}(i)} = Q_{\text{Merge}}$ and $\delta_{\text{New}(i)} = \delta_{\text{Merge}}$
        \item For $1 \leq i \leq len(\mathcal{A})-1$, for all transitions $(s, e, s') \in \delta''$ s.t. $s' \in Q_{\text{Reach}(i+1)}$, and all states $t \in Q_{\text{New}(i-1)}$ s.t. $last(nodes(t)) = last(nodes(s))$, add to $Q_{\text{New}(i-1)}$ a new state $s_n = (vars(t) \cdot vars(s'), nodes(t)\cdot nodes(s')[:1])$ and add to $\delta_{\text{New}(i)}$ a new transition $(t, e, s_n)$
    \end{itemize}


    
    The automaton $\mathcal{A}_{\pat,G} = (Q, \Sigma, \delta, q_0, F)$ is then constructed as follows. 
    \begin{itemize}
        \item $Q = (Q' \setminus Q'_{\text{ReachFin}} \setminus F') \cup (Q'' \setminus Q''_{\text{InitReach}} \setminus q''_0) \bigcup_{0 \leq i \leq len(\mathcal{A})-1} Q_{\text{New}(i)}$
        \item $\delta = \{(s,e,t) \mid s,t \in Q \wedge (s,e,t) \in \delta'\} \bigcup_{0 \leq i \leq len(\mathcal{A})-1} \delta_{\text{New}(i)}$
        \item $\Sigma = \Sigma' \cup \Sigma''$
        \item $q_0 = q'_0$
        \item $F = F''$
    \end{itemize}

    To see that $\mathcal{A}_{\pat,G}$ indeed recognizes $\pat$, let $p_1, p_2$ be two paths in $G$ and $\mu_1, \mu_2$ two assignments such that $(p_1p_2,\mu_1 \cup \mu_2) in \sem{\pat_1\pat_2}_G$. By induction hypotheses, we know that $(p_1,\mu_1) \in \mathcal{A}_{\pat',G}$ and $(p_2,\mu_2) \in \mathcal{A}_{\pat'',G}$. As for regular automata concatenation, $\mathcal{A}_{\pat,G}$ is split into a first set of states and transitions from $\mathcal{A}_{\pat',G}$ which will recognize $p_1$, and a second set of states and transitions from $\mathcal{A}_{\pat'',G}$ which will recognize $p_2$. The communication between the two halves is done via the set $Q_{\text{Merge}}$, in which states are merged only if $p_2$ concatenates to $p_1$. 
    The compatibility of mappings is preserved by construction of the sets $Q_{\text{New}(i)}$ in which  a state resulting from two states $q$ and $q'$ is added only if $\mu_q \sim \mu_{q'}$.
    
    \item For $\pat = \pat' + \pat''$, by induction hypotheses we know that the automata $\mathcal{A}_{\pat', G} = (Q', \Sigma', \delta', q'_0, F')$ and $\mathcal{A}_{\pat'', G} = (Q'', \Sigma'', \delta'', q''_0, F'')$ recognize $\pat'$ and $\pat''$ respectively. 
    The automaton $\mathcal{A}_{\pat,G} = (Q, \Sigma, \delta, q_0, F)$ is constructed in the usual way: 
    \begin{itemize}
        \item $Q = Q' \cup Q'' \cup \{(\bot, z)\}$ where $z$ is a fresh variable name
        \item $\delta = \delta' \cup \delta'' \cup \{((\bot, z), \epsilon, q_0'), ((\bot, z), \epsilon, q_0'')\}$
        \item $\Sigma = \Sigma' \cup \Sigma''$ 
        \item $q_0 = (\bot, z)$
        \item $F = F' \cup F''$
    \end{itemize}

    \item For $\pat = \pat'^{n..m}$, by induction hypotheses we know that the automaton $\mathcal{A}_{\pat', G} = (Q', \Sigma', \delta', q'_0, F')$ recognize $\pat'$. 
    If $m \neq \infty$, do the usual copy thing but also merging states like for $\pat'\pat''$.
    If $m = \infty$, do the copy thing $n$ times, then loop to beginning of last copy.

    \item For $\pat = \pat'_{\langle \theta \rangle}$,  by induction hypotheses we know that the automaton $\mathcal{A}_{\pat', G} = (Q', \Sigma', \delta', q'_0, F')$ recognize $\pat'$. 
    Since we assumed that conditions only refer to variables which appear earlier in the pattern and, by construction of the automaton, all pre-final states contain the full information on node assignments, it suffices to check whether the condition $\theta$ holds on the pre-final states and remove those for which it does not. 
    We define the set of states for which $\theta$ holds as $Q_{\top} = \{q \in Q'_{\text{ReachFin}} \mid \mu_q \models \theta\}$.
    
    The automaton $\mathcal{A}_{\pat,G} = (Q, \Sigma, \delta, q_0, F)$ is then constructed as follows. 
    \begin{itemize}
        \item $Q = \{q \in Q' \mid q \in Q'_{\text{ReachFin}} \implies q \in Q_{\top}\}$
        \item $\delta = \{(s,e,t) \mid s,t \in Q \wedge (s,e,t) \in \delta'\}$
        \item $\Sigma = \Sigma'$
        \item $q_0 = q'_0$
        \item $F = F'$
    \end{itemize}
    \end{itemize}
\end{proof}}
}

\end{document}